\newcommand{\multiSM}{\textsc{Stable Multigraph b-Matching}}
\newcommand{\fixt}{\textsc{sf}}
\newcommand{\hr}{\textsc{hr}}
\newcommand{\hrc}{\textsc{hrc}}
\newtheorem{lemma}{Lemma}
\newtheorem{claim}[lemma]{Claim}
\newtheorem{remark}[lemma]{Remark}
\newtheorem{obs}[lemma]{Observation}
\newtheorem{ex}[lemma]{Example}
\newtheorem{definition}[lemma]{Definition}
\newtheorem{theorem}[lemma]{Theorem}
\newtheorem{corollary}[lemma]{Corollary}
\newcommand{\resp}{sub-responsive}
\newcommand{\hrcdual}{\textsc{hrc-dual market}}
\newcommand{\subcomp}{sub-complete}
\newcommand{\sat}{\small (2,2)\normalsize {\sc -e}\small 3\normalsize {\sc -sat}}
\newcommand{\claimproofstart}{\begin{proof}}
\newcommand{\claimproofend}{\renewcommand{\qedsymbol}{\scalebox{1.6}{$\diamond$}}\end{proof}}
\newcommand{\specialclaimproofstart}{\noindent\textit{Proof.\ }}
\newcommand{\specialclaimproofend}{\renewcommand{\qedsymbol}{\scalebox{1.6}{$\diamond$}}\qedhere~~\renewcommand{\qedsymbol}{$\square$}\qedsymbol}
\newcommand{\mycomment}[1]{}
\begin{document}
\title{Couples can be tractable: New algorithms and hardness \\ results for the Hospitals / Residents problem with Couples\thanks{A preliminary version of this paper appeared in the Proceedings of IJCAI 2024 \cite{CMMT24}.}}
{
\author{Gergely Csáji$^{1,2}$, David Manlove$^3$, Iain McBride$^3$ and James Trimble$^3$
\\
\vspace{-2mm}
\\
\small
$^1$ \emph{ELTE Eötvös Loránd University, Budapest, Hungary}
\\
\small
$^2$ \emph{HUN-REN KRTK KTI, Budapest, Hungary}
\\
\small
{\tt csaji.gergely@krtk.hu}.
\\
\vspace{-2mm}
\\
\small
$^2$ \emph{School of Computing Science, University of Glasgow, Glasgow G12 8QQ, UK.}
\\
\small
{\tt david.manlove@glasgow.ac.uk}, {\tt iainmcbride@hotmail.com}, {\tt james.trimble@yahoo.co.uk}.}
}
\date{ }
\maketitle

\begin{abstract}
In this paper, we study the {\sc Hospitals / Residents problem with Couples} (\hrc), where a solution is a stable matching or a report that none exists. We present a novel polynomial-time algorithm that can find a near-feasible stable matching (adjusting the hospitals' capacities by at most 1) in an \hrc\ instance where the couples' preferences are \resp\ (i.e., if one member switches to a better hospital, then the couple also improves) and \subcomp\ (i.e., each pair of hospitals that are individually acceptable to both members are jointly acceptable for the couple) by reducing it to an instance of the {\sc Stable Fixtures} problem.  We also present a polynomial-time algorithm for \hrc\ in a \resp, \subcomp\ instance that is a Dual Market, or where all couples are one of several possible types. We show that our algorithm also implies the polynomial-time solvability of a stable b-matching problem, where the underlying graph is a multigraph with loops.

We complement our algorithms with several hardness results. We show that \hrc\ with \resp\ and \subcomp\ couples is NP-hard, even with other strong restrictions. We also show that \hrc\ with a Dual Market is NP-hard under several simultaneous restrictions.
Finally, we show that the problem of finding a matching with the minimum number of blocking pairs in \hrc\ is not approximable within $m^{1-\varepsilon}$, for any $\varepsilon>0$, where $m$ is the total length of the hospitals' preference lists, unless P=NP, even if each couple applies to only one pair of hospitals.  

Our polynomial-time solvability results greatly expand the class of known tractable instances of \hrc\ and provide a useful tool for designing better and more efficient mechanisms in the future.
\end{abstract}

\section{Introduction}
\label{sec:intro}
\subsection{Background}
The {\sc Stable Marriage problem}, and its many-to-one extension, the {\sc Hospitals / Residents problem} (\hr), are well-studied and central problems in Computational Social Choice, Computer Science, Game Theory and Economics, with numerous applications including in entry-level labour markets, school choice and higher education allocation \cite{Man13}.  In the medical sphere, centralised matching schemes that assign aspiring junior doctors to hospitals operate in many countries.  One of the largest and best known examples is the National Resident Matching Program (NRMP) in the US \cite{ZZZ5}, which had just under 45,000 applicants for just over 41,500 positions in the Main Residency Match in 2024 \cite{ZZZ5a}. There are analogous matching schemes for junior doctor allocation in Canada \cite{ZZZ6}, France \cite{ZZZ6a} and Japan \cite{ZZZ6b}.

The \hr\ problem model represents a bipartite matching market with two-sided preferences, involving the preferences of doctors over acceptable hospitals, and those of hospitals over their applicants.  Each hospital has a capacity, indicating the maximum number of doctors that it can admit. Roth \cite{Rot84} argued that a key property to be satisfied by a matching $M$ in an instance $I$ of \hr\ is \emph{stability}, which ensures that there is no \emph{blocking pair}, comprising a doctor and a hospital, both of whom have an incentive to deviate from their assignments in $M$ and become matched to one another, undermining the integrity of $M$.  It is known that every instance of \hr\ admits a stable matching, which may be found in time linear in the size of $I$ \cite{GS62}.

\hr\ is also known to have very nice mathematical structure. In particular, the famous Rural Hospitals Theorem states that, for an arbitrary \hr\ instance $I$, any doctor that is assigned in one stable matching in $I$ is assigned in all stable matchings in $I$, moreover any hospital that is undersubscribed in some stable matching in $I$ is assigned exactly the same set of doctors in every stable matching in $I$ \cite{Rot84,GS85,Rot86}.

In many of the above applications, there may be couples amongst the applying doctors, who wish to be allocated to hospitals that are geographically close to each other, for example.  Indeed, the NRMP matching algorithm was redesigned in 1983 specifically to allow couples to provide preferences over pairs of hospitals, with each pair representing a simultaneous assignment that is suitable for both members of the couple.  We thus obtain a generalisation of \hr\ called the {\sc Hospitals / Residents problem with Couples} (\hrc).  By modifying the definition of stability, taking into account how a couple could improve relative to a matching, Roth \cite{Rot84} showed that the addition of couples destroys the crucial property that a stable matching must always exist.  In \hrc, the problem therefore is to find a stable matching or report that none exists.  Even when a stable matching does exist, such matchings may have different sizes \cite{AC96}, thus showing that the Rural Hospitals Theorem does not hold in general for \hrc.  Worse still, and again in contrast to the case for \hr, Ronn \cite{Ron90} showed that \hrc\ is NP-hard.

Since then, further NP-hardness results for \hrc\ have been established, even in very restrictive cases \cite{NH88,Ron90,biro2014hospitals}, suggesting that \hrc\ is a very challenging computational problem in theory.  Even so, the algorithm employed by the NRMP has succeeded in producing a stable matching every year (allowing for some small capacity adjustments for hospitals that may sometimes be necessary) since couples were first included in 1983.  The NRMP employs a heuristic that is based on the Roth-Peranson algorithm described in \cite{RP99}.

The success of the NRMP (and that of similar schemes) is demonstrated by its longevity, celebrating its 70th anniversary in 2022, and its high level of participation, growing year on year to the current numbers reported above.  Indeed, the contribution of Roth and Shapley to the theory of stable matching and the practice of centralised matching programmes such as the NRMP led to their award of the Nobel Prize in Economic Sciences in 2012 \cite{ZZZ6c}.

Nevertheless, up until now there have been very few special cases of \hrc\ that have been shown to admit polynomial-time algorithms.  Our work in this paper extends the techniques that can be used in \hrc\ instances by providing novel polynomial-time algorithms for many practically important variants of the problem in a surprisingly wide range of settings, and thus contributes new ways to efficiently handle specific instances in several applications. 

\subsection{Related work}

Roth \cite{Rot84} considered stability in the \hrc\ context although did not define the concept explicitly. Whilst Gusfield and Irving \cite{GI89} defined stability formally in \hrc, their definition neglected to deal with the case that both members of a couple may wish to be assigned to the same hospital simultaneously.  There are discussions of this point in \cite{BK13} and \cite[Section 5.3.2]{Man13}.  McDermid and Manlove \cite{MM10} provided a stability definition for \hrc\ that does deal with this possibility.  Other such definitions have been proposed in the literature \cite{BIS11,KPR13}, and a detailed comparative study of the effects of different stability definitions for \hrc\ was carried out by Delorme et al.\ \cite{DGGKMP21}.

In this paper, we will adopt the stability definition of McDermid and Manlove \cite{MM10}.  With respect to this definition, several algorithmic results for \hrc\ hold.  Firstly, Ronn's NP-hardness result for \hrc\ holds even in the case that each hospital has capacity 1 and there are no single doctors \cite{Ron90} (note that all \hrc\ stability definitions are equivalent in the unit hospital capacity case, or in the case that no couple wish to be assigned to the same hospital).  Ng and Hirschberg \cite{NH88} (reported in \cite{NH91}) showed that NP-hardness also holds additionally even when each couple finds acceptable every distinct pair of hospitals.

Ng and Hirschberg \cite{NH88} also proved NP-hardness for a ``dual market'' restriction of \hrc, which we refer to as \hrcdual, where the two sets $H_1$ and $H_2$, comprising the hospitals appearing in the first (respectively second) positions of the acceptable pairs of the first (respectively second) members of each couple are disjoint. This reflects the ``two body'' problem that often exists for couples, where, for example, one member of a couple may be applying for a job at a university, whilst the other couple member may be applying for a job in industry.  Another motivating application arises in a student allocation scheme, where students apply to both a university degree programme and to an internship at a company. This can also be viewed as a specific \hrcdual\ instance, where $H_1$ comprises the university degree programmes, $H_2$ comprises the companies, and each student corresponds to a couple with one member applying to universities and the other to internships.  Ng and Hirschberg's NP-hardness result for \hrcdual\ holds even if each hospital has capacity 1 and there are no single doctors.

McDermid and Manlove \cite{MM10} showed that \hrc\ is NP-hard even if each single doctor ranks at most three hospitals, each couple ranks at most two pairs of hospitals, each hospital ranks at most four doctors, and each hospital has a capacity of 1 or 2.  Later, Bir\'o et al.\ \cite{biro2014hospitals} strengthened this result by showing that NP-hardness holds for \hrc\ even when there are no single doctors, each couple ranks at most two pairs of hospitals, each hospital ranks at most two doctors, and all hospitals have capacity 1.  On the other hand, if each hospital finds at most one doctor acceptable, and there is no restriction on the lengths of the preference lists of single doctors and couples, Manlove et al.\ \cite{MMT17} showed that this special case of \hrc\ admits a unique stable matching, which may be found in polynomial time.  They also showed that \hrc\ is solvable in polynomial time when each single doctor and hospital has a preference list of length at most two, and each couple finds only one hospital pair acceptable. 
Another simple tractable case of \hrc\ is given by Klaus and Klijn \cite{klaus2005stable} (see also \cite{KKN09}), but also in a very restricted framework.  They required that 
each couple's preference list must be \emph{weakly responsive} (informally, a couple's preferences are weakly responsive if there are underlying preferences over individual hospitals for each member of the couple, such that if one member of the couple goes to a better hospital then the couple also improves), each couple must find acceptable all possible outcomes where at least one member is matched, and each hospital has capacity 1.

Given the prevalence of NP-hardness results for \hrc, and the scarcity of polynomial-time algorithms, heuristics have been applied to the problem (see \cite[Section 5.3,3]{Man13} for a survey) as well as approaches based on parameterised complexity and local search \cite{MS11,BIS11},  integer programming \cite{biro2014hospitals,DGGKMP21}, constraint programming \cite{MMT17} and SAT solving \cite{DPB15}.

Nguyen and Vohra \cite{nguyen2018near} studied so-called \emph{near-feasible} stable matchings in \hrc. They showed that if one can modify the capacities of the hospitals by at most 2, then a stable matching with respect to the new capacities always exists (the new total hospital capacity is at least as large as before and increases by at most 4). They also provided an algorithm to find such a near-feasible solution, however their algorithm is not guaranteed to run in polynomial time, as in the first step it computes a stable fractional matching using Scarf's algorithm \cite{Sca67}, and the computation of stable fractional matchings is known to be PPAD-hard \cite{csaji2022complexity}. Bir\'o et al.\ \cite{biro2016matching} also used Scarf's algorithm to find stable matchings in \hrc\ instances. 

Another direction, to cope with the possible non-existence of a stable matching, is to consider matchings that are ``as stable as possible'', i.e., admit the minimum number of blocking pairs.  We refer to this problem as {\sc min bp hrc}.  Bir\'o et al.\ \cite{biro2014hospitals} showed that this problem is NP-hard and not approximable within $n_D^{1-\varepsilon}$, for any $\varepsilon>0$, unless P=NP, where $n_D$ is the number of doctors in a given instance.  
Manlove et al.\  \cite{MMT17} presented integer and constraint programming formulations for {\sc min bp hrc}.

\subsection{Our contributions}
In this paper, we provide new polynomial-time algorithms for \hrc\ in the case that the couples' preference lists are \emph{\subcomp} and \emph{\resp}.  Informally, a couple's preferences are \subcomp\ if there are underlying preferences for the couple members, such that if both members go to a hospital that is acceptable for them individually, then the pair of hospitals is also acceptable for the couple, and if one member is willing to be unmatched, then any assignment of the other member to an acceptable hospital is acceptable for the couple.  The concept of \resp ness is closely related to, but a lot less restrictive than weak responsiveness as described above, even together with \subcomp ness, and has been extensively studied by economists \cite{RS90}.

Our main result is a novel and surprising polynomial-time algorithm to find a \emph{near-feasible} stable matching in an \hrc\ instance if the couples' preferences are \resp\ and \subcomp.  In this paper, our notion of \emph{near-feasibility} is based on modifying the capacities of each hospital by at most 1.  This strengthens Nguyen and Vohra's result \cite{nguyen2018near}, albeit for a special case of \hrc, in two ways: (i) capacities are varied by at most 1 rather than 2, and (ii) we provide a polynomial-time algorithm to produce the desired outcome.  This result is established via a reduction to the {\sc Stable Fixtures problem} (\fixt), a non-bipartite many-to-many generalisation of \hr\ \cite{IS07}.  Our algorithm has other nice properties: for example, it can indeed be very beneficial for a couple to apply together, as it can happen that by applying alone, only one of them would have been accepted, but by applying together, both of them are. We illustrate this property with an example. 

Next, we provide another polynomial-time algorithm for \hrc\ in the presence of \resp\ and \subcomp\ preferences that can find a stable matching if all couples are one of several possible types.  One of these types corresponds to the very practical and natural restriction of \hrcdual\ in the case of \resp\ and \subcomp\ preference lists, and gives a contrast to the NP-hardness result of Ng and Hirschberg for general \hrcdual\ instances as mentioned earlier. 
Using our approach, we argue that this algorithm can potentially be extended to other types
of couples, depending on the specific application.

We also show that our algorithm implies the polynomial-time solvability of a stable b-matching problem,  where the underlying graph is a multigraph with loops, which we call \multiSM. This is a generalisation of the \textsc{Stable Multiple Activities} problem studied by Cechlárová and Fleiner \cite{cechlarova2005generalization} with loops also allowed, where a loop occupies two positions at its incident node. On the structural side, we prove that a version of the Rural Hospitals Theorem remains true even in our \hrc\ setting with \resp\ and \subcomp\ preferences, and couples belonging to one of several possible types.  These are the first non-trivial classes of \hrc\ instances that we are aware of where these structural properties hold. 

We complement our positive results with several hardness results.
We show that \hrc\ is NP-hard, even with unit hospital capacities, short preference lists and other strong (simultaneous) restrictions, including (i) \resp\ and \subcomp\ couples, and (ii) dual markets and master preference lists \cite{IMS08}.  Hence, even in these settings we may not hope to find an exact stable matching in polynomial time, so our algorithm to find a near-feasible stable solution becomes even more appealing. 

Finally, we show that {\sc min bp hrc} is not approximable within $m^{1-\varepsilon}$, for any $\varepsilon>0$, where $m$ is the total length of the hospitals' preference lists, unless P=NP, even if each couple applies to only one pair of hospitals.  This strengthens a claim made in \cite{MMT17} that {\sc min bp hrc} is not approximable within $n_D^{1-\varepsilon}$, for any $\varepsilon>0$, where $n_D$ is the total number of doctors, unless P=NP, even if each couple applies to only one pair of hospitals.\footnote{
The proof of this result is based on a claim appearing in the supplementary material of \cite{MMT17} (Theorem 8) that \hrc\ is NP-hard even if each couple applies to only one pair of hospitals.  The latter claim is erroneous and indeed \hrc\ is solvable in polynomial time in this case as we show in Theorem \ref{thm:12abc}.}

\begin{table}
\hspace{-0.65cm}
\begin{tabular}{|c||c|c|c|}
\hline {\bf Problem} & {\bf General} & {\bf Sub-resp.\ \&} & {\bf Type-a, -b or -c} \\
               & {\bf couples} & {\bf sub-comp.\ couples} & {\bf couples} \\
 \hline
 \hline
    \hrc\ & NP-c (pref lists $\leq 2$) \cite{biro2014hospitals} & NP-c [Thm.~\ref{thm:respNPc}] & P [Thm.~\ref{thm:12abc}] \\
    \hline
    \begin{tabular}{c}\hrcdual\end{tabular} & \begin{tabular}{c} NP-c (pref lists $\leq 3$ \\ \& master lists) [Thm.\ \ref{33comdmml}]\end{tabular} & \begin{tabular}{c} P [Cor.~\ref{cor:hrcdual}]\end{tabular} &  \begin{tabular}{c} P [Cor.~\ref{cor:hrcdual}]\end{tabular} \\
    \hline
    \hrc-\textsc{near-feasible} & EXPTIME ($\pm 2$) \cite{nguyen2018near}) & P ($\pm 1$) [Thm.~\ref{thm:resp_subcom}] & P ($\pm 1$) [Thm.~\ref{thm:resp_subcom}] \\
    \hline
    \begin{tabular}{c} {\sc min bp hrc} \end{tabular} & \begin{tabular}{c} NP-h, inapprox.\ (couple \\ pref lists $\leq 1$) [Thm \ref{thm:min-bp-approx}] \end{tabular} & NP-h [Thm.~\ref{thm:respNPc}] & \begin{tabular}{c} NP-h, inapprox, \\ type a: [Thm \ref{thm:min-bp-approx}] \end{tabular} \\              
    \hline
\end{tabular}
\caption{A summary of results in this paper. \textsc{hrc-near-feasible} denotes the problem of finding a near-feasible stable matching in a \hrc\ instance, which means that the capacities of the hospitals are allowed to be changed by at most the value indicated.}
\label{tab:results}
\end{table}
The results in this paper are summarised in Table \ref{tab:results}.
The polynomial-time algorithms in this paper substantially push forward our knowledge of tractable special cases of \hrc.
Moreover they give additional evidence as to why successful matching schemes such as the NRMP have continued to find stable matchings even in the presence of couples.
Our algorithm for the case that the couples' preferences are \resp\ and \subcomp, and the couples are one of several possible types, also helps to identify the frontier between polynomial-time solvable and NP-hard cases, as our hardness results show that if we have weaker restrictions on the couples' preferences, then it becomes NP-hard to find a stable matching. 

\subsection{Structure of the paper}
The remainder of this paper is structured as follows.  Section \ref{sec:prelim} provides definitions of key notation, terminology and problem models, and motivation for sub-responsiveness, sub-completeness and
other restrictions on couples.  Our polynomial-time algorithms and structural results are then presented in Section \ref{sec:algs}, whilst our NP-hardness and inapproximability results appear in Section \ref{sec:hardness}.  Finally, Section \ref{sec:conc} gives some conclusions and open problems. 

\section{Preliminaries}
\label{sec:prelim}
\subsection{Notation, terminology and problem models}
An instance $I$ of the classical {\sc Hospitals / Residents problem} ({\sc hr}) \cite{GS62} involves two sets, 
namely a set $D=\{d_1,d_2,\dots,d_{n_D}\}$ of \emph{doctors} and a set $H=\{h_1,h_2,\dots,h_{n_H}\}$ of \emph{hospitals}. Each doctor in $D$ seeks to be assigned to a hospital, whilst there is a vector $\mathbf{q}$ that gives each hospital $h_j\in H$ a \emph{capacity} $q_j\in \mathbb{Z}^+$, 
indicating the maximum number of doctors who could be assigned to $h_j$.  Each doctor $d_i\in D$ has a strict ranking $\succ_{d_i}$ over a subset $A(d_i)$ of $H$, his/her \emph{acceptable} hospitals.  Similarly each hospital $h_j\in H$ has a strict ranking $\succ_{h_j}$ over those doctors $d_i$ such that $h_j\in A(d_i)$.

Let $M$ be a set of doctor--hospital pairs, i.e., a subset of $D\times H$.  For each $d_i\in D$ we denote $\{h_j\in H : (d_i,h_j)\in M\}$ by $M(d_i)$, whilst for each $h_j\in H$ we denote $\{d_i\in D : (d_i,h_j)\in M\}$ by $M(h_j)$.  We say that $d_i\in D$ is \emph{unmatched} in $M$ if $M(d_i)=\emptyset$.  Also we say that $h_j\in H$ is \emph{undersubscribed}, \emph{full} or \emph{oversubscribed} in $M$ according as $|M(h_j)|<q_j$, $|M(h_j)|=q_j$ and $|M(h_j)|>q_j$, respectively.
The set $M$ is a \emph{matching} if (i) $(d_i,h_j)\in M$ only if $h_j\in A(d_i)$; (ii) $|M(d_i)|\leq 1$ for each $d_i\in D$; and (iii) $|M(h_j)|\leq q_j$ for each $h_j\in H$.  In other words, a doctor is only assigned to an acceptable hospital, each doctor is assigned to at most one hospital, and no hospital exceeds its capacity.  Given a matching $M$ where $(d_i,h_j)\in M$, with a slight abuse of notation we let $M(d_i)$ denote $h_j$.

A pair $(d_i,h_j)\in D\times H$ \emph{blocks} a matching $M$, or is a \emph{blocking pair} for $M$, if
\begin{enumerate}
\item $h_j\in A(d_i)$.
\item $d_i$ is unmatched, or $h_j\succ_{d_i} M(d_i)$.
\item $h_j$ is undersubscribed, or $d_i\succ_{h_j} d_k$ for some $d_k\in M(h_j)$.
\end{enumerate}
A matching is \emph{stable} if it admits no blocking pair.

An instance $I$ of the {\sc Hospitals / Residents problem with Couples} (\hrc) involves a set $D=\{d_1,d_2,\dots,d_{n_D}\}$ of \emph{single doctors}, a set $C=\{C_i=(c_{2i-1},c_{2i}) : 1\leq i\leq n_C\}$ of \emph{couples}, and a set $H=\{h_1,h_2,\dots,h_{n_H}\}$ of \emph{hospitals}.  Let $C'$ denote the members of the couples, i.e., $C'=\{c_{2i-1},c_{2i} : 1\leq i\leq n_C\}$.  As in the \hr\ case, each hospital $h_j\in H$ has a \emph{capacity} $q_j\in \mathbb{Z}^+$.  We use the notation $r_i\in D\cup C'$ to denote either a single doctor $r_i\in D$ or a member of a couple $r_i\in C'$, and collectively the members of $D\cup C'$ are referred to as \emph{doctors}.

We define a capacity vector $\textbf{q}'$ to be \emph{near-feasible} in an instance $I$ of \hrc\ if it holds that $|q'_j-q_j|\le 1$ for all $h_j\in H$, that is, the capacity of each hospital is changed by at most 1 in $\textbf{q}'$.   

Each single doctor $d_i\in D$ has a strict ranking $\succ_{d_i}$ over a subset $A(d_i)\subseteq H$ of \emph{acceptable} hospitals, as in the \hr\ case.  Each couple $C_i=(c_{2i-1},c_{2i})$ ($1\leq i\leq n_C$) has a set $A(C_i)\subseteq ((H\cup \{\emptyset\})\times (H\cup \{\emptyset\}))\setminus \{(\emptyset,\emptyset)\}$ of \emph{acceptable} pairs of hospitals, and has a strict ranking $\succ_{C_i}$ over $A(C_i)$, where $\emptyset$ represents the case that a doctor is unmatched.  Each entry in this list is an ordered pair of the form $(h_j,\emptyset)$ $(\emptyset,h_k)$, $(h_j,h_k)$, representing either one member of the couple being matched to a hospital and the other being unmatched, or both members of the couple being matched to a hospital, respectively, where $h_j$ and $h_k$ need not be distinct hospitals.  We will assume that $\emptyset$ is a hospital with capacity $n_D+2n_C+1$ (i.e., one more than the number of doctors).  We let
\begin{eqnarray*}
A(c_{2i-1}) & = & \{h_j\in H : (h_j,\emptyset)\in A(C_i)\vee (h_j,h_k)\in A(C_i)\mbox{ for some }h_k\in H\}\backslash \{\emptyset\}\\
A(c_{2i}) & = & \{h_k\in H: (\emptyset,h_k)\in A(C_i)\vee (h_j,h_k)\in A(C_i)\mbox{ for some }h_j\in H\}\setminus \{\emptyset\}
\end{eqnarray*}
 denote the acceptable hospitals in $H$ for $c_{2i-1}$ and $c_{2i}$, respectively. 
 Finally, each hospital $h_j\in H$ has a strict ranking $\succ_{h_j}$ over those doctors $r_i\in D\cup C'$ such that $h_j\in A(r_i)$. 
 For any single doctor, couple or hospital $a$, we omit the subscript on $\succ_a$ if the identity of $a$ is clear from the context.

An \hrc\ instance can therefore be represented as a tuple $I=(D,C,H,\mathbf{q},\succ)$, where $D$ is the set of single doctors, $C$ is the set of couples, $H$ is the set of hospitals, $\mathbf{q}$ is the capacity vector for the hospitals, and $\succ$ represents the preference lists of single doctors, couples and hospitals.

In the \hrc\ setting, let $M$ be a set of doctor--hospital pairs, i.e., a subset of $(D\cup C')\times H$.  Given $r_i\in D\cup C'$, we denote $\{h_j\in H : (r_i,h_j)\in M\}$ by $M(r_i)$, and given $h_j\in H$ we denote $\{r_i\in D\cup C' : (r_i,h_j)\in M\}$ by $M(h_j)$.  We say that $r_i\in D\cup C'$ is \emph{unmatched} in $M$ if $M(r_i)=\emptyset$.  Similarly $C_i\in C$ is said to be \emph{unmatched} in $M$ if both $c_{2i-1}$ and $c_{2i}$ are unmatched in $M$.  The definitions of \emph{undersubscribed}, \emph{full} and \emph{oversubscribed} for a hospital are the same as in the \hr\ case.  Given $C_i=(c_{2i-1},c_{2i})\in C$, we define 
\[
\begin{array}{rclcl}
M(C_i) & = & 
\{(h_j,h_k) & : & (c_{2i-1},h_j)\in M\wedge (c_{2i},h_k)\in M\}~\cup\\
& & \{((h_j,\emptyset) & : & (c_{2i-1},h_j)\in M\wedge \mbox{$c_{2i}$ is unmatched in $M$}\}~\cup\\
& & \{(\emptyset,h_k) &  : & (c_{2i},h_k)\in M\wedge \mbox{$c_{2i-1}$ is unmatched in $M$}~\cup\\
& & \{(\emptyset,\emptyset) &  : & \mbox{$c_{2i-1}$ and $c_{2i}$ are unmatched in $M$}\}.
\end{array}
\]
%

A \emph{matching} is a subset $M$ of $(D\cup C')\times H$ such that (i) $M(d_i)\subseteq A(d_i)$ for each $d_i\in D$; (ii)  $M(C_i)\subseteq A(C_i)\cup \{(\emptyset,\emptyset)\}$ for each $C_i\in C$; (iii) $|M(r_i)|\leq 1$ for each $r_i\in D\cup C'$; and (iv) $|M(h_j)|\leq q_j$ for each $h_j\in H$.  Given a matching $M$ where $(r_i,h_j)\in M$ for some $r_i\in D\cup C'$, as before with a slight abuse of notation we let $M(r_i)$ denote $h_j$, and if $M(C_i)=\{p\}$ for some $C_i\in C$ and some $p\in A(C_i)\cup \{(\emptyset,\emptyset)\}$ then we also let $M(C_i)$ denote $p$.

We now define the concept of a \emph{blocking pair} in \hrc, adopting the definition from \cite{MM10}. While the definition seems complicated, the underlying idea of it is just to consider the possible cases for single doctors, couples and hospitals to form a mutually beneficial deviation from the matching.
\begin{definition}[\cite{MM10}]
Let $I$ be an instance of \hrc\ and let $M$ be a matching in $I$.  A single doctor $d_i\in D$ forms a \emph{blocking pair} of $M$ with a hospital $h_j\in H$ if:
\begin{enumerate}
\item $(d_i,h_j)$ satisfies the definition to be a blocking pair of $M$ as in the \hr\ case.
\end{enumerate}
A couple $C_i=(c_{2i-1},c_{2i})\in C$ forms a \emph{blocking pair} of $M$ with a hospital pair $(h_j,h_k)\in \protect{(H\cup \{\emptyset\})\times (H\cup \{\emptyset\})}$ if $(h_j,h_k)\in A(C_i)$, $C_i$ is unmatched or $(h_j,h_k)\succ_{C_i} M(C_i)$, and \emph{either}:
\begin{enumerate}
\setcounter{enumi}{1}
\item $h_j=M(c_{2i-1})$ or $h_k=M(c_{2i})$, and \emph{either}
\begin{itemize}
\item[(a)] $h_j=M(c_{2i-1})$, and either $h_k$ is undersubscribed or $c_{2i}\succ_{h_k} r_s$ for some $r_s\in M(h_k)\backslash \{c_{2i-1}\}$, \emph{or}
\item[(b)] $h_k=M(c_{2i})$, and either $h_j$ is undersubscribed or $c_{2i-1}\succ_{h_j} r_s$ for some $r_s\in M(h_j)\backslash \{c_{2i}\}$.
\end{itemize}
\item $h_j\neq M(c_{2i-1})$, $h_k\neq M(c_{2i})$, and \emph{either}
\begin{enumerate}
\item[(a)] $h_j\neq h_k$, and (i) $h_j$ undersubscribed or $c_{2i-1}\succ_{h_j} r_s$ for some $r_s\in M(h_j)$, and (ii) $h_k$ undersubscribed or $c_{2i}\succ_{h_k} r_t$ for some $r_t\in M(h_k)$; \emph{or}
\item[(b)] $h_j=h_k$, and $h_j$ has at least two free posts, i.e., $q_j-|M(h_j)|\geq 2$; \emph{or}
\item[(c)] $h_j=h_k$, and $h_j$ has one free post, i.e., $q_j-|M(h_j)|=1$, and $c_s\succ_{h_j} r_t$ for some $r_t\in M(h_j)$, where $s\in \{2i-1,2i\}$, \emph{or}
\item[(d)] $h_j=h_k$, $h_j$ is full, $c_{2i-1}\succ_{h_j} r_s$ for some $r_s\in M(h_j)$, and $c_{2i}\succ_{h_j} r_t$ for some $r_t\in M(h_j)\setminus \{r_s\}$.
\end{enumerate}
\end{enumerate}
$M$ is \emph{stable} if it admits no blocking pair.
\end{definition}

In \hrc, the problem is to find a stable matching or report that none exists. An important special case of \hrc\ arises when the set $H$ of hospitals can be partitioned into two sets $H_1,H_2$, such that for each couple $C_i\in C$, each acceptable pair $(h_j,h_k)\in A(C_i)$ satisfies $h_j\in H_1$ and $h_k\in H_2$, and for each single doctor $d_i$, either $A(d_i)\subseteq H_1$ or $A(d_i)\subseteq H_2$.  We refer to this restriction of \hrc\ as \hrcdual\ (corresponding to a \emph{Dual Market}).

%
%
Given that an instance $I$ of \hrc\ may not admit a stable matching, we define {\sc min bp hrc} to be the problem of finding a matching in $I$ with the minimum number of blocking pairs.  Formally, let $\mathcal M$ denote the set of matchings in $I$. Given a matching $M\in \mathcal M$, let $bp(I,M)$ denote the set of blocking pairs with respect to $M$ in $I$.  Let $bp(I)=\min \{|bp(I,M)| : M\in \mathcal M\}$.  Define {\sc min bp hrc} to be the problem of finding, given an \hrc\ instance $I$, a matching $M\in \mathcal M$ such that 
$|bp(I,M)|=bp(I)$. 

Next we define some properties regarding the preference lists of couples.

\begin{definition}
    Let $C_i=(c_{2i-1},c_{2i})$ be a couple with joint preference list $\succ_{C_i}$.
    \begin{enumerate}
      \item We say that $\succ_{C_i}$ is \emph{\resp} if there are preference orders $\succ_{c_{2i-1}},\succ_{c_{2i}}$ over $A(c_{2i-1})\cup \{\emptyset\}$ and $A(c_{2i})\cup \{\emptyset\}$ for the individual members of the couple such that, for any three hospitals $h_j$, $h_k$, $h_q$ in $H\cup \{\emptyset\}$,  
       \begin{itemize}
        \item[(i)] if $((h_j,h_k)\in A(C_i)$ and $(h_j,h_q)\in A(C_i))$ then $((h_j,h_k)\succ_{C_i}(h_j,h_q)$ if and only if $h_k\succ_{c_{2i}}h_q)$, and
        \item[(ii)] if $((h_j,h_k)\in A(C_i)$ and $(h_q,h_k)\in A(C_i))$ then $((h_j,h_k)\succ_{C_i}(h_q,h_k)$ if and only if $h_j\succ_{c_{2i-1}}h_q)$.
        \end{itemize}
         That is, there are underlying preferences for each member of the couple, such that if we send one of them to a better hospital, then the couple also improves, assuming that the new pair is acceptable.
        \item We say that $\succ_{C_i}$ is \emph{\subcomp} if 
         $A(C_i)=(A'(c_{2i-1})\times A'(c_{2i}))\setminus \{(\emptyset,\emptyset)\}$, where, for $j\in \{0,1\}$, either $A'(c_{2i-j})=A(c_{2i-j})$ or $A'(c_{2i-j})=A(c_{2i-j})\cup \{\emptyset\}$.
         That is, if both members go to an acceptable hospital for them, then the pair of hospitals is also acceptable for the couple, and if one member is willing to be unmatched, then any assignment of the other member to an acceptable hospital is acceptable for the couple.
    \end{enumerate}
    We say that $C_i$ is \resp\ or \subcomp\ if $\succ_{C_i}$ is \resp\ or \subcomp, respectively.
\label{def:coupletypes1}
\end{definition}

As we show in Theorem \ref{thm:respNPc}, \hrc\ remains NP-hard even with \resp\ and \subcomp\ preference lists.  

However, in Theorem \ref{thm:12abc}, we provide a novel reduction to \fixt\ that allows us to deal with many different kinds of couples via a polynomial-time algorithm. To this end, we describe some special types of couples. 

\begin{definition}
    We say a \resp\ and \subcomp\ couple $C_i=(c_{2i-1},c_{2i})$ is 
    \begin{itemize}
        \item \emph{separable}, if $A(C_i) = ((A(c_{2i-1}) \cup \{ \emptyset\} )\times (A(c_{2i}) \cup \{ \emptyset\})) \setminus \{ (\emptyset ,\emptyset )\}$
        (that is, it is allowed to match only one member of the couple in a feasible matching),
        \item \emph{half-separable} if $A(C_i) = A(c_{2i-1})\times (A(c_{2i}) \cup \{ \emptyset\}) $ or $A(C_i)=(A(c_{2i-1}) \cup \{ \emptyset\} )\times A(c_{2i}) $ 
        (that is, it is allowed to leave at most one member of the couple unmatched, but this can only ever be one designated member),
     \item \emph{connected}, if $A(C_i) = A(c_{2i-1})\times A(c_{2i})$ (that is, both couple members must be matched),
        \item of \emph{type-a}, if $A(c_{2i-1})\cap A(c_{2i})=\emptyset$ (that is the two members apply to distinct hospitals),
        \item of \emph{type-b}, if it is connected, $A(c_{2i-1})\cap A(c_{2i})=\{ h\}$ for some $h\in H$, and if $c_{2i-1}\succ_h c_{2i}$, then $A(c_{2i})=\{ h\}$, otherwise $A(c_{2i-1})=\{h\}$ (that is, the worst member of the couple for $h$ only applies to $h$)
        \item of \emph{type-c}, if it is connected, $A(c_{2i-1})\cap A(c_{2i})=\{ h\}$ for some $h\in H$, and for any $h'\in A(c_{2i-1})\setminus \{h\}$, $h'\succ_{c_{2i-1}} h$ and for any $h''\in A(c_{2i})\setminus \{h\}$, $h''\succ_{c_{2i}}h$, (that is, $h$ is the worst hospital for both members of the couple). 
    \end{itemize}
For convenience, to avoid collision, we consider a couple $C_i$ that is both of type-b and -c, simply as a type-b couple.
\label{def:coupletypes2}
\end{definition}

We will show that \hrc\ is solvable in polynomial time for couples of type-a, -b or -c.  Moreover for Dual Markets,  every sub-responsive and sub-complete couple is necessarily of type-a. However, every couple being of type-a is more general than \hrcdual, as it only means that for each couple $C_i$, the acceptable hospitals for each member of $C_i$ are two disjoint sets. 
To illustrate this with a simple example, imagine that there are three hospitals $h_1,h_2,h_3$ and three couples $C_1,C_2,C_3$. $C_1$ applies only to $(h_1,h_2)$, $C_2$ applies only to $(h_2,h_3)$, and $C_3$ applies only to $(h_3,h_1)$. It is easy to see that every couple is of type-a. However, the market is not dual, as for any possible partition of $\{ h_1,h_2,h_3\}$ into two parts, there will be an application where both hospitals are from the same part.


Our results will also imply a nice corollary about a stable matching problem in multigraphs, which we now define.  The setting is a multigraph $G=(V,E)$ where loops are also allowed.  Each node $v\in V$ has a strict preference list $\succ_v$ over the edges incident to $v$.  We are also given a capacity $b(v)$ for each $v\in V$.  A \emph{b-matching} is a set $M\subseteq E$ such that $|M(v)|\leq b(v)$ for each $v\in V$, where $M(v)$ is the set of edges in $M$ that are incident to $v$.  In this multigraph b-matching model, stability is defined as follows. 
\begin{definition}
    Let $M$ be a $b$-matching in a multigraph $G$ (possibly with loops). We say that an edge $e=(u,v)\notin M$ (possibly $u=v$) \emph{blocks} $M$ if one of the following conditions holds:
    \begin{itemize}
        \item $u\ne v$, and for each $x\in \{u,v\}$, either $|M(x)|<b(x)$ or there exists an edge $f\in M(x)$ such that $e\succ_x f$;
        \item $u=v$, and $b(u)-|M(u)|\geq 2$;
        \item $u=v$, and $b(u)-|M(u)|\geq 1$, and there exists an edge $f\in M(u)$ such that $e\succ_u f$;
        \item $u=v$, and there is a distinct edge $f=(u,u)\in M$ such that $e\succ_u f$;
        \item $u=v$, and there are two edges $f\in M(u)$ and $g\in M(u)$ such that $e\succ_u f$ and $e\succ_u g$.
    \end{itemize}
We say that $M$ is \emph{stable} if there is no edge blocking $M$.
\end{definition}
We now define a decision problem based on the existence of a stable b-matching in this multigraph setting.
\bigskip

\begin{tabular}{ll}
\emph{Name:} & \multiSM \\
\emph{Instance:} & 
A multigraph $G=(V,E)$ (loops are also allowed), strict preferences $\succ_v$ for \\
\textcolor{white}{\emph{Instance:}} & 
each $v\in V$ over the edges incident to $v$, and capacities $b(v)$ for each $v\in V$.\\
\emph{Question:} & Is there a stable b-matching $M$ in $G$?
\end{tabular}
\bigskip

The {\sc Stable Fixtures} problem (\fixt) \cite{IS07}, also called the \textsc{Stable b-Matching} problem, is defined similarly, but the underlying graph $G$ is a simple graph, so it is not allowed to have loops or parallel edges, and hence the notion of stability also simplifies. The generalisation of \fixt\, where parallel edges are allowed, but loops are not, has been studied by Cechlárová and Fleiner \cite{cechlarova2005generalization}, who called this problem \textsc{Stable Multiple Activities}. We note that our new problem variant does not reduce straightforwardly to \textsc{Stable Multiple Activities}. The main difficulty lies in the fact that for a loop edge $e=(u,u)$ to block, it needs two worse edges or free positions at node $u$, so if $u$ is filled with better edges than $e$ except for one worse edge, still it is the case that $e$ is not allowed to block.

A special case of \fixt\ arises when $b(v)=1$ for all $v\in V$ -- this is the classical \textsc{Stable Roommates problem} ({\sc sr}) \cite{GS62,Irv85}.  Tan \cite{tan1991necessary} gave a polynomial-time algorithm to find a stable half-integral matching (an assignment in $\{0,\frac12,1\}$ to each edge, such that each edge with weight less than 1 in the matching is \emph{dominated}, meaning that at least one endpoint is saturated with at least as good -- possibly fractional -- edges) in any {\sc sr} instance.  This algorithm was extended to the \fixt\ setting by Fleiner \cite{fleiner2010stable}.
\subsection{Motivation for couple types}
We finally provide some motivation for the couple types given in Definitions \ref{def:coupletypes1} and \ref{def:coupletypes2}.  
Firstly, in the Dual Market example given in Section \ref{sec:intro}, where each ``couple'' is a student who applies to both a university and a company, the underlying mathematical problem is \hrcdual.
It is natural to assume that preferences are \resp\ and \subcomp\ in this case, and all couples are thus of type-a as observed above. 

Types-b and -c may seem a little more artificial. On the one hand, they will demonstrate the flexibility of our technique such that it can handle couples even if they apply to the same hospital. 
On the other hand, there may be applications related to \hrc\ where these types of (\resp\ and \subcomp) structures arise, especially in scenarios where it is better for a couple (or an agent) to obtain a joint allocation where the two assigned positions are more diverse. Imagine for example that two members of a couple apply to part-time positions, which also have associated shifts (indicating which days they have to attend). Then, a couple may prefer their shifts to be on different days, to minimise the amount of daycare needed for their children. For example, if one of them applies to shifts on days 1, 2 and 3, and the other on days 3, 4 and 5, then it is natural to assume that they express a \resp\ and \subcomp\ preference list, such that (3,3) is ranked last, hence they give a type-c couple. 
Our preliminary results show that we can handle even more types of couples using our techniques, including those that may arise in other applications, but in order not to make the proofs extremely technical, we only consider the couple types defined in Definitions \ref{def:coupletypes1} and \ref{def:coupletypes2} in this paper. 

We also note the important special case of \hrc\ in which each couple applies only to one pair of hospitals.  Then as per the proof of Corollary     \ref{cor:hrclength1} below, each couple is type-a or -b.  On the other hand, \hrc\ is NP-hard even if each couple finds acceptable only two hospital pairs \cite{biro2014hospitals}.  This helps to push forward our understanding of the frontier between polynomial-time solvability and NP-hardness for restrictions of \hrc.  Indeed, couple structures that do not make the problem NP-hard may need to be very specific, as Theorem \ref{thm:respNPc} below shows that \hrc\ remains NP-hard even with \resp\ and \subcomp\ couples, such that each member of a couple applies to only two hospitals.

As a more theoretical application, couples \mbox{types-a and -b} also arise in \multiSM, as established in Corollary \ref{cor:multiSM} below.
Finally, we note that even in cases where there may be different types of couples (beyond those defined in Definitions \ref{def:coupletypes1} and \ref{def:coupletypes2}), it may speed up heuristics and IP techniques to first compute (in polynomial time) a stable solution restricted to the single doctors and the couples of types-a, -b and -c.

\section{Polynomial-time algorithms}\label{sec:algs}

In this section we describe our main positive results. We start by describing our most practical polynomial-time algorithm that is able to find a near-feasible stable matching in any \hrc\ instance $I$, where the participating couples have \resp\ and \subcomp\ preference lists.

\begin{theorem}
\label{thm:resp_subcom}
Given an \hrc\ instance where each couple's preferences are \resp\ and \subcomp, there is always a near-feasible capacity vector $\textbf{q}'$ where \mbox{$|\textbf{q}'-\textbf{q}|_{\infty}\le 1$}, such that there is a stable matching with respect to the modified capacities. Furthermore, these modified capacities and the stable matching can be found in $\mathcal{O}(m)$ time, where $m$ denotes the total length of the preference lists of the hospitals. 
\end{theorem}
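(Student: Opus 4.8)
The plan is to reduce the \hrc\ instance $I$ (with \resp, \subcomp\ couples) to an instance $J$ of the \textsc{Stable Fixtures} problem (\fixt), invoke the known fact that \fixt\ always admits a stable matching computable in linear time, and then translate the \fixt\ solution back into a stable matching of $I$ under a near-feasible capacity vector $\mathbf{q}'$. First I would build the graph $G$ underlying $J$. For each single doctor $d_i$ I create a vertex with capacity $1$; for each hospital $h_j$ I create a vertex with capacity $q_j$; and crucially, for each couple $C_i=(c_{2i-1},c_{2i})$ I create a single ``couple vertex'' $v_{C_i}$ with capacity $2$ (one ``slot'' per member). Single doctors contribute edges to hospitals in the obvious way, inheriting $\succ_{d_i}$ and the hospitals' rankings. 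For a couple $C_i$ I would add, for each acceptable pair $(h_j,h_k)\in A(C_i)$ with $h_j,h_k\in H$ distinct, \emph{two} edges — one from $v_{C_i}$ to $h_j$ and one from $v_{C_i}$ to $h_k$ — that must be taken together; for a pair of the form $(h_j,\emptyset)$ a single edge to $h_j$; and for a pair $(h_j,h_j)$ with both members at the same hospital, a \emph{loop} would naturally be needed at $h_j$ occupying two of its slots, which is exactly why the bare \fixt\ model (simple graphs) is not quite enough and a gadget is required. The ranking of $v_{C_i}$ over its incident edges is derived from $\succ_{C_i}$; the fact that the couple's preferences are \resp\ is what makes this derivation coherent (a member moving to a better hospital improves the couple), and \subcomp ness guarantees that the set of edge-combinations at $v_{C_i}$ that we need to represent is exactly a product set, so no ``forbidden combination'' constraints are lost.

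The main technical work is in (a) handling same-hospital pairs $(h,h)$ and (b) showing that an arbitrary stable matching $M'$ of $J$ pulls back to a matching $M$ of $I$ that is stable with respect to capacities that differ from $\mathbf{q}$ by at most $1$. For (a), the trick I expect the authors to use is to split the coupled pair of edges into the couple vertex rather than into the hospital: a pair $(h,h)$ becomes two parallel edges $v_{C_i}h$, which the couple must take as a block — this is exactly the \multiSM / \textsc{Stable Multiple Activities} flavour, and the subtle blocking condition for loops flagged in the preliminaries (a loop needs \emph{two} worse edges or free slots to block) mirrors blocking-pair case 3(b)–(d) of the \hrc\ definition. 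For (b), the near-feasibility slack of $\pm 1$ enters because a coupled pair of edges $\{v_{C_i}h_j, v_{C_i}h_k\}$ in $M'$ may be ``half-dominated'': one of $h_j,h_k$ may be saturated with strictly better doctors while the other is not, so the standard \fixt\ stability condition does not prevent the couple from wanting to move only if we permit the hospital to absorb one extra doctor (or symmetrically, $M'$ may leave a hospital with one unused slot that an \hrc\ blocking pair would exploit, fixed by decrementing $q_j$ by $1$). I would therefore define $\mathbf{q}'$ hospital-by-hospital by scanning, for each $h_j$, the set $M'(h_j)$ together with the couples that would block $h_j$ in $I$, and argue a clean dichotomy: either $M'$ restricted to $h_j$ is already stable, or a single unit capacity adjustment resolves every \hrc-blocking pair at $h_j$ without creating a new one elsewhere (adjustments at different hospitals being independent). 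Checking that \emph{no} \hrc-blocking pair survives — running through all of cases 1, 2(a), 2(b) and 3(a)–(d) of the blocking-pair definition and matching each against the \fixt\ (or \multiSM) stability of $M'$ — is the bulk of the proof and the place where \resp ness and \subcomp ness are used repeatedly: \resp ness lets us deduce that if the couple could improve in $I$ then the corresponding coupled edge-block would block in $J$, and \subcomp ness ensures the improving pair is actually acceptable, i.e.\ present as an edge-block in $G$.

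For the running-time claim, the reduction is clearly linear in $m$ plus the total couple-list length: each acceptable pair of a couple yields $O(1)$ edges and each edge appears $O(1)$ times in the derived preference lists, and deriving the couple-vertex ranking from $\succ_{C_i}$ is a single pass. Fleiner's extension \cite{fleiner2010stable} of the Tan/Irving machinery (or the direct \fixt\ algorithm of \cite{IS07}) finds a stable \fixt\ matching in time linear in the size of $J$, and the final pass that fixes $\mathbf{q}'$ and reads off $M$ is again linear, giving the overall $\mathcal{O}(m)$ bound. The hard part will be the case analysis in (b): making sure that the local $\pm 1$ fix at each hospital is simultaneously correct for \emph{every} couple and single doctor touching that hospital, and that the \fixt-stability of the pulled-back matching is not broken by the presence of loops/parallel edges at a hospital where a couple sits with both members — this is precisely where the \multiSM\ stability notion, rather than plain \fixt, is doing the work.
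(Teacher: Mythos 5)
There is a genuine gap, and it sits at the heart of your plan: you propose to ``invoke the known fact that \fixt\ always admits a stable matching computable in linear time,'' but \fixt\ is a non-bipartite generalisation of \textsc{Stable Roommates} and does \emph{not} always admit a stable matching. Existence is precisely the difficult part of this theorem, and the paper's key idea is different: it computes a stable \emph{half-integral} matching $M_f$ (which, by Tan's theorem as extended by Fleiner, always exists and is computable in $\mathcal{O}(|E|)$ time) and then \emph{rounds} it, assigning each doctor to the best hospital it touches with positive weight. The $\pm 1$ capacity perturbation is exactly the rounding error: since every node is incident to either $0$ or $2$ fractional edges in $M_f$, a saturated hospital gains or loses at most one doctor in the rounding, and its capacity is reset to $|M(h_j)|$. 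Your mechanism for the $\pm 1$ slack (a coupled edge-block being ``half-dominated'', repaired by a local scan at each hospital) does not substitute for this, because without an existence guarantee for the integral \fixt\ solution there is nothing to pull back in the first place.

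A second, structural problem is your reduction itself. You create one couple vertex of capacity $2$ and add, per acceptable pair $(h_j,h_k)$, two edges ``that must be taken together.'' \fixt\ has no such edge-block constraint, so this is not a \fixt\ instance. The paper avoids ever needing that constraint: sub-responsiveness supplies an individual preference order for each couple member, so each member becomes its own capacity-$1$ node with an ordinary preference list over its acceptable hospitals; sub-completeness guarantees that \emph{whatever} pair of individually acceptable hospitals the two members end up at is jointly acceptable, so no combination needs to be forbidden. The only genuinely joint constraint left --- that a connected couple is matched either wholly or not at all --- is enforced by a six-edge connector gadget on auxiliary nodes $a_{2i-1},a_{2i},b_{2i-1},b_{2i}$, whose preferences force a parity argument in the half-integral matching. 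Same-hospital pairs $(h,h)$ then need no loops or parallel edges at all in this construction (the two members are distinct nodes), which is why the reduction lands in plain \fixt\ rather than the multigraph variant you anticipate.
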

\begin{proof}
We prove our statement by providing a polynomial-time reduction of the problem to \fixt. \fixt\ is known to have an algorithm with $\mathcal{O}(|E|)$ running time \cite{IS07}, where $E$ is the set of edges in the underlying graph in the \fixt\ instance.  Then, we proceed by finding a fractional stable solution and round it off in a specific way 
to obtain a near-feasible stable matching. 

Let $I=(D,C,H,\mathbf{q},\succ )$ be an instance of \hrc\ where the couples' preferences are \resp\ and \subcomp.  We transform $I$ to an \fixt\ instance $I'$, and start by describing the nodes and the capacities in $I'$. 
\begin{itemize}
    \item For each single doctor $d_i\in D$, we just create a node $d_i$ with capacity 1.
    \item For each hospital $h_j$ with capacity $q_j$, we just create a node $h_j$ with capacity $q_j$.
    \item For each couple $C_i=(c_{2i-1},c_{2i})$ we create two nodes $c_{2i-1}$ and $c_{2i}$ with capacity 1; and if $C_i$ is connected, we create connector nodes $a_{2i-1},a_{2i},b_{2i-1},b_{2i}$ with capacity 1.
\end{itemize}
We proceed by describing the edges of $I'$.

\begin{itemize}
    \item For each single doctor $d_i$ and an acceptable hospital $h_j$, we add edge $(d_i,h_j)$.
    \item For each couple $C_i=(c_{2i-1},c_{2i})$ with acceptable hospitals $A(c_{2i-1}),A(c_{2i})$:
    \begin{itemize}
        \item we add the edges $(c_{2i-1},h_j)$ for each $h_j\in A(c_{2i-1})$, and edges $(c_{2i},h_j)$ for each $h_j\in A(c_{2i})$,
        \item if $C_i$ is half-separable, then we add an edge $(c_{2i-1},c_{2i})$,
        \item if $C_i$ is connected, we add the edges $(c_{2i-1},a_{2i-1})$,$(a_{2i-1},b_{2i-1})$, $(b_{2i-1},c_{2i})$, $(c_{2i},a_{2i})$, $(a_{2i},b_{2i})$, $(b_{2i},c_{2i-1})$.
    \end{itemize}
\end{itemize}

Finally, we give the preference lists of the created nodes over their neighbours. 

\begin{itemize}
    \item 
For each single doctor $d_i$, we just keep his preference list over the hospitals.  
\item For each hospital $h_j$, it just keeps its preference list over the doctors (it is only adjacent to those nodes in $D\cup C'$ that corresponds to doctors who apply there).
\item For each member $c_k$ of a half-separable couple, if $c_k$ is the one that can be assigned alone, then he ranks the hospitals in $A(c_k)$ according to $\succ_{c_k}$, followed by his partner $c_{k'}$ last. If $c_k$ is the member who cannot be assigned alone, then $c_k$ ranks $c_{k'}$ first, and then the hospitals in $A(c_k)$.
\item Let $c_k$ be a member of a connected couple. The preferences of $c_k$ in $I'$ are then created such that he/she ranks $a_k$ first (if it exists), followed by the hospitals in $A(c_k)$ according to $\succ_{c_k}$, and $b_{k-1}$ or $b_{k+1}$ last (depending on the parity of $k$, again, if it exists).
\item For the additional $a_k,b_k$ nodes we have that $b_{2i-1}\succ_{a_{2i-1}}c_{2i-1}$, $c_{2i}\succ_{b_{2i-1}}a_{2i-1}$, $b_{2i}\succ_{a_{2i}}c_{2i}$ and $c_{2i-1}\succ_{b_{2i}}a_{2i}$.
\end{itemize}

With a modification of the algorithm of Tan \cite{tan1991necessary}, described by Fleiner \cite{fleiner2010stable}, we can find a \emph{stable half-integral matching} $M_f$ in $I'$ in time $\mathcal{O}(|E|)=\mathcal{O}(m)$. That is, for each edge $(u,v)\in E$, $M_f(u,v)\in \{0,\frac12,1\}$, for each node $u\in V$, $\sum_{(u,v)\in E} M_f(u,v)\leq q_v$ ($q_v$ is the created capacity for $v$), and for each edge $(u,v)$ with $M_f(u,v)<1$, it holds that either $u$ or $v$ is \emph{saturated} in $M_f$ (i.e., $\sum_{(w,x)\in E} M_f(w,x)=1$ for some $w\in \{u,v\}$) with partners that are at least as good as $(u,v)$.
Then, we say that $(u,v)$ is \emph{dominated} at that node.  We start with an important observation. 
\begin{obs}
\label{obs:frac-edges}
    In $M_f$, each node is incident to either 0 or two fractional edges. 
\end{obs}
\claimproofstart
     To provide a short proof of this observation, first note that each fractional edge must be dominated at one endpoint, as $M_f$ is stable. Orient each of the edges towards a node that dominates it. Then, there can be at most one incoming arc into each node, because if there would be two, then the better one of them would not be dominated there, contradiction. Also, if there is one incoming arc to a node, then there is an outgoing arc too, as for the node to be saturated, there must be another fractional adjacent edge. Hence, there can be no nodes with two outgoing arcs, because the total number of incoming arcs taken over all nodes must equal the total number of outgoing arcs taken over all nodes. 
\claimproofend

We create an set of pairs $M$ in $I$ as follows. For each doctor $r_i\in D\cup C'$ (so $r_i$ is either a single doctor or a member of a couple), let $S_i=\{h_j\in A(r_i) : M_f(r_i,h_j)>0\}$.  If $S_i\neq \emptyset$, add to $M$ the edge $(r_i,h_j)$ such that $h_j$ is the most-preferred (according to $\succ_{r_i}$) member of $S_i$.  (If $S_i=\emptyset$, no edge incident to $r_i$ is added to $M$.)  For each hospital $h_j\in H$, we possibly modify $h_j$'s capacity in $I'$ as follows: if $h_j$ was saturated in $M_f$ then we set $q_j'=|M(h_j)|$, otherwise we set $q_j'=q_j$.

\begin{claim}
    The obtained matching $M$ corresponds to a near-feasible and stable matching in $I$ with respect to the new capacities. 
\end{claim}
\specialclaimproofstart
\textbf{Feasibility.}  First of all we have to show that $M$ defines a feasible matching in $I$.  Clearly, every doctor $r_i$ is assigned to at most one hospital in $M$. If $r_i$ is a single doctor, then it is clear that $M(r_i)$ is acceptable to him.
    
    If $r_i$ is a member of a connected couple, say (by symmetry) $c_{2i-1}$, who is matched in $M$, then we first show that his partner $c_{2i}$ is also matched in $M$. Suppose for the contrary that $c_{2i}$ is not matched in $M$. Then $c_{2i}$ is not matched to any hospital node $h_j$ with positive weight in $M_f$. As the edge $(b_{2i-1},c_{2i})$ does not block $M_f$, we obtain that $c_{2i}$ must be saturated in $M_f$. Hence $M_f(c_{2i},a_{2i})+M_f(b_{2i-1},c_{2i})=1$.
    As each of $\{ a_{2i-1},a_{2i},b_{2i-1},b_{2i},c_{2i-1},c_{2i}\}$ are a strict first choice of some other agent, we get that all of them must be saturated in $M_f$. 
    Therefore, if $x=M_f(c_{2i},a_{2i})$, then $1-x=M_f(b_{2i-1},c_{2i})=M_f(a_{2i},b_{2i})$, hence $x=M_f(b_{2i},c_{2i-1})=M_f(a_{2i-1},b_{2i-1})$ and finally $1-x=M_f(c_{2i-1},a_{2i-1})$. But this implies that $M_f(c_{2i-1},a_{2i-1})+M_f(b_{2i},c_{2i-1})=1$, which contradicts the fact that $c_{2i-1}$ is matched to a hospital in $M$. 
Therefore, we obtain that for each couple $C_i$, either neither or both of them are matched in $M$. 

If $c_k$ is a member of a half-separable couple who cannot be assigned alone, then we show that $c_k$ is assigned only if his partner $c_{k'}$ is too. Indeed, if $c_{k'}$ is not matched to a hospital node, but $c_k$ is, then $c_{k'}$ is not matched with weight 1 in $M_f$ to a hospital node, and as the edge $(c_k,c_{k'})$ is the best for $c_k$, it follows that it blocks $M_f$ in $I'$, contradiction.

Also, because every couple's preferences are \subcomp, we obtain that no matter which hospitals in $A(c_{2i-1})$ and $A(c_{2i})$ are assigned $c_{2i-1}$ and $c_{2i}$ in $M$, respectively, it must be the case that $(M(c_{2i-1}),M(c_{2i}))\in A(C_i)$.
Hence each couple finds the created allocation feasible too. 

Finally, we show that no hospital is oversubscribed in $M$. Obviously, only the hospitals whose capacity we do not change to $|M(h_j)|$ can become oversubscribed. But these hospitals were undersubscribed, and it follows by Observation \ref{obs:frac-edges} that the number of doctors it obtains in $M$ is at most $|M_f(h_j)|+1\le q_j$.

\textbf{The new capacities are near-feasible.}  Next we show that the new capacities are near-feasible, that is each capacity is changed by at most 1. Suppose we changed a hospital's capacity to $|M(h_j)|$. Then, $h_j$ was saturated in $M_f$ and it had at most two incident fractional edges and all of these edges were to nodes corresponding to doctors. Hence, after the rounding, $q_j-1=|M_f(h_j)|-1\le |M(h_j)|\le |M_f(h_j)|+1=q_j+1$.

\textbf{Stability.}  Finally we show that the constructed matching $M$ is stable in $I$ with respect to the new capacities. 

Suppose that a single doctor $d_i$ blocks with a hospital $h_j$. Then, by the creation of $M$, we know that $M_f(d_i,h_j)=0$. But, the edge $(d_i,h_j)$ did not block $M_f$, so $h_j$ was filled with better, possibly fractional partners in $M_f$, each of them being nodes corresponding to doctors. By the definition of the new capacities, $h_j$ remains saturated, and any doctor at $h_j$ now must have been at $h_j$ with positive weight in $M_f$. Therefore, $h_j$ is filled with better doctors than $d_i$, a contradiction. 

Suppose that a couple $C_i\in C$ blocks $M$ in $I$ with a pair of hospitals $(h_j,h_k)\in A(C_i)$.  The following argument holds regardless of whether $h_j=h_k$ or not.
By the assumption that the couple's preferences are \resp\ and \subcomp, 
it follows that either $C_i$ is unmatched in $M$, in which case at least one of the members' nodes is not full in $M_f$ with better edges than $(c_{2i-1},h_j)$ or $(c_{2i},h_k)$ respectively (in the case of a connected couple one of $\{ c_{2i-1},c_{2i}\}$ is not matched to his favourite partner $a_{2i}$ or $a_{2i-1}$ with weight 1 in $M_f$, as otherwise $(a_{2i},b_{2i})$ blocks $M_f$); or one of its members $c_{2i-1}$, $c_{2i}$ is at a worse hospital in $M$ according to his ranking. In both cases, one of the edges $\{ (c_{2i-1},h_j), (c_{2i},h_k)\} $ is not dominated at a couple node in $M_f$.
Suppose by symmetry it is $(c_{2i},h_k)$. By the construction of $M$, it follows that $M_f(c_{2i},h_k)=0$ must hold. However, the edge $(c_{2i},h_k)$ did not block $M_f$ in $I'$, so $h_k$ was saturated in $M_f$ with better partners than $c_{2i}$, all of them being doctors. Hence, we obtain again by our construction that $h_k$ is full in $M$ with better doctors than $c_{2i}$, so $C_i$ cannot block $M$ in $I$ with $(h_j,h_k)$.

Suppose a couple $C_i$ blocks with one hospital and an application $(h_i,\emptyset )$ or $(\emptyset ,h_j)$. In this case, $C_i$ must be separable or half-separable. Suppose by symmetry that $c_{2i}$ is the one who gets matched to a hospital in the blocking coalition. Then, if $C_i$ is half-separable, then $c_{2i}$ must be the member who can be assigned alone, so he ranks $c_{2i-1}$ last. Hence, in any case, $c_{2i}$ cannot be matched to better partners in $M_f$ than $h_j$, and $M_f(c_{2i},h_j)=0$. However, $(c_{2i},h_j)$ did not block $M_f$, so $h_j$ was saturated with better doctors in $M_f$, so it still is in $M$, contradiction. 

Therefore, we obtain that the matching $M$ is stable too, proving the theorem. 
\specialclaimproofend
\end{proof}

We have shown that \resp\ and \subcomp\ preferences are enough to guarantee the existence and polynomial-time solvability of finding a near-feasible stable solution in an \hrc\ instance.

\begin{remark}
    We first remark that this algorithm extends in a straightforward way to the case where ties are allowed in the preference lists, as a stable half-integral matching can be found even if ties are allowed (by breaking the ties arbitrarily). Furthermore, a stable matching that is stable after the tie breaking is also stable with the original preference lists. Hence, the near-feasible stable matching that the algorithm finds by breaking the ties arbitrarily is sufficient. 
\end{remark}

\begin{remark}
By the result of Nguyen and Vohra \cite{nguyen2018near}, we know that there is always a near-feasible stable matching in any \hrc\ instance, if we can change the capacities by 2 instead of 1. However, their approach to finding such a near-feasible stable matching starts by constructing a fractional stable solution, which by a recent paper of Csáji \cite{csaji2022complexity} is shown to be computationally hard (PPAD-hard) in general. Therefore, even though the existence is guaranteed, there is no efficient algorithm yet that can find a near-feasible stable solution in an arbitrary \hrc\ instance.
\end{remark}

In the \hrc\ problem, especially with \resp\ and \subcomp\ preferences, it might be unclear as to why it is beneficial for a couple to apply together to pairs of hospitals. Indeed, any stable matching would be stable even if they would have applied separately. However, a very nice feature of our algorithm as given in the proof of Theorem \ref{thm:resp_subcom} is that now it may be very beneficial for a couple to apply together, as they can obtain a better match. We demonstrate this via a simple example. 

\begin{ex}
Let $h$ be a hospital with capacity 2. Suppose there is one single doctor $d$ and a couple $(c_1,c_2)$. Every doctor applies only to $h$, which has preference list $c_1\succ_h d\succ_h c_2$.  Then, if the two members of the couple would apply as separate doctors, the output matching would be to assign $c_1$ and $d$ to $h$, and leave $c_2$ unmatched. However, if $c_1$ and $c_2$ apply together as a connected couple, then in the fractional stable matching $M_f$, $d$ is assigned to $h$ with weight 1, while $c_1$ and $c_2$ are both assigned to $h$ with weight $0.5$. Hence, in the output of the algorithm, the capacity of $h$ is increased by 1, and both $c_1$ and $c_2$ are accepted. 
\end{ex}

We continue to our other main theorem of this section. 

\begin{theorem}
\label{thm:12abc}
\hrc\ is solvable in $\mathcal{O}(m)$ time, whenever each couple is \resp, \subcomp\ and is of type-a, -b or -c, where $m$ denotes the total length of the preference lists of the hospitals.
\end{theorem}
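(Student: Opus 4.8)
The plan is to extend the reduction to \fixt\ from the proof of Theorem \ref{thm:resp_subcom}. Single doctors and hospitals, separable couples (modelled by two independent doctor-nodes), and half-separable couples (with the edge $(c_{2i-1},c_{2i})$) are treated exactly as there, and every connected couple --- a class that now contains all type-b and type-c couples --- keeps the six-cycle connector gadget on $c_{2i-1},a_{2i-1},b_{2i-1},c_{2i},a_{2i},b_{2i}$, possibly with a minor adjustment to where the shared hospital $h$ of a type-b or type-c couple is placed in the $I'$-preference lists of $c_{2i-1}$ and $c_{2i}$. The point of this adjustment is to make the \fixt\ blocking behaviour at $h$ match the \hrc\ one: such a couple occupies two posts of $h$ precisely when both of its members are sent to $h$, and the \hrc\ condition for the couple to block with $(h,h)$ demands \emph{two} free or strictly-worse posts at $h$ (blocking conditions 3(b)--(d)) --- the ``loop'' phenomenon highlighted in the discussion of \multiSM --- and the defining restrictions of type-b (the worst-for-$h$ member of the couple applies only to $h$) and type-c ($h$ is the worst hospital for both members) are exactly what let this be simulated faithfully. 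The instance $I'$ so obtained still has $\mathcal{O}(m)$ edges, so a stable half-integral matching $M_f$ of $I'$ can be computed in $\mathcal{O}(m)$ time by the algorithm of Tan \cite{tan1991necessary} and Fleiner \cite{fleiner2010stable}.

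The heart of the proof is to show that, when every couple is of type-a, -b or -c, $M_f$ can be converted in linear time into an \emph{integral} stable matching $M'$ of $I'$, so that --- unlike in Theorem \ref{thm:resp_subcom} --- no hospital capacity has to be modified. By Observation \ref{obs:frac-edges} the fractional support of $M_f$ is a vertex-disjoint union of cycles whose edges all have weight $\frac12$; orienting each fractional edge towards an endpoint that dominates it turns each such cycle into a consistently-oriented directed cycle, and an even cycle can then be integralised by keeping alternate edges, which leaves every node's load unchanged and produces no blocking edge (at the dominating endpoint of a dropped edge a strictly better edge is kept). It therefore suffices to prove that no fractional cycle of $M_f$ is odd. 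I would establish this by a local analysis of how an odd fractional cycle could run through a couple's gadget: the part of $I'$ on single doctors and hospitals is bipartite and hence contributes only even segments; for a type-a couple the two members have disjoint acceptable hospitals, so a fractional cycle can pass from one member to the other only through the connector gadget, and the rigidity forced by each connector node's top choice being saturated makes the cycle either traverse the whole six-cycle (an even segment) or avoid it entirely; and for a type-b or type-c couple the extra restriction on the shared hospital $h$, together with \resp ness and \subcomp ness, excludes the remaining way an odd segment could be formed through $h$.

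From $M'$ I would read off a matching $M$ of $I$ exactly as in the proof of Theorem \ref{thm:resp_subcom}: each doctor keeps its $M'$-hospital, if it has one; the result is a feasible matching of $I$ because \subcomp ness makes every individually-acceptable pair of assignments jointly acceptable and the connector gadget forces both members of a connected couple to be matched or neither to be; and no capacity changes, because integralising even cycles preserves all node loads. Stability of $M$ in $I$ is then argued as before: a hypothetical single-doctor blocking pair, single-hospital-with-couple blocking pair, or two-hospital-with-couple blocking pair of $M$ translates, via \resp ness and \subcomp ness, into an edge blocking $M'$ in $I'$, a contradiction. In particular this shows that a stable matching always exists when all couples are of type-a, -b or -c, which is what underlies the Rural-Hospitals-type structural statement mentioned in Section~\ref{sec:intro}.

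I expect the main obstacle to be the type-b and type-c cases. Getting the shared-hospital part of the reduction right --- so that $I'$ admits an integral stable matching exactly when $I$ does, neither under- nor over-constraining stability at a hospital that a couple can fill with two of its own members --- and then carrying out the case analysis that rules out odd fractional cycles through such a gadget (which cannot be purely local, since $h$ may also be shared with other couples and single doctors) are the delicate points. This is presumably also where the restriction to precisely these three couple types is needed: by Theorem \ref{thm:respNPc}, dropping it makes \hrc\ NP-hard even for \resp\ and \subcomp\ couples.
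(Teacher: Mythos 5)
There is a genuine gap at the heart of your plan. The step ``no fractional cycle of $M_f$ is odd'' --- and its consequence, which you state explicitly, that a stable matching always exists when all couples are of type-a, -b or -c --- is false. \hrc\ with only connected type-a couples already encodes the \textsc{Stable Roommates} problem: Corollary \ref{cor:multiSM} turns any \multiSM\ instance (in particular any simple \textsc{sr} instance, all capacities 1) into an \hrc\ instance in which every couple is connected and of type-a or -b, with stable matchings in exact correspondence. Since \textsc{sr} instances with no stable matching exist (e.g.\ an odd preference cycle on three agents), there are type-a instances of \hrc\ with no stable matching; for such instances the Tan--Fleiner half-integral matching necessarily contains an odd fractional cycle, and no local analysis of the gadgets can rule this out. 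Theorem \ref{thm:12abc} asks for an algorithm that finds a stable matching \emph{or reports that none exists}, so the correct strategy is not to round a half-integral matching but to establish a two-way equivalence --- $I$ admits a stable matching if and only if the constructed \fixt\ instance $I'$ admits an \emph{integral} stable matching --- and then invoke the Irving--Scott $\mathcal{O}(|E|)$ algorithm for \fixt, which either produces an integral stable matching or correctly certifies non-existence. This is the route the paper takes, and it requires proving both directions of the equivalence (constructing a stable $M'$ from a stable $M$, and conversely), neither of which appears in your proposal.

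A second, independent problem is your treatment of the shared hospital $h$ of a type-b or type-c couple, which you leave as ``a minor adjustment to where $h$ is placed in the preference lists.'' With a single \fixt\ node per hospital this cannot work: the couple's two members are separate nodes, so each of the edges $(c_{2i-1},h)$ and $(c_{2i},h)$ blocks as soon as $h$ has \emph{one} worse occupant or free post, and a single worse doctor at $h$ would trigger both edges even though the \hrc\ blocking conditions 3(b)--(d) demand two distinct worse or free posts --- exactly the loop difficulty the paper highlights when contrasting \multiSM\ with \textsc{Stable Multiple Activities}. The paper's device is structural, not a reordering: each hospital $h_j$ is split into two nodes $h_j^1$ (capacity $q_j-1$) and $h_j^2$ (capacity 1), every doctor prefers $h_j^1$ to $h_j^2$, the edge from the better-for-$h$ member of a type-b couple to $h^2$ is deleted, and for a type-c couple the partner edge $(c_{2i-1},c_{2i})$ is inserted between $h^1$ and $h^2$ in each member's list. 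The type-a/-b/-c restrictions are precisely what make this split faithful, and without some such mechanism the reduction would report spurious blocking pairs at shared hospitals.
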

\begin{proof}
We prove our statement by creating an \fixt\ instance using a construction that extends the polynomial-time reduction given in the proof of Theorem \ref{thm:resp_subcom}. 
Let $I=(D,C,H,\mathbf{q},\succ )$ be an instance of \hrc\ satisfying the conditions of the theorem. Reindex the couples $C_i$ of type-b and c, such that $c_{2i-1}$ is the worse member of $C_i$ for the common hospital $h$. We will use this fact to make the analysis simpler. 
We proceed by describing our reduction.

We start by describing the nodes and the capacities of the constructed \fixt\ instance $I'$.
\begin{itemize}
    \item For each single doctor $d_i\in D$, we just create a node $d_i$ with capacity 1.
    \item For each hospital $h_j$ with capacity $q_j$, we create two nodes $h_j^1$ with capacity $q_j-1$ and $h_j^2$ with capacity 1.
    For simplicity, let us assume that we add $h_j^1$ even if $q_j=1$ and thus $h_j^1$ has  capacity 0. 
    \item For each couple $C_i=(c_{2i-1},c_{2i})$ we create two nodes $c_{2i-1}$ and $c_{2i}$ with capacity 1.
    \begin{itemize}
        \item If $C_i$ is separable or half-separable, then we do not add further nodes. 
        \item If $C$ is connected, we create dummy nodes $a_{2i-1},a_{2i},b_{2i-1},b_{2i}$ with capacity 1.
    \end{itemize}
\end{itemize}
We proceed by describing the edges of $I'$.

\begin{itemize}
    \item For each single doctor $d_i$ and an acceptable hospital $h_j$, we add edges $(d_i,h_j^1),(d_i,h_j^2)$.
    \item For each couple $C_i=(c_{2i-1},c_{2i})$ with acceptable hospitals $A(c_{2i-1}),A(c_{2i})$:
    \begin{itemize}
        \item For any couple  $C_i=(c_{2i-1},c_{2i})$ we add the edges $(c_{2i-1},h_j^1),(c_{2i-1},h_j^2)$ and $(c_{2i},h_k^1)$, $(c_{2i},h_k^2)$ for $h_j\in A(c_{2i-1}),h_k\in A(c_{2i})$.
        \item If $C_i$ is connected, we further add the edges $(c_{2i-1},a_{2i-1})$, $(a_{2i-1},b_{2i-1})$, $(b_{2i-1},c_{2i})$, $(c_{2i},a_{2i})$, $(a_{2i},b_{2i})$, $(b_{2i},c_{2i-1})$.
        \item If $C_i$ is of type-b 
        then we delete the edge $(c_{2i},h^2)$, 
        where $\{h\}=A(c_{2i-1})\cap A(c_{2i})$. 
        \item If $C_i$ is of type-c or $C_i$ is half-separable, then we add an edge $(c_{2i-1},c_{2i})$.
    \end{itemize}
\end{itemize}

Finally, for each constructed node $v$ in $I'$, we create a preference list $\succ_v'$ in $I'$ for $v$ over its neighbours as follows.  (Recall that if $v=c_k\in C'$, by our assumption we have a preference order $\succ_{c_k}$ over $A(c_k)$.)  Firstly, let $\succ_v'=\succ_v$, and then modify $\succ_v'$ according to the following cases. 
\begin{itemize}
    \item For each single doctor $d_i$, we just substitute each $h_j$ in $\succ_{d_i}'$ with $h_j^1\succ' h_j^2$ in $I'$. So $d_i$ has the same preferences over the hospitals and always prefers the first copy of a hospital to the second one. 
    \item For each hospital $h_j$, 
    in case we deleted an edge $(c_{i'},h_j^2)$, then $c_{i'}$ is deleted from $\succ_{h_j^2}'$.
    \item For each couple member $c_k$ $(k\in \{2i-1,2i\}$), the preferences of $c_k$ in $I'$ are created such that:
    \begin{itemize}
        \item First, each occurrence of $h_j\in A(c_k)$ is replaced by $h_j^1\succ_{c_k}' h_j^2$ in $\succ_{c_k}'$.
        \item If $c_k$ is a member of a half-separable couple, who can be matched alone, then we append his partner's node $c_{k'}$ to the end of his preference list. If he is the member who cannot be assigned alone, then we put his partner $c_{k'}$ to the top of his preference list.
        \item If $c_k$ is a member of a connected type-a or type-b couple, then we insert $a_k$ in first place in $\succ_{c_k}'$, 
        and append $b_{k-1}$ or $b_{k+1}$ (depending on the parity of $k$) to the end of $\succ_{c_k}'$.
        \item If $c_k$ is a member of a type-b couple,
        then let $h$ be the common hospital of the two members of the couple $C_i$.  We delete $h^2$ from $\succ_{c_k}'$ if $k=2i$ (recall our assumption that $c_{2i-1}$ is the worse member of the couple for $h$).
        \item If $c_k$ is a member of a type-c couple, then again let $h$ be the common hospital of the two members of the couple $C_i$.  In $\succ_{c_k}'$, $c_k$ ranks $a_k$ first, followed by the hospitals in $h_j\in A(c_k)\setminus \{ h\}$, replacing every occurrence of $h_j$ by $h_j^1\succ_{c_k}' h_j^2$ in $\succ_{c_k}'$.  This sequence is then followed in $\succ_{c_k}'$ by $h^1$, then $c_{k'}$ ($k'=4i-1-k$), then $h^2$, and finally $b_{k-1}$ or $b_{k+1}$ last (depending on the parity of $k$). 
    \end{itemize}
\item For the additional $a_k,b_k$ nodes we have that 
\begin{itemize}
    \item $b_{2i-1}\succ_{a_{2i-1}}' c_{2i-1}$,
    \item $c_{2i}\succ_{b_{2i-1}}' a_{2i-1}$, 
    \item $b_{2i}\succ_{a_{2i}}' c_{2i}$, 
    \item $c_{2i-1}\succ_{b_{2i}}' a_{2i}$.
\end{itemize}
\end{itemize}
The constructed gadgets in $I'$ corresponding to connected type-a, type-b and type-c couples are shown in Figures \ref{fig:typea}, \ref{fig:typeb} and \ref{fig:typec}, respectively.  We prove that $I'$ admits a stable matching if and only if $I$ admits a stable matching. Furthermore, we show that we can create a stable matching $M$ from a stable matching $M'$ in $I'$ in linear time. 

\begin{figure}
    \centering
    \includegraphics[scale=0.7]{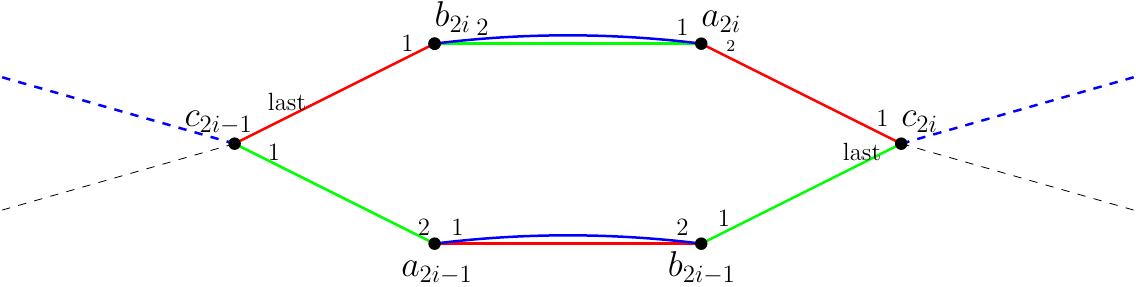}
    \caption{The gadget for a couple of type-a, with some possible stable matchings highlighted with different colours. Numbers indicate preferences.}
    \label{fig:typea}
\end{figure}
\begin{figure}
    \centering
    \includegraphics[scale=0.7]{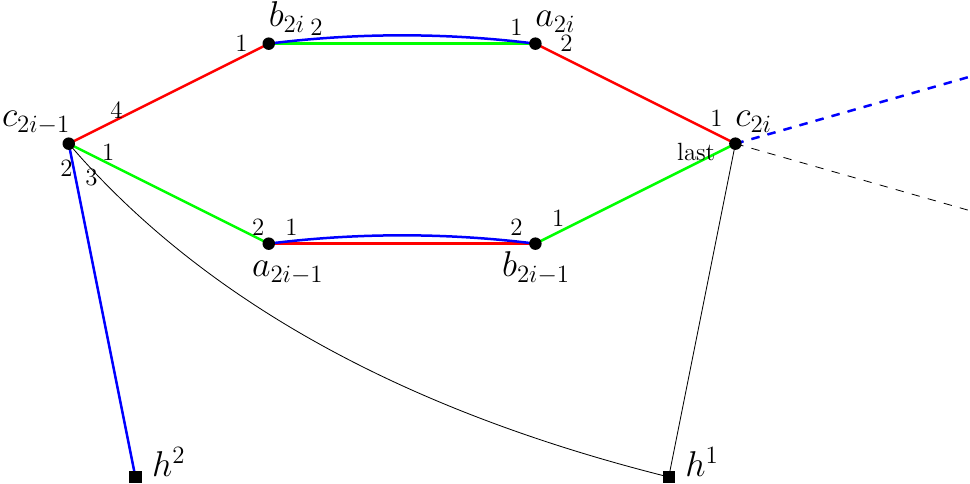}
    \caption{The gadget for a couple of type-b, with some possible stable matchings highlighted with different colours. Here, ``last" means the least-preferred partner, ``last-1" the second-least, etc.}
    \label{fig:typeb}
\end{figure}
\begin{figure}[htbp]
    \centering
    \includegraphics[scale=0.7]{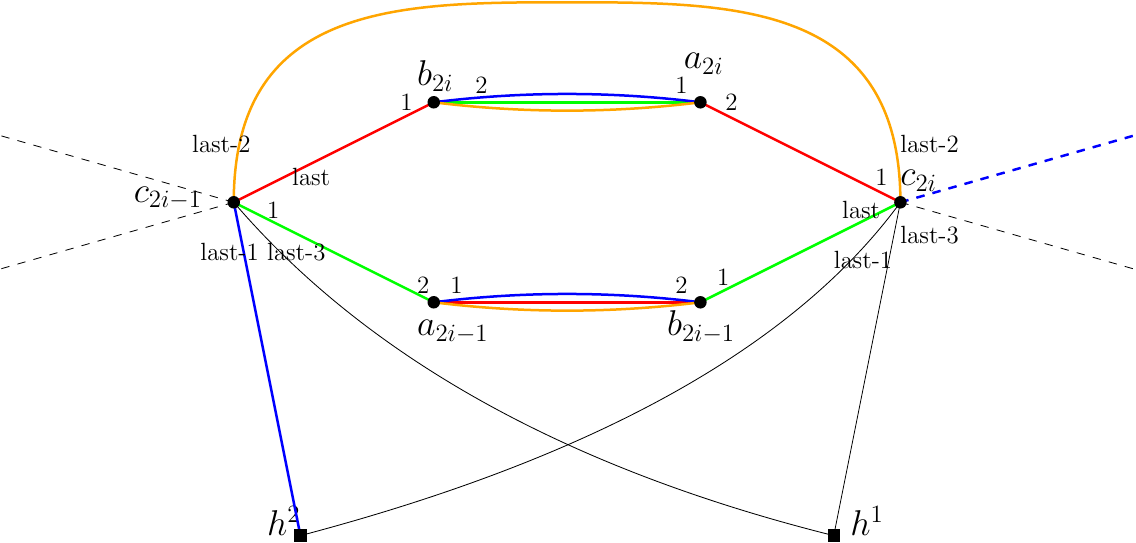}
    \caption{The gadget for a couple of type-c, with some possible stable matchings highlighted with different colours. Here, ``last" means the least-preferred partner, ``last-1" the second-least, etc.}
    \label{fig:typec}
\end{figure}

\begin{claim}
    If $I$ admits a stable matching, then so does $I'$.
    \label{claim2}
\end{claim}
\claimproofstart
    Let $M$ be a stable matching in $I$.
    
    \vspace{-8pt}
    \paragraph{Creating $M'$ from $M$.}
    First note that by adding dummy single doctors that find each hospital acceptable, and each hospital considers them worst, we can assume that each hospital is filled in $M$. (For the algorithm, we do not need to add these dummy doctors, it just helps to avoid further case distinctions in the analysis, on whether a hospital has worse doctors or free positions). 
    We create a matching $M'$ in $I'$ as follows. 

    For each hospital $h_j$, let $M^1(h_j)$ be the best $q_j-1$ doctors at $h_j$ and $M^2(h_j)$ be the worst doctor at $h_j$. Then, in $M'$, we match the nodes corresponding to the doctors in $M^1(h_j)$ to $h_j^1$ and the doctor in $M^2(h_j)$ to $h_j^2$. Note that it cannot happen that we match a doctor $c_k$, who is a member of a couple, to $h_j^2$ if the edge $(c_k,h_j^2)$ does not exist: indeed, in that case $c_k$ cannot be matched alone and his partner $c_{k'}$ only applies to $h_j$ and is worse for $h_j$, so $c_k$ cannot be the worst doctor at $h_j$. Note that as each hospital $h_j$ was filled in $M$, every node $h_j^1,h_j^2$ is filled by $M'$.

    For a single doctor, if we did not match him to a hospital, then we leave the corresponding node alone. 

    For any connected couple $C_i$, where the couple is matched in $M$, we add the edges $\{(a_{2i-1},b_{2i-1})$, $(a_{2i},b_{2i})\}$ to $M'$.

    If $C_i$ is half-separable and unmatched, then we add the edge $(c_{2i-1},c_{2i})$ to $M'$. If only one of them are matched, then we leave the unmatched node alone in $M'$.
    
    For a connected couple $C_i$ of type-a, if $C_i$ is not matched in $M$, then by the stability of $M$ and the \resp ness and \subcomp ness of the couples, it must be the case all hospital in $A(c_{2i-1})$ are filled with better doctors or all hospitals in $A(c_{2i})$ are filled with better doctors. If $C_i$ is connected, then in the former case, we add the edges $\{ (c_{2i},a_{2i}),(b_{2i},c_{2i-1}),(a_{2i-1},b_{2i-1})\}$ to $M'$, in the latter we add $\{ (c_{2i-1},a_{2i-1}),(b_{2i-1},c_{2i}),(a_{2i},b_{2i})\}$. 

    For a couple $C_i$ of type-b, if $C_i$ is not matched in $M$, then by the stability of $M$ it must be case that the common hospital $h$ (where $\{ h\} =A(c_{2i-1})$) is filled with better doctors than $c_{2i-1}$, or every hospital in $A(c_{2i})\setminus \{ h\}$ is filled with better doctors than $c_{2i}$ and $h$ has at most a single doctor that is worse than $c_{2i}$ (and also $c_{2i-1}$ as the first case does not hold). In the former case, we add the edges $\{ (c_{2i},a_{2i}),(b_{2i},c_{2i-1}),(a_{2i-1},b_{2i-1})\}$ to $M'$, in the latter we add $\{ (c_{2i-1},a_{2i-1}),(b_{2i-1},c_{2i}),(a_{2i},b_{2i})\}$.

    Finally a couple $C_i$ of type-c, if $C_i$ is not matched in $M$, then by the stability of $M$ it must be the case either all hospital in $A(c_{2i-1})$ are filled with better doctors or all hospitals in $A(c_{2i})$ are filled with better doctors or $h$ (where $\{ h\} =A(c_{2i-1})\cap A(c_{2i}))$) has a worse doctor than both members, but any other doctor at $h$ is better than both of them and every other acceptable hospital is filled with better doctors than the corresponding member of the couple. In the first case, we add the edges $\{ (c_{2i},a_{2i}),(b_{2i},c_{2i-1}),(a_{2i-1},b_{2i-1})\}$ to $M'$, in the second case we add $\{ (c_{2i-1},a_{2i-1}),(b_{2i-1},c_{2i}),(a_{2i},b_{2i})\}$, while in the third case we add $\{(c_{2i-1},c_{2i})$, $(a_{2i-1},b_{2i-1})$, $(a_{2i},b_{2i})\}$. 

\vspace{-8pt}
\paragraph{Stability of $M'$.}
    We show that $M'$ is stable.   For any couple $C_i$, the nodes $\{ a_{2i-1},a_{2i},b_{2i-1},b_{2i}\}$ cannot block, as either they have their best partner in $M'$, or they are with their other partner and their best partner is with someone better. 

    A single doctor node $d_i$ cannot block either, as he could only block with a node $h_j^1$ or $h_j^2$, such that $h_j\succ_{d_i}M(d_i)$ or $h_j=M(d_i)$ and $d_i$ is at $h_j^2$. In the first case, a blocking edge would mean that $(d_i,h_j)$ blocks $M$, in the second case it would mean that $d_j$ was not the worst doctor at $h_j$, but we matched him to $h_j^2$, both are contradictions. 

    Suppose a couple node $c_k$ blocks with someone. It cannot block with another $c_{k'}$ node, as for each connected couple of type-c, where the edge $(c_k,c_{k'})$ exists and is not in $M'$, either one of them is at their best choice (an $a_i$ agent), or both of them is at a hospital and in particular at least one of $\{ c_{k}$,$c_{k'}\}$ is at a hospital node that is better than $h^2$ (where $\{ h\} =A(c_{k})\cap A(c_{k'})$), which by the construction of the preferences is also better than his $c_k$ partner. If $c_k$ is a member of a half-separable couple, then either $(c_k,c_{k'})\in M'$ or the member who can be matched alone is matched to some hospital, which he prefers to $c_{k'}$, so it cannot block in this case either.

    A $c_k$ also cannot block with $a_i$ or $b_i$ nodes, as either $c_k$ or they are with a better partner.

In the next paragraphs we show that
   a doctor $c_k$ also cannot block with a hospital node $h_j^1$ or $h_j^2$. First we observe that if $c_k$ is with $h_j^2$, then it cannot block with $h_j^1$ as it is filled with better partners. 
   
If $c_k$ is matched to his partner's node $c_{k'}$, then we know that the couple is of type-c or is half-separable. Furthermore, in the first case, every hospital node is filled with better partners except maybe $h^2$ (where $\{h\}=A(c_k)\cap A(c_{k'}))$. If $(c_k,c_{k'})$ is of type-c, then they prefer each other to $h^2$, so the edges to $h^2$ cannot block. If they are a half-separable couple, then if $c_k$ is the member who cannot be assigned alone, then he is with his best partner, so cannot block. If he is the member who can be assigned alone, then each acceptable hospital must be filled with better doctors as $M$ was stable, so $c_k$ cannot block with a hospital node again.
    
    If $c_k$ is matched to $b_k$, by the stability of $M$ and the construction of $M'$, we know that each hospital in $A(c_k)$ was filled with better doctors in $M$, so $h_j^1,h_j^2$ is filled with better nodes than $c_k$ in $M'$ for any $h_j\in A(c_k)$; or the couple is of type-b and only $h$ (where $\{h\}=A(c_k)\cap A(c_{k'})$) has a worse doctor and at most one. Also, in this case $c_k$ is the better doctor for $h$ among the couple and so the edge $(c_k,h^2)$ does not exists. Hence, the remaining edge $(c_k,h^1)$ cannot block, as $h^1$ has only better doctors.

    If $c_k$ is matched to $a_k$, then he is already with his best partner.

If $c_k$ is not matched at all in $M'$, then he must be a member of a separable or a half-separable couple, and in the latter case the one who may not be assigned alone, while his partner $c_{k'}$ is matched to some hospital $h_j$. Hence, each adjacent hospital node must be filled with better doctors by the stability of $M$, as $c_k$ could be matched alone and his partner $c_{k'}$ cannot be at an acceptable hospital for $c_k$ (separable or half-separable couples are type-a in $I$).
    
 Otherwise $c_k$ is matched to a hospital node $h_j^1$/$h_j^2$. In this case, we claim all nodes corresponding to better hospitals are filled with better doctors, except maybe $c_k$'s partner $c_{k'}$ might be there and be a worse doctor. Otherwise, if a hospital $h_j\notin A(c_k)\cap A(c_{k'})$ is not filled with better doctors, then $c_k$ could go to $h_j$, while his partner remains at the same hospital, which would give a blocking coalition to $M$, contradiction. If there is a hospital $h$ where $\{h\}=A(c_k)\cap A(c_{k'})$ and $h$ is not filled with better doctors, and $h$ is not the worst hospital of $c_k$, then the couple is of type-b and $c_k$ is better than his partner for $h$. This also implies that we deleted the edge $(c_k,h^2)$. As $M$ was stable, $c_k$'s partner is the only worse doctor at $h$. Hence, $h^1$ is filled with better nodes than $c_k$ and the edge $(c_k,h^1)$ does not block.

    Therefore, we conclude that only hospital nodes could block with each other, but there is no edge between them, so $M'$ is stable. 
\claimproofend
    \begin{claim}
        If there is a stable matching $M'$ in $I'$, then there is a stable matching $M$ in $I$. Furthermore, $M$ can be constructed from $M'$ in linear time.
        \label{claim3}
    \end{claim}
    \claimproofstart
        Let $M'$ be a stable matching in $I'$. Again, by the worst dummy doctors, we may assume that hospital nodes are full for simplicity (having such a dummy doctor somewhere is equivalent to having that position empty in terms of blocking).
        We start by showing that for any connected couple $C_i$, it holds that $c_{2i-1}$ matched to a hospital node if and only if $c_{2i}$ is too. Suppose $c_{2i-1}$ is matched to some hospital node $h_j^1$/$h_j^2$, but $c_{2i}$ is not. Then, as $(c_{2i-1},a_{2i-1})$ does not block, $(a_{2i-1},b_{2i-1})\in M'$. As $(b_{2i-1},c_{2i})$ does not block, $(c_{2i},a_{2i})\in M'$. But then, $b_{2i}$ is unmatched and $(a_{2i},b_{2i})$ blocks $M'$, contradiction. The other case is similar. 

        If $c_k$ is a member of a half-separable couple who cannot be matched alone, then if $c_k$ would be matched, but not his partner $c_{k'}$, then the edge $(c_k,c_{k'})$ would block, as $c_k$ considers $c_{k'}$ as best. So only the designated members of half-separable couples may be matched alone.

        It is clear that in $M'$, for each hospital node $h_j^1/h_j^2$, every doctor at $h_j^1$ is better for $h_j$ than the one 
        at $h_j^2$. Otherwise that doctor matched to $h_j^2$ would block with $h_j^1$, as he prefers $h_j^1$ to $h_j^2$.
        
        We create $M$ as follows. For each doctor (either single or member of a couple) we assign him to the hospital whose node he is matched in $M'$, if he is matched to such a node, otherwise we leave him unassigned. 

        It is straightforward to see that $M$ is feasible for the hospitals and the single doctors. For the couples, we have already shown that no member remains unmatched, who is not allowed to; and as their preferences are \subcomp, no matter how their two corresponding nodes are matched to hospital nodes, it will correspond to an acceptable assignment of the couple. 

        Hence we only have to show that $M$ is stable. 

    Suppose a single doctor $d_i$ blocks $M$ with a hospital $h_j$. But then the edge $(d_i,h_j^2)$ must block $M'$, contradiction. 

    Suppose a couple $C_i$ blocks $M$ with two distinct hospitals $h_j\ne h_l$. 
    
    Then, if the couple (or at least one member) is matched in $M$ to some hospitals already, then it holds that in $I'$, either $c_{2i-1}$ or $c_{2i}$ prefer the hospitals $h_j^2$ or $h_l^2$ to their assignments, whenever the edges to $h_j^2$ and $h_l^2$ exist. If neither of $h_j,h_l$ are a common hospital for the couples, then both edges exists. Hence, $(c_{2i-1},h_j^2)$ or $(c_{2i},h_l^2)$ blocks $M'$, contradiction. If $h_j= h$ (where $\{h\}=A(c_{2i-1})\cap A(c_{2i})$), then the edges $(c_{2i-1},h_j^2),(c_{2i},h_l^2)$ exist and the same argument works. If, $h_l=h$, then the couple is connected, $c_{2i-1}$ only applies to $h$ and is matched there in $M$, so the blocking application would be $(C_i,h,h)$, contradicting our assumption on $h_j\ne h_l$. 
    
    Assume now that the couple $C_i$ is unassigned in $M$. Then, if neither of them are matched anywhere in $M'$, them the couple is separable and both $(c_{2i-1},h_j^2)$, $(c_{2i},h_l^2)$ block $M'$, contradiction. If the edge $(c_{2i},c_{2i-1})$ is in $M'$, then as $h_j\ne h_l$, one of $h_j^2,h_l^2$ is better for the corresponding member of the couple, so the edge $(c_{2i-1},h_l^2)$ or $(c_{2i},h_l^2)$ (both exist) blocks. Otherwise, one of the couple nodes must receive the worst partner $b_{2i}$ or $b_{2i-1}$, respectively (if both of them would be matched to their favourite partner $a_{2i-1},a_{2i}$, then $b_{2i}$ would be unmatched, so $(a_{2i},b_{2i})$ would block), or remain alone. This couple node, say $c_{2i}$ prefers $h_l^2$, if it exists, in which case $(c_{2i},h_l^2)$ blocks $M'$ contradiction. If the edge to $h_l^2$ does not exists, then $c_{2i}$ is the better member of the couple for their common hospital $h=h_l$ and $C_i$ is of type-b. But then, $c_{2i-1}$ only applies to $h$, so $h_j=h$, contradicting our assumption on $h_j\ne h_l$. 

    Suppose a separable or half-separable couple blocks with a single hospital (so an application $(C_i,\emptyset ,h_l)$ or $(C_i,h_j,\emptyset )$ ). Then, the member who is matched in the blocking coalition either has only edges to hospitals or $(c_{2i-1},c_{2i})$, which is his worst. Hence, the hospital node $h_l^2$ or $h_j^2$ is better for the corresponding node than his partner in $M'$, if there is any and as the application blocks $M$, $(c_{2i-1},h_j)$ or $(c_{2i},h_l)$ blocks $M'$, contradiction.

    Finally, suppose that a couple $C_i$ blocks with a double application to their common hospital $h$. 
    Then, $C_i$ is a couple of type-b or -c.
    
    If it of type-c, then as $h$ is the worst for both members, $C_i$ must be unassigned. Also, as $C_i$ blocks $M$ with $h$, there is at least one worse doctor than $c_{2i-1}$ and a distinct doctor that is worse than $c_{2i}$. So there are at least two doctors that are worse than $c_{2i}$ at $h$, so at least one of them matched to $h^1$ in $M'$. As $M'$ is stable and $(c_{2i-1},h^2)$ does not block, we get that $c_{2i-1}$ is either matched to $c_{2i}$ or to $a_{2i-1} $, in the latter case $c_{2i}$ is matched to $b_{2i-1}$. In both cases, we obtain that $(c_{2i},h^1)$ blocks. 

    If $C_i$ is of type-b, then if $C_i$ is unassigned in $M$, then by similar reasoning we obtain that there must be a doctor worse than $c_{2i}$ matched to $h^1$ and a doctor worse than $c_{2i-1}$ matched to $h^2$ in $M'$. For $(c_{2i-1},h^2)$ not to block, we get that $c_{2i}$ must be matched to $b_{2i-1}$, so $(c_{2i},h^1)$ blocks $M'$, contradiction. Finally, if $C_i$ was assigned, then we know that $c_{2i-1}$ must be at $h$ and so $c_{2i}$ is not at $h$, but prefers $h$ to $M'(c_{2i})$. As $C_i$ blocks $M$ with $(h,h)$, we get that there is a distinct doctor from $c_{2i-1}$ at $h$, who is worse than $c_{2i}$. In particular, there are at least two worse doctors at $h$ than $c_{2i}$, so the edge $(c_{2i},h^1)$ blocks $M'$, contradiction again.

    Therefore we conclude that $M$ is stable. It is also clear that $M$ can be constructed in $M'$ in time linear in the number of edges.
    \claimproofend
    
The theorem then follows from Claims \ref{claim2} and \ref{claim3} and the fact that in an \fixt\ instance, a stable solution can be found in $\mathcal{O}(|E|)$ time, if one exists \cite{IS07}.
\end{proof}
We next make a simple observation about separable couples.
If every couple were separable, then by treating each member as just a single doctor and running the doctor-oriented Gale-Shapley algorithm \cite{GI89} we could always obtain a stable matching. To see this, just observe that no matter how the two members (or just one) get matched to acceptable hospitals, by separability, it will be acceptable for the couples. Furthermore, if a couple $C$ were to block the matching with a pair of hospitals $(h,h')$, or with a pair $(h,\emptyset)$ or $(\emptyset,h)$, then at least one member of the couple has to consider the corresponding hospital better than his/her current assignment according to his own preference list. Hence, he applied to that hospital, but got rejected, so the hospital is still filled with better doctors, a contradiction. This simple observation was also proven in \cite{klaus2005stable}, although only in a framework that does not allow both members of a couple to apply to the same hospital.
This demonstrates that separable couples are easier to handle and indeed it is one of the main reasons that we consider mostly connected couple types (type-b and type-c). We could extend our technique to other separable and half-separable couple types, but it would make both the statement and the proof of Theorem \ref{thm:12abc} even more technical.

We next discuss some interesting corollaries of Theorem \ref{thm:12abc}, which indicate that special cases of \hrc\ and of \hrcdual, and \multiSM, are solvable in polynomial time.

\begin{corollary}
\hrc\ is solvable in polynomial time if each
couple only applies to one pair of hospitals.
\label{cor:hrclength1}
\end{corollary}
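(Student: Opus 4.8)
\emph{The plan.} I would show that the hypothesis already forces every couple into the scope of Theorem~\ref{thm:12abc}. Concretely, I claim that a couple $C_i$ with $|A(C_i)|=1$ is automatically \resp, \subcomp, and of type-a or type-b; the corollary then follows by applying Theorem~\ref{thm:12abc}, which in fact yields an $\mathcal{O}(m)$-time algorithm.

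\emph{\resp ness and \subcomp ness are automatic.} If $A(C_i)=\{p\}$ is a singleton, then conditions (i) and (ii) in the definition of \resp ness quantify over two \emph{distinct} acceptable pairs that agree in one coordinate, so they hold vacuously for any choice of underlying individual preference orders. For \subcomp ness I would case on the shape of $p$ (the pair $(\emptyset,\emptyset)$ is excluded by definition, so these are all cases). If $p=(h_j,h_k)$ with $h_j,h_k\in H$, then $A(c_{2i-1})=\{h_j\}$, $A(c_{2i})=\{h_k\}$, and taking $A'(c_{2i-1})=A(c_{2i-1})$, $A'(c_{2i})=A(c_{2i})$ gives $A(C_i)=A'(c_{2i-1})\times A'(c_{2i})$. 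If $p=(h_j,\emptyset)$, then $A(c_{2i})=\emptyset$, and taking $A'(c_{2i-1})=\{h_j\}$, $A'(c_{2i})=\{\emptyset\}$ gives $A(C_i)=(A'(c_{2i-1})\times A'(c_{2i}))\setminus\{(\emptyset,\emptyset)\}$; the case $p=(\emptyset,h_k)$ is symmetric.

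\emph{Classifying $C_i$.} If $p=(h_j,h_k)$ with $h_j\ne h_k$, or $p$ has an $\emptyset$ coordinate, then $A(c_{2i-1})\cap A(c_{2i})=\emptyset$, so $C_i$ is of type-a. If $p=(h,h)$ for some $h\in H$, then $A(c_{2i-1})=A(c_{2i})=\{h\}$, so $C_i$ is connected with $A(c_{2i-1})\cap A(c_{2i})=\{h\}$, and the conditions defining type-b (``the worse member of the couple for $h$ applies only to $h$'') and type-c (``$h$ is the worst hospital for both members'') both hold vacuously, since in either case $A(c_{2i-1})\setminus\{h\}=A(c_{2i})\setminus\{h\}=\emptyset$; by the convention in Definition~\ref{def:coupletypes2} we regard $C_i$ as type-b. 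Hence every couple is \resp, \subcomp\ and of type-a or type-b, and Theorem~\ref{thm:12abc} finishes the proof.

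\emph{Main difficulty.} There is no substantive obstacle: the content is just the observation that a one-element preference list is trivially well-behaved. The only point requiring care is to enumerate the degenerate forms of the single acceptable pair — those with an $\emptyset$ coordinate, and the coincident-hospital case $p=(h,h)$ — and verify the \subcomp ness witness and the type classification in each, rather than treating only the generic case $p=(h_j,h_k)$ with $h_j\ne h_k$.
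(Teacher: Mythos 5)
Your proof is correct and follows essentially the same route as the paper's: classify the single acceptable pair by its shape, conclude that the couple is of type-a (disjoint hospitals or an $\emptyset$ coordinate) or type-b (the case $(h,h)$), and invoke Theorem~\ref{thm:12abc}. You are in fact slightly more thorough than the paper, which takes the \resp ness and \subcomp ness of a length-one list as evident and does not spell out the witnesses.
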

\begin{proof}
Let $C_i\in C$ be an arbitrary couple in $C$.  By our assumption, either (i) $A(C_i)=(h_j,\emptyset)$, (ii) $A(C_i)=(\emptyset,h_k)$, or (iii) $A(C_i)=(h_j,h_k)$, for some $h_j,h_k\in H$.  In cases (i) and (ii), clearly $C_i$ is of type-a.  In case (iii), $C_i$ is of type-a if $h_j\neq h_k$ and is of type-b if $h_j=h_k$.  The result then follows by Theorem \ref{thm:12abc}.
\end{proof}

\begin{corollary}
    The \hrcdual\ problem is solvable in polynomial time if each couple is \resp\ and \subcomp. Furthermore, there always exists a stable matching.
\label{cor:hrcdual}
\end{corollary}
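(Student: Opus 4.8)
The plan is to reduce the algorithmic part to Theorem~\ref{thm:12abc} and to obtain the existence part by observing that the \fixt\ instance constructed there is bipartite.

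First I would check that in a Dual Market every \resp\ and \subcomp\ couple is necessarily of type-a. By the definition of \hrcdual\ the hospital set splits as $H=H_1\cup H_2$ with $H_1\cap H_2=\emptyset$, and each acceptable pair $(h_j,h_k)\in A(C_i)$ satisfies $h_j\in H_1$ and $h_k\in H_2$; hence $A(c_{2i-1})\subseteq H_1$ and $A(c_{2i})\subseteq H_2$, so $A(c_{2i-1})\cap A(c_{2i})=\emptyset$, which is exactly the definition of type-a (and incidentally rules out types-b and -c, which require a non-empty common hospital). Theorem~\ref{thm:12abc} then immediately yields an $\mathcal{O}(m)$-time algorithm, establishing the first assertion.

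For the existence claim I would revisit the \fixt\ instance $I'$ produced by the reduction in the proof of Theorem~\ref{thm:12abc} and show that it is bipartite. Define a bipartition $(P,Q)$ by placing in $P$ the copies $h_j^1,h_j^2$ with $h_j\in H_2$, every single doctor $d_i$ with $A(d_i)\subseteq H_1$, every first couple member $c_{2i-1}$, and the connector nodes $b_{2i-1},a_{2i}$; and placing in $Q$ the copies $h_j^1,h_j^2$ with $h_j\in H_1$, every single doctor with $A(d_i)\subseteq H_2$, every second couple member $c_{2i}$, and the connector nodes $a_{2i-1},b_{2i}$. One then checks the few edge types of $I'$: a doctor--hospital-copy edge respects the split because each single doctor is confined to $H_1$ or to $H_2$; a couple-member--hospital-copy edge respects it precisely because the couple is type-a, so $A(c_{2i-1})\subseteq H_1$ and $A(c_{2i})\subseteq H_2$; the edge $(c_{2i-1},c_{2i})$ of a half-separable couple joins $P$ to $Q$; and the six edges $(c_{2i-1},a_{2i-1}),(a_{2i-1},b_{2i-1}),(b_{2i-1},c_{2i}),(c_{2i},a_{2i}),(a_{2i},b_{2i}),(b_{2i},c_{2i-1})$ of a connected couple's gadget form an even cycle whose $2$-colouring is exactly the one above. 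Hence every edge of $I'$ runs between $P$ and $Q$, so $I'$ is bipartite.

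Finally, a bipartite \fixt\ instance always admits a stable matching --- equivalently, a many-to-many stable matching with quota-bounded responsive preferences always exists, for instance by a deferred-acceptance argument in which the vertices on one side propose along their preference lists --- so the \fixt\ algorithm of \cite{IS07} returns a stable matching $M'$ of $I'$. Applying Claim~\ref{claim3} in the proof of Theorem~\ref{thm:12abc} then produces a stable matching of the original \hrcdual\ instance, which completes the proof. I expect the only non-routine step to be the edge-by-edge verification of bipartiteness; everything else is a direct appeal to results already established.
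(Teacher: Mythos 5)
Your proposal is correct and follows essentially the same route as the paper: type-a follows from the disjointness of $H_1$ and $H_2$, Theorem~\ref{thm:12abc} gives the algorithm, and existence comes from exhibiting a bipartition of the constructed \fixt\ instance (yours coincides with the paper's $H_1\cup R_2$ versus $H_2\cup R_1$ split together with the gadget nodes) and running deferred acceptance. Your edge-by-edge verification of bipartiteness and the explicit appeal to Claim~\ref{claim3} are just slightly more detailed than the paper's statement of the same argument.
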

\begin{proof}
Solvability in polynomial time follows immediately from Theorem \ref{thm:12abc} as in a Dual Market, each sub-responsive and sub-complete couple is of type-a.

Denote the two classes of hospitals in the Dual Market by $H_1$ and $H_2$. For each $i\in \{1,2\}$, denote the set of doctors (both single and a member of a couple), who only apply to hospitals in $H_i$ by $R_i$.

Furthermore, in relation to the \fixt\ instance $I'$ constructed in the proof of Theorem \ref{thm:resp_subcom}, in this case $A=H_1\cup R_2$ along with the neighbors $a_i,b_i$ of the couple nodes of $R_1$ in their gadget and $B=H_2\cup R_1$  along with the neighbors $a_i,b_i$ of the couple nodes of $R_2$ in their gadget give a bipartition of the nodes, such that neither of them spans any edges, so the constructed instance $I'$ is bipartite. Hence, a stable matching always exists and can be found by a Deferred Acceptance algorithm. 
\end{proof}

\begin{corollary}
\multiSM\ is solvable in polynomial time.
\label{cor:multiSM}
\end{corollary}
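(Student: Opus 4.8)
The plan is to reduce \multiSM\ to the restricted form of \hrc\ solved by Theorem~\ref{thm:12abc}. Given a multigraph $G=(V,E)$ (loops allowed) with capacities $b$ and node preference lists $\succ_v$, I would construct an \hrc\ instance $I$ with no single doctors, one hospital $h_v$ of capacity $b(v)$ for each $v\in V$, and one couple $C_e=(c_e^1,c_e^2)$ for each edge $e\in E$. If $e=(u,v)$ with $u\ne v$, set $A(c_e^1)=\{h_u\}$ and $A(c_e^2)=\{h_v\}$; if $e=(u,u)$ is a loop, set $A(c_e^1)=A(c_e^2)=\{h_u\}$. Since each couple's preference list has length one it is trivially \resp, and it is \subcomp\ and in fact connected because $A(C_e)=A(c_e^1)\times A(c_e^2)$. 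A couple coming from a non-loop edge has $A(c_e^1)\cap A(c_e^2)=\emptyset$, hence is of type-a; a couple coming from a loop has $A(c_e^1)\cap A(c_e^2)=\{h_u\}=A(c_e^1)=A(c_e^2)$, hence is of type-b irrespective of how the hospital orders its two members. Finally, each hospital $h_v$ inherits the list $\succ_v$, with each incident non-loop edge replaced by the corresponding member node and each incident loop $e$ replaced by the two consecutive entries $c_e^1\succ_{h_v}c_e^2$ (reindexing so that $c_e^1$ plays the role of the worse member required by Theorem~\ref{thm:12abc}). This construction is computable in time linear in the size of $G$.

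Next I would establish the correspondence: $N\subseteq E$ is a stable $b$-matching in $G$ if and only if $M=\{\,(c_e^1,h_u),\,(c_e^2,h_v):e=(u,v)\in N\,\}$ (with $h_u=h_v$ when $e$ is a loop) is a stable matching in $I$. This is a well-defined bijection because a couple member is acceptable at only one hospital, so couples are matched ``all or nothing'' and $|M(h_v)|$ equals the number of positions $N$ uses at $v$, whence feasibility transfers immediately from $q_v=b(v)$. For stability, note that $I$ has no single doctors, each couple $C_e$ can only block with the unique pair in $A(C_e)$, and if $C_e$ is matched to that pair then $e\in N$ and $e$ cannot block $N$; so the only configurations to analyse are an unmatched couple $C_e$ blocking with $(h_u,h_v)$ or with $(h_u,h_u)$. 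When $e$ is a non-loop edge this is precisely case~3(a) of the \hrc\ blocking definition, which (using that the members' nodes are absent from $M(h_u),M(h_v)$ and that $h_v$'s list mirrors $\succ_v$) unwinds to exactly the condition that $e$ blocks $N$. When $e$ is a loop, the five sub-cases in the definition of a blocking loop map onto cases 3(b)--3(d): two free posts matches 3(b), one free post together with a worse incident edge matches 3(c), and a full hospital with either two distinct worse incident edges or a worse incident loop in $N$ matches 3(d). Combining this with Theorem~\ref{thm:12abc}, which solves \hrc\ with \resp, \subcomp\ type-a and type-b couples in $\mathcal{O}(m)$ time, and noting $m=\mathcal{O}(|E|)$, yields a (linear-time, hence) polynomial-time algorithm for \multiSM.

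The main obstacle I expect is the bookkeeping in the loop correspondence, in particular verifying that the definition's separate clauses ``a worse loop $f=(u,u)\in M$'' and ``two worse incident edges'' both collapse to the single \hrc\ clause 3(d): here one uses that a loop $f$ in $N$ contributes two member nodes to $M(h_u)$ that, whenever $e\succ_u f$, are both ranked below both members of $C_e$ in $h_v$'s list, precisely because the two halves of $f$ are placed consecutively. One must also confirm that no spurious blocking pair arises in $I$ --- for instance that a connected couple can never block via a ``one member stays'' configuration (case~2), which holds since a couple matched to its unique acceptable pair cannot improve and an unmatched couple has both members unassigned. Everything else is a routine translation between the two blocking definitions.
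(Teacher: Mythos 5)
Your proposal is correct and follows essentially the same route as the paper: one hospital per node, one connected single-application couple per edge, with loops handled by an arbitrary tie-break placing the two members consecutively in the hospital's list so that non-loop edges give type-a couples and loops give type-b couples, after which Theorem~\ref{thm:12abc} applies. The paper leaves the verification of the blocking-pair correspondence as "straightforward", whereas you spell it out, but the construction and the appeal to Theorem~\ref{thm:12abc} are identical.
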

\begin{proof}
    We can consider such an instance $I$ as a \hrc\ instance $I'$, where each node $v$ with capacity $b(v)$ corresponds to a hospital with capacity $b(v)$ and each edge $e=(u,v)$ corresponds to distinct connected couples who have a single application $(u,v)$. The preferences of the hospitals are then created according to the corresponding node's preference lists over the adjacent edges. If there were any loops adjacent to $v$, then we break the tie between the two endpoints of a loop edge $e=(v,v)$ in an arbitrary way (it was one edge, so it was a single entry in $v$'s preference list), and the couple member corresponding to the better end of $e$ will be the better doctor for $v$. 
It is straightforward to verify that the stable matchings of this \hrc\ instance are exactly the stable matchings in $I$.

    Furthermore, in this created instance, every couple is connected and is either of type-a or of type-b, so by Theorem \ref{thm:12abc}, we deduce that the problem can be solved in polynomial time. 
\end{proof}

We can also show that a variant of the Rural Hospital theorem extends to this framework. 
\begin{corollary}
    In an \hrc\ instance where each couple is of type-a, -b or -c, it holds that in every stable matching, \\
    (i) the same set of single doctors are matched, \\
    (ii) every hospital is assigned the same number of doctors, and \\(iii) if a hospital is undersubscribed in one stable matching, then it is assigned the same set of doctors in all stable matchings. 
\end{corollary}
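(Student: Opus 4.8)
The plan is to carry the Rural Hospitals Theorem for \fixt\ across the reduction used to prove Theorem \ref{thm:12abc}. Recall that for \fixt\ it is known that in every stable matching each agent is incident to the same number of matched edges, and that any agent that is not saturated in some stable matching is incident to precisely the same set of matched edges in every stable matching. Given an \hrc\ instance $I$ with couples of type-a, -b or -c, let $I'$ be the \fixt\ instance constructed from it in the proof of Theorem \ref{thm:12abc}, where each hospital $h_j$ is split into a node $h_j^1$ of capacity $q_j-1$ and a node $h_j^2$ of capacity $1$. I would first note that the translations of Claims \ref{claim2} and \ref{claim3} are faithful in the relevant bookkeeping: every stable matching $M$ of $I$ lifts to a stable matching $M'$ of $I'$ in which a single doctor $d_i$ is matched exactly when it is matched in $M$ and in which $h_j^1$ and $h_j^2$ together hold the $|M(h_j)|$ doctors of $M(h_j)$, and conversely every stable matching of $I'$ projects back to a stable matching of $I$ with the analogous data; both directions are read off from the two constructions.

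Parts (i) and (ii) then follow directly from the degree-invariance half of the \fixt\ theorem. The node $d_i$ has capacity $1$, so its matched/unmatched status is the same in every stable matching of $I'$, which by faithfulness gives (i). And $|M(h_j)| = |M'(h_j^1)| + |M'(h_j^2)|$, where each summand is individually invariant over the stable matchings of $I'$, which gives (ii).

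For (iii), use (ii) to restrict attention to a hospital $h_j$ that is undersubscribed in \emph{every} stable matching of $I$. A useful structural fact is that in any stable matching $M'$ of $I'$ a nonempty $h_j^2$ forces $h_j^1$ to be full, since every doctor adjacent to $h_j^2$ is also adjacent to $h_j^1$ and strictly prefers $h_j^1$; combining this with (ii) shows that $h_j^2$ is empty in every stable matching of $I'$, so that $M(h_j)$ is exactly the doctor set of $h_j^1$, and moreover that every neighbour of $h_j^1$ is matched, in every stable matching of $I'$, to something at least as good as $h_j^1$ in its own preference order. When the deficit $q_j - |M(h_j)|$ is at least $2$ the node $h_j^1$ is itself unsaturated, and the \fixt\ theorem immediately fixes its doctor set, hence $M(h_j)$, across all stable matchings. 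The remaining case, in which $h_j$ is undersubscribed by exactly one and $h_j^1$ is therefore full, is where I expect the real work to lie: the \fixt\ invariance result says nothing about a saturated node, so one must argue separately that the doctor set at a hospital with a single permanently free seat is the same in every stable matching. The intended route is an alternating-cycle argument on $M \triangle M''$ for two stable matchings $M, M''$ of $I$: first upgrade (i) to the statement that every doctor --- including every member of a couple --- has the same matched-degree in all stable matchings (using the capacities and the gadget structure of $I'$), so that $M \triangle M''$ decomposes into alternating cycles; then show, using stability together with \resp ness and \subcomp ness to control the couples, that no hospital undersubscribed in both $M$ and $M''$ can lie on such a cycle, whence $M(h_j) = M''(h_j)$. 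Verifying this upgraded invariance for couple members, and checking the couple cases of the ``no undersubscribed hospital on an alternating cycle'' step, is the main obstacle; it is comparatively painless in the Dual Market specialisation, where $I'$ is bipartite (Corollary \ref{cor:hrcdual}) and one can appeal instead to the lattice of stable matchings and a doctor-optimal element.
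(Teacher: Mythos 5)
Your handling of parts (i) and (ii), and your reduction of part (iii) to the situation where $h_j^2$ is empty in every stable matching of $I'$, match the paper. But part (iii) has a genuine gap: you explicitly leave open the case where $h_j$ is undersubscribed by exactly one, so that $h_j^1$ is saturated and the \fixt\ invariance theorem gives you nothing, and the ``intended route'' you sketch for that case is not a proof. That route needs two things you do not establish: first, that every couple member (not just every single doctor) has the same matched-degree in all stable matchings of $I$ --- this does not follow from part (i) and extracting it from the gadgets is itself nontrivial; second, that $M\triangle M''$ decomposes in $I$ into alternating structures along which a blocking-pair argument can be run --- but in $I$ a couple occupies two hospital slots simultaneously (possibly at the same hospital), so the symmetric difference is not a union of alternating cycles in any standard sense, and the couple cases of the ``no undersubscribed hospital on a cycle'' step are precisely the substance of the proof. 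The closing appeal to the Dual Market lattice does not help, since the corollary is not restricted to Dual Markets.

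The paper closes this gap by running the cycle argument inside $I'$ rather than $I$: by degree invariance the symmetric difference $M_1'\oplus M_2'$ is Eulerian, hence a union of edge-disjoint cycles; along any such cycle the preferences must be cyclic, since no edge of the cycle blocks either stable matching, so if a cycle meets $h^1$ then some doctor node $v$ on it prefers $h^1$ to its partner in $M_1'$ or $M_2'$. Translating $v$ back to a doctor $r$ in $I$ yields a blocking pair of the corresponding stable matching: immediately if $r$ is single; via sub-responsiveness and sub-completeness if $r$ is a couple member whose partner is matched; and via a short analysis of how the cycle traverses the couple's gadget if the couple is unmatched. This argument works uniformly for every undersubscribed hospital, including your hard saturated-$h_j^1$ case, which is why the paper never needs the set-invariance half of the \fixt\ Rural Hospitals Theorem that you invoke for the deficit-at-least-two case.
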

\begin{proof}
First, we note that by \cite[Theorem 2.1]{IS07} for \fixt, the \fixt\ instance $I'$ created in the proof of Theorem \ref{thm:12abc} satisfies the property that each node has the same number of incident edges assigned to it in any stable matching. As the nodes corresponding to single doctors are only adjacent to hospital nodes, and the nodes of the hospitals are only adjacent to doctor nodes, parts (i) and (ii) of the theorem immediately follow.

For part (iii), let $M_1,M_2$ be two stable matchings in $I$, and let $h$ be a hospital that is undersubscribed in $M_1$ (and in $M_2$ by part (ii)). Let $M_1',M_2'$ be the corresponding stable matchings in $I'$, respectively. In both of them, $h^2$ is unmatched, as $h$ is undersubscribed in $M_1,M_2$ in $I$. By \cite[Theorem 2.1]{IS07}, each of $M_1',M_2'$ contains the same number of incident edges at each node, so their symmetric difference $M_1'\oplus M_2'$ is an Eulerian graph. Hence, it is a union of edge-disjoint cycles.

Suppose for a contradiction that $h$ is assigned different sets of doctors in $M_1$ and $M_2$.  Then there is a cycle $C$ in $M_1'\oplus M_2'$ that is incident to $h^1$. As none of the edges of $C$ block $M_1'$ or $M_2'$ in $I'$, it follows that the preferences in $C$ must be cyclic, that is each node of $C$ prefers the next node to the previous one in one of the orientations of the cycle. This implies that there is a node $v$ of $C$ that prefers $h^1$ to its partner in $M_1'$ or $M_2'$. 
Assume by symmetry that $v$ prefers $h^1$ to its partner in $M_1'$. This node $v$ corresponds to a doctor $r$ (either single or a member of a couple in $I$).

If $r$ is single, clearly $(r,h)$ blocks $M_1$ in $I$, since $r$ is unmatched in $M_1$ or prefers $h$ to $M_1(r)$ in $I$, a contradiction.  Hence suppose that $r=c_k$ is a member of a couple $(c_k,c_{k'})$, without loss of generality.  If $c_{k'}$ is assigned in $M_1$, say to $h'$ (possibly $h=h'$), then, since $c_k$ prefers $h^1$ to $M_1'(c_k$), by the \resp ness and \subcomp ness of the preferences in $I$, it follows that $(c_k,c_{k'})$ blocks $M_1$ with $(h,h')$ in $I$, a contradiction.

Now suppose that the couple $(c_k,c_{k'})$ is unmatched in $M_1$. Then, each member of the couple is matched within their gadget in $M_1'$, and the cycle $C$ must pass through the couple's gadget in $I'$ in such a way that $c_k$ has their worst choice and $c_{k'}$ has their best choice.  As the preferences are cyclic in $C$, it follows 
that there is a hospital $h'$ that has a worse doctor than $c_{k'}$ in $M_1'$. If $h\ne h'$, it follows that $(c_k,c_{k'})$ blocks $M_1$ with $(h,h')$ in $I$, a contradiction. If $h=h'$, then $h$ has a free place and a worse doctor than $c_{k'}$, so $(c_k,c_{k'})$ blocks $M_1$ with $(h,h)$ in $I$ again, a contradiction. 
\end{proof}

\begin{remark}
    It does not hold that the same set of doctors are matched in every stable matching, even if each couple is of type-a, -b or -c. Take an instance with a single hospital $h$ with preference $c_1\succ c_3\succ c_4 \succ c_2$ and capacity 2, and connected couples $C_1=(c_1,c_2)$, $C_2=(c_3,c_4)$, each of which finds acceptable only the pair $(h,h)$. Then, no matter which couple gets the two positions at $h$, we obtain a stable matching.  Note that both couples are of type-b.
\end{remark}

\section{Hardness Results}\label{sec:hardness}
In this section we provide hardness results for several variants of \hrc.  In Section \ref{sec:respsubhard}, we show that \hrc\ is NP-hard even for \resp\ and \subcomp\ preference lists, for unit hospital capacities, and even in the presence of additional restrictions.  In Section \ref{sec:dualhard}, we show that \hrcdual\ is NP-hard even for unit hospital capacities, and even if every preference list is of length 3 and the preferences of single doctors, couples and hospitals are derived from \emph{master lists} \cite{IMS08} (that is, uniform strict rankings of all hospitals, all acceptable pairs of hospitals, and individual doctors, respectively).  Finally, in Section \ref{sec:inapprox}, we show that {\sc min bp hrc}, the problem of finding a matching with the minimum number of blocking pairs, given an instance of \hrc, is very hard to approximate even when each hospital has unit capacity and each couple finds only one pair of hospitals acceptable.
\subsection{Sub-responsive and \subcomp\ preferences}
\label{sec:respsubhard}
It might be tempting to believe that \resp\ and \subcomp\ preferences are sufficient to guarantee the polynomial-time solvability of \hrc, since they allow us to separate each couple's joint preference list over pairs over hospitals into two different preference lists over single hospitals. However, it turns out that these assumptions are unfortunately not enough.

In this section we show that \hrc\ remains NP-hard even with \resp\ and \subcomp\ couple preferences.  We provide a reduction from {\sc com smti}, which is the problem of deciding, given an instance of {\sc Stable Marriage with Ties and Incomplete lists}, whether a complete stable matching (i.e., a stable matching that matches all agents) exists (this problem is defined formally in \cite[Section 1.3.5]{Man13}, for example).
\begin{theorem}
\label{thm:respNPc}
\hrc\ is NP-hard even if each couple has \resp\ and \subcomp\ preferences. This holds even with unit hospital capacities, and even if each preference list is of length at most 4.  
\end{theorem}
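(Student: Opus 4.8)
The plan is to reduce from \textsc{com smti}, the problem of deciding whether an instance of Stable Marriage with Ties and Incomplete lists admits a complete (weakly) stable matching; this is NP-hard even under strong restrictions, for instance when ties occur on one side only, every tie has length two, and all preference lists are short. Let $J$ be such a restricted instance, with men $U$, women $W$, where only the women's lists contain ties. I would construct an \hrc\ instance $I$ in which each man becomes a single doctor, each woman $w$ becomes a hospital $h_w$ of capacity $1$, and preference lists are transcribed directly wherever the underlying agent's list is strictly ordered. Since the couple-free case of \hrc\ is just \hr, which is tractable, all of the difficulty must be --- and will be --- concentrated in how the ties are encoded by couples.

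For each woman $w$ whose list contains a tie $\{u,u'\}$, I would attach a constant-size gadget consisting of $h_w$, a bounded number of auxiliary unit-capacity hospitals, and a single \emph{connected} couple $C_w$ in which each member has exactly two acceptable hospitals. Then $C_w$'s joint list has length (at most) four, and, picking the natural underlying individual preferences, $C_w$ is automatically \resp\ and \subcomp. The gadget is to be designed so that: (a) in any stable matching of $I$, the way $C_w$ and the auxiliary hospitals are matched is forced into one of two configurations, corresponding exactly to the two resolutions of the tie (``$u$ is matched to $w$'' versus ``$u'$ is matched to $w$''); (b) the gadget \emph{protects} whichever resolution is selected --- it absorbs exactly the potential blocking pair that weak stability tolerates at a tie, so that no analogue of the ``wrong-direction'' blocking pair survives; and (c) because $C_w$ is connected, both of its members must be matched, so if the rest of the instance forces the gadget not to be completable, some edge blocks; this is the mechanism by which the \emph{completeness} requirement of \textsc{com smti} is transferred to $I$. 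With a suitably restricted base instance $J$ and by limiting the applicants of each auxiliary hospital, all single-doctor and hospital lists can be kept within length four. Since a weakly stable matching always exists in an \textsc{smti} instance, the construction will yield: $J$ has a complete weakly stable matching if and only if $I$ has a stable matching.

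Correctness has two directions, each a case analysis. Given a complete weakly stable matching $M$ of $J$, I build a matching $M_I$ of $I$ that keeps every man--woman pair and places each $C_w$ (and its auxiliary hospitals) in the configuration corresponding to how $M$ resolves the tie at $w$; I then check that no single doctor blocks with a hospital, no couple blocks with a pair of hospitals, and no edge inside any gadget blocks --- the only non-routine point being that the blocking pair weak stability permits at a tie is killed precisely because the gadget is in its protected state. Conversely, from a stable matching $M_I$ of $I$, the gadget analysis shows each $C_w$ sits in one of its two intended configurations, so reading off the man--woman pairs gives a well-defined and \emph{complete} assignment $M$ for $J$; and any pair weakly blocking $M$ in $J$ would lift to a blocking pair of $M_I$ (a single doctor with a hospital, or the couple of the relevant gadget with a pair of hospitals), a contradiction.

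The main obstacle is the design of the gadget. Because \resp\ and \subcomp\ couples are extremely rigid --- a connected couple's acceptable set is forced to be the full product $A(c_{2i-1}) \times A(c_{2i})$, and its joint ranking must refine coordinate-wise improvement --- one cannot simply write down a couple whose preference list directly encodes an arbitrary binary choice. The gadget must instead coax the required ``exactly two stable states, with neither preferred by stability'' behaviour out of a product-structured, monotone couple list of only four entries, by threading the two states through shared auxiliary hospitals, all while respecting the length-four bound, excluding spurious stable states within the gadget, and ensuring that unmatched-ness anywhere propagates to a blocking pair. Making all of these requirements hold simultaneously is the delicate part of the proof; the rest is bookkeeping.
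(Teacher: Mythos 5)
Your high-level plan coincides with the paper's: both reduce from {\sc com smti} restricted to one-sided ties of length two and short lists, turn men into single doctors and women into unit-capacity hospitals, and encode each tie by a small gadget built around one connected, \resp\ and \subcomp\ couple whose two members each have two acceptable hospitals (hence a joint list of length $4$). The paper's tie gadget is exactly what you ask for: three hospitals $h_w^1,h_w^2,h_w^3$ and a couple with list $(h_w^2,h_w^2)\succ(h_w^1,h_w^2)\succ(h_w^2,h_w^3)\succ(h_w^1,h_w^3)$, where $h_w^1$ ranks only $c_w^1\succ d_u$, $h_w^3$ ranks only $c_w^2\succ d_{u'}$, and $h_w^2$ ranks $c_w^1\succ c_w^2$; the couple then sits at $(h_w^1,h_w^2)$ or $(h_w^2,h_w^3)$, freeing exactly one of $h_w^1,h_w^3$ for the corresponding man, and the interposed couple member (preferred by the hospital to the rejected man) kills the wrong-direction blocking pair. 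However, you explicitly leave this gadget unconstructed and defer it as ``the delicate part,'' so as written the proposal is a proof outline rather than a proof.

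Beyond that, there is a concrete gap in how you transfer \emph{completeness}. Your only stated mechanism is that the tie couples are connected, so both of their members must be matched. That forces nothing about the \emph{single doctors}: a stable matching of $I$ could leave a man $u$ unmatched (all his acceptable hospitals filled by doctors they prefer), and reading off the man--woman pairs would yield a weakly stable but \emph{incomplete} matching of $J$, so the backward direction of your equivalence fails. The paper closes this with a separate per-man enforcer: a hospital $f_u$ appended last to $d_u$'s list with $f_u$ ranking $d_u$ first, together with three dummy couples and three dummy hospitals $h_u^1,h_u^2,h_u^3$ arranged cyclically so that they admit no stable configuration unless the couple $(c_u^1,c_u^2)$ occupies $(f_u,h_u^2)$ --- which is possible only when $d_u$ has vacated $f_u$ by being matched to a ``real'' hospital. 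Without some such device (or an argument that it is unnecessary), your reduction does not establish NP-hardness. You would also need to check that adding it preserves the length-$4$ bound, as the paper does.
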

\begin{proof}
 We reduce from {\sc com smti}, which is shown to be NP-complete in the Appendix of \cite{MM10} even if every man's list is strictly ordered and of length at most 3, and every woman's list is either strictly ordered and of length at most 3, or is a tie of length 2.  Hence let $I$ be an instance of this restriction of {\sc com smti}.

Let $U$ be the set of men in $I$ and let $W$ be the set of women in $I$.  Let $G=(U\cup W,E)$ denote the underlying bipartite graph of $I$, and let $m_I$ denote $|E|$.  Let $W^s$ denote the set of women in $W$, whose preference lists are strictly ordered in $I$, and let $W^t=W\setminus W^s$.  Then, every woman in $W^t$ has a preference list in $I$ that is a tie of length 2. 

We create an instance $I'$ of \hrc\ as follows. For each woman $w\in W^s$, we create a hospital $h_w$  with capacity 1. For each woman $w\in W^t$, we create three hospitals $h_w^1,h_w^2,h_w^3$ each with capacity 1, and a couple $(c_w^1,c_w^2)$. Finally, for each man $u\in U$, we create a single doctor $d_u$, dummy hospitals $h_u^1,h_u^2,h_u^3$ each with capacity 1, a dummy hospital $f_u$ with capacity 1, and three dummy couples $(c_u^1,c_u^2)$, $(c_u^3,c_u^4),(c_u^5,c_u^6)$.

The preferences lists for the agents created are shown in Figure \ref{fig:hrc-resp-subc}.
\begin{figure}[t]
\[
\begin{array}{rll}
d_u : & P(u)\succ f_u & (u\in U) \\
(c_u^1,c_u^2) : & (f_u,h_u^2)\succ (h_u^1,h_u^2) & (u\in U) \\
(c_u^3,c_u^4) : & (h_u^2,h_u^3) & (u\in U) \\
(c_u^5,c_u^6) : & (h_u^3,h_u^1) & (u\in U) \\
(c_w^1,c_w^2): & (h_w^2,h_w^2)\succ (h_w^1,h_w^2)\succ (h_w^2,h_w^3)\succ (h_w^1,h_w^3)  & (w\in W) \\
\\
h_w : & P(w) & (w\in W^s) \\
h_w^1 : & c_w^1\succ d_u & (w\in W^t) \\
h_w^2 : & c_w^1\succ c_w^2 & (w\in W^t)\\
h_w^3: & c_w^2\succ d_{u'} & (w\in W^t) \\
h_u^1 : & c_u^6\succ c_u^1 & (u\in U)\\
h_u^2 : & c_u^2\succ c_u^3 & (u\in U)\\
h_u^3 : & c_u^4\succ c_u^5 & (u\in U)\\
f_u: & d_u\succ c_u^1 & (u\in U) 
\end{array}
\]
\caption{Preference lists in the constructed instance of \hrc\ with \resp\ and \subcomp\ preferences.}
\label{fig:hrc-resp-subc}
\end{figure}
Here, for a man $u\in U$, $P(u)$ denotes $u$'s preference list in $I$, with every woman $w$ replaced by a hospital corresponding to $w$ as follows: if $w\in W^s$, then we replace $w$ by $h_w$, whilst if $w\in W^t$ and $u$ precedes $w$'s other neighbour $u'$ in some fixed ordering of $U$ then we replace $w$ by $h_w^1$, otherwise we replace $w$ by $h_w^3$.
Similarly, for a woman $w\in W^s$, $P(w)$ denotes $w$'s preference list in $I$, with every man $u$ replaced by doctor $d_u$.  For a woman $w\in W^t$ with neighbors $u$ and $u'$, with $u$ coming before $u'$ in the fixed ordering of $U$, $h_w^1$ only ranks $d_u$ and $h_w^3$ only ranks $d_{u'}$.  Since $P(u)$ and $P(w)$ both have length at most 3 given the restricted version of {\sc com smti} that we are reducing from, it is straightforward to verify that the preference list of each single doctor and couple is of length at most 4, whilst the preference list of each hospital is of length at most 3.

This concludes the construction of $I'$. It is straightforward to verify that each couple has a \resp\ and \subcomp\ preference list.

\begin{claim}
    If $I$ has a complete stable matching, then $I'$ has a stable matching.
\end{claim}
\claimproofstart
    Let $M$ be a complete stable matching in $I$. We create a matching $M'$ in $I'$ as follows. For each $(u,w)\in M$ we add the pair $(d_u,h_w)$ to $M'$, if $w\in W^s$, and the pair $(d_u,h_w^1)$ or $(d_u,h_w^3)$ to $M'$, if $w\in W^t$ (depending on which one of these two hospitals is acceptable to $d_u$). Then, for each $w\in W^t$, if $h_w^1$ has a single doctor $d_u$, we match the couple $(c_w^1,c_w^2)$ to $(h_w^2,h_w^3)$, otherwise we match the couple to $(h_w^1,h_w^2)$. Finally, for each $u\in U$, we match $(c_u^1,c_u^2)$ to $(f_u,h_u^2)$, and $(c_u^5,c_u^6)$ to $(h_u^3,h_u^1)$. 

    We claim that $M'$ is stable in $I'$. Suppose first that a single doctor $d_u$ blocks $M'$ in $I'$ with a hospital $h_w$ or $h_w^j$, $j\in \{ 1,3\}$. It cannot be $f_u$, as $M$ was complete, so $d_u$ is at a better hospital. Hence the blocking hospital corresponds to a woman that $u$ prefers to his partner in $M$. Also, for $h_w^j$ with $j\in \{ 1,3\}$, either this hospital is assigned the single doctor $d_u$ or it is assigned a member of the couple $(c_w^1,c_w^2)$, whom it prefers to $d_u$. Hence, such a hospital cannot block with $d_u$. Therefore the blocking hospital must be $h_w$ with $w\in W^s$. But then, we obtain that the edge $(u,w)$ must have blocked $M$ in $I$, a contradiction. 

    Suppose now that a couple $(c_w^1,c_w^2)$ blocks $M'$ in $I'$. As $h_w^2$ has capacity 1, blocking can only happen if we assigned them to $(h_w^2,h_w^3)$ and they block with $(h_w^1,h_w^2)$. But then, $h_w^2$ obtains its best doctor in $M'$, so this coalition cannot block $M'$ in $I'$ after all, a contradiction. 

    Finally, a couple $(c_u^3,c_u^4)$ also cannot block $M'$ in $I'$, because $h_u^2$ has a better doctor. The other couples $(c_u^1,c_u^2)$ and $(c_u^5,c_u^6)$ obtain their best option in $M'$.

    Hence, no doctor or couple can block $M'$ in $I'$, so $M'$ is stable in $I'$. 
\claimproofend

\begin{claim}
    If $I'$ has a stable matching, then $I$ has a complete stable matching. 
\end{claim}
\specialclaimproofstart
    Let $M'$ be a stable matching in $I'$. First we show that each doctor $d_u$ is matched to some $h_w$ or $h_w^j$ hospital. Suppose this is not the case for some doctor $d_u$.  Then, as $d_u$ is the first choice of $f_u$, it follows that $(d_u,f_u)\in M'$. But then, there is no stable allocation of $(c_u^1,c_u^2), (c_u^3,c_u^4),(c_u^5,c_u^6)$ to $h_u^1,h_u^2,h_u^3$. 
    Indeed, it is easy to verify that each of the three possible matchings is blocked by a couple.

    Next, we show that there is no woman $w\in W$ such that the hospital(s) corresponding to $w$ receive more then one doctor $d_u$ in $M'$. For a woman in $W^s$, this is trivial. Let $w\in W^t$ and suppose to the contrary that both $h_w^1$ and $h_w^3$ are assigned a single doctor in $M'$. 
    Then, $(c_w^1,c_w^2)$ cannot be matched, so the couple blocks $M'$ in $I'$ with $(h_w^2,h_w^3)$, contradiction. 

    Hence $M=\{(u,w)\in U\times W : (d_u,h_w)\in M'\vee (d_u,h_w^1)\in M'\vee (d_u,h_w^3)\in M'\}$ is a complete matching in $I$.  It remains to show that $M$ is stable in $I$. Suppose that $(u,w)$ blocks $M$ in $I$. Then, $w\in W^s$ and the pair $(d_u,h_w)$ blocks $M'$ in $I'$, a contradiction again. Hence $M$ is a complete stable matching in $I$.
\specialclaimproofend


\end{proof}


\subsection{Dual markets}
\label{sec:dualhard}
In this section we present our hardness results for \hrcdual, even in the case when both sides have master lists.  We begin by showing that, given an instance of \hrcdual, it is NP-complete to decide whether a \emph{complete} stable matching (i.e., a stable matching in which all single doctors and couples are matched) exists.

\begin{theorem}
\label{33-COM-HRC}
Given an instance of \hrcdual, the problem of deciding whether there exists a complete stable matching is NP-complete. The result holds even if all preference lists have length 3 and all hospitals have capacity 1.
\end{theorem}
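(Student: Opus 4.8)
The plan is to reduce from the NP-complete problem \sat, producing an instance of \hrcdual\ in which all preference lists have length at most $3$ and all hospitals have capacity $1$. Membership in NP is immediate: given a set of doctor--hospital pairs one checks in polynomial time that it is a matching, that every single doctor and couple is matched, and that no single doctor, couple or hospital forms a blocking pair; so the work is entirely in the hardness direction. The guiding principle of the construction is that the only communication between the two sides $H_1,H_2$ of the market is through couples, so I would keep all ``variable machinery'' inside $H_1$ and all ``clause machinery'' inside $H_2$.

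For each variable $x$ of the formula I would install a \emph{variable gadget}: a cycle of hospitals in $H_1$ (each of capacity $1$) together with single doctors whose acceptable sets lie entirely inside $H_1$, with preferences arranged around the cycle so that the gadget has exactly two stable local configurations, interpreted as $x=\mathtt{true}$ and $x=\mathtt{false}$. The cycle exposes four designated ``port'' hospitals $v$, one per occurrence of $x$; in the $\mathtt{true}$ state the two ports of the positive occurrences are left empty while the two negative ones are filled by gadget doctors, and symmetrically in the $\mathtt{false}$ state. Each port is ranked only by its (at most two) neighbouring gadget doctors together with the single couple member that uses it, which keeps all lists of length at most $3$, and the cyclic preferences are chosen so that no gadget doctor ever prefers a port it is supposed to have vacated in the current state.

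For each clause $c=(\ell_1\vee\ell_2\vee\ell_3)$ I would create three hospitals $h_c^1,h_c^2,h_c^3\in H_2$ of capacity $1$, a \emph{selector} single doctor $s_c$ confined to $H_2$ with list $h_c^1\succ h_c^2\succ h_c^3$ (length $3$), and, for each slot $j$, a \emph{literal couple} $C_{c,j}=(p_{c,j},q_{c,j})$ in which $p_{c,j}$ applies only to the port of $\mathrm{var}(\ell_j)$'s gadget assigned to this occurrence and to a private dummy $a_{c,j}\in H_1$, while $q_{c,j}$ applies only to $h_c^j$ and to a private dummy $b_{c,j}\in H_2$, with $A(C_{c,j})=\{(\,\text{port}\,,b_{c,j})\succ(a_{c,j},h_c^j)\}$ and ports/dummies ranking these members consistently. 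By design this is a Dual Market: every couple has one member in $H_1$ and one in $H_2$, every single doctor lives wholly in $H_1$ (gadget doctors) or wholly in $H_2$ (selectors), all capacities are $1$, and all lists have length at most $3$. The intended correspondence is: $\ell_j$ is true $\iff$ its port is empty (the gadget is in the matching state) $\iff$ $C_{c,j}$ is matched to $(\,\text{port}\,,b_{c,j})$, leaving $h_c^j$ empty $\iff$ $s_c$ may be placed at $h_c^j$; whereas if $\ell_j$ is false the port is blocked, $C_{c,j}$ is forced onto $(a_{c,j},h_c^j)$, and $h_c^j$ becomes occupied (and $h_c^j$ ranks $q_{c,j}$ above $s_c$, so $s_c$ cannot dislodge it).

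The two directions then go as follows. From a satisfying assignment I set each variable gadget into the matching state, route every literal couple accordingly, and place each selector $s_c$ at its most-preferred empty $h_c^j$ (one exists since $c$ is satisfied); a short case analysis shows no blocking pair arises. Conversely, in any complete stable matching every variable gadget must occupy one of its two stable local states (any other local configuration contains an internal blocking pair), which yields a truth assignment; and since each $s_c$ is matched, some $h_c^j$ is empty, so the corresponding $q_{c,j}$ is away from $h_c^j$, which is only possible when $\ell_j$ is true, so every clause is satisfied. The step I expect to be the main obstacle is the design and verification of the variable gadget: exhibiting a cycle whose preference lists all have length at most $3$, whose hospitals all have capacity $1$, that has \emph{exactly} two stable local states, that exposes precisely the correct pair of empty ports in each state, and for which no gadget doctor ever blocks with a port it should have vacated --- together with the routine-but-lengthy check that the couplings between $H_1$ and $H_2$ add no further blocking pairs and never violate the Dual Market property. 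The rest (counting list lengths, feasibility, and the fact that $s_c$ cannot evict an occupied $h_c^j$) is straightforward bookkeeping.
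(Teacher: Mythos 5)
Your high-level architecture (reduce from \sat, two-state variable gadgets, a selector doctor per clause whose being matched certifies satisfaction, couples as the only channel between the two sides) matches the paper's, but the variable gadget you defer as ``the main obstacle'' is not merely hard to build --- as specified, it cannot exist. You require a gadget consisting solely of single doctors and capacity-$1$ hospitals inside $H_1$, with two stable local states in which \emph{different} port hospitals are left free of gadget doctors. Note first that for the false state to be stable, each port must rank its gadget doctor above the couple member $p_{c,j}$: otherwise the couple, sitting at its second choice with the private dummy $b_{c,j}$ empty (hence undersubscribed), blocks with $(\mathrm{port},b_{c,j})$. But then a port occupied by $p_{c,j}$ behaves, for every gadget-internal pair, exactly like an empty port, so the restriction of any stable matching of the whole instance to the gadget doctors is a stable matching of the ordinary \hr\ sub-instance formed by the gadget doctors and gadget hospitals (with ports truncated to their gadget doctors). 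The Rural Hospitals Theorem for \hr\ then forces every gadget hospital to be filled to the same level in \emph{all} such restrictions; in particular, a port cannot be occupied by a gadget doctor in one complete stable matching and vacated in another. Hence no assignment of preferences to your cycle can produce the two states you need, and the ``empty port'' signalling mechanism collapses.

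This is precisely the obstruction the paper's construction is designed around: its variable gadget is built out of \emph{couples} $(x_{4i+r},k_{4i+r})$ --- for which the Rural Hospitals Theorem fails --- each spanning both sides of the dual market, and in both stable states $T_i$ and $F_i$ every gadget hospital is full. The truth value is encoded by which of two perfect matchings the gadget selects, and the signal reaches the clause gadget not through an empty hospital but through the middle entry of the clause hospital's list $c_j^s: p_j^s\succ x(c_j^s)\succ s_j$: if the literal is false, the couple containing $x(c_j^s)$ sits at its third choice and would block with its second-choice pair unless $c_j^s$ takes its first choice $p_j^s$, which in turn shuts the selector $s_j$ out of $c_j^s$. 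If you want to keep your architecture you would have to rebuild the variable gadget from couples (or otherwise escape the Rural Hospitals Theorem); the clause side and the NP-membership and dual-market bookkeeping in your proposal are essentially sound.
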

\begin{proof}
The problem is clearly in NP, as a given assignment may be verified to be a complete, stable matching in polynomial time. 

We use a reduction from a restricted version of {\sc sat}.  More specifically, let \sat\ denote the problem of deciding, given a Boolean formula $B$ in CNF over a set of variables $V$, whether $B$ is satisfiable, where $B$ has the following properties: (i) each clause contains exactly 3 literals and (ii) for each $v_i\in V$, each of literals $v_i$ and $\bar{v_i}$ appears exactly twice in $B$. Berman et al.\ \cite{BKS03} showed that \sat\ is NP-complete. 

Let $B$ be an instance of \sat, where $V=\{v_0,v_1,\dots,v_{n-1}\}$ is the set of variables in $B$ and $C_B=\{c_1,c_2,\dots,c_m\}$ is the set of clauses in $B$.  Then for each $v_i\in V$, each of literals $v_i$ and $\bar{v_i}$ appears exactly twice in $B$. Also $|c_j|=3$ for each $c_j\in C_B$. (Hence $m=\frac{4n}{3}$.)
 
We construct an instance $I$ of \hrcdual\ using a similar reduction to that employed by Irving et al.\ \cite{IMO09} to show that {\sc com smti} is NP-complete even if each preference list is of length at most 3.

The set of doctors in $I$ is $X\cup K\cup P\cup S\cup T$, where $X=\cup_{i=0}^{n-1} X_i$, $X_i=\{x_{4i+r} : 0\leq r\leq 3\}$ ($0\leq i\leq n-1$), $K=\cup_{i=0}^{n-1} K_i$, $K_i=\{k_{4i+r} : 0\leq r\leq 3\}$ ($0\leq i\leq n-1$), $P=\cup_{j=1}^m P_j$, $P_j=\{p_j^r : 1\leq r\leq 6\}$ ($1\leq j\leq m$), $S=\{s_j : c_j\in C_B\}$ and $T=\{t_j : c_j\in C_B\}$. The doctors in $S\cup T$ are single and the doctors in $X\cup K\cup P$ are involved in couples.

The set of hospitals in $I$ is $Y\cup L\cup C_B'\cup Z$, where $Y=\cup_{i=0}^{n-1} Y_i$, $Y_i=\{y_{4i+r} : 0\leq r\leq 3\}$ ($0\leq i\leq n-1$), $L=\cup_{i=0}^{n-1} L_i$, $L_i=\{l_{4i+r} : 0\leq r\leq 3\}$ ($0\leq i\leq n-1$), $C_B'=\{c_j^s : c_j\in C_B\wedge 1\leq s\leq 3\}$ and $Z=\{z_j^r : 1\leq j\leq m\wedge 1\leq r\leq 5\}$.   All hospitals in $I$ have capacity 1.

In the joint preference list of a couple $(x_{4i+r}, k_{4i+r}) (x_{4i+r}\in X, k_{4i+r}\in K)$, if $r\in \{0,1\}$, the symbol $c(x_{4i+r})$ 
denotes the hospital $c_j^s\in C_B'$ such that the $(r+1)$th occurrence of literal $v_i$ appears at position $s$ of $c_j$.  Similarly, if $r\in \{2,3\}$ then the symbol $c(x_{4i+r})$ denotes the hospital $c_j^s\in C_B'$ such that the $(r-1)$th occurrence of literal $\bar{v_i}$ appears at position $s$ of $c_j$. Also in the preference list of a hospital $c_j^s\in C_B'$, if literal $v_i$ appears at position $s$ of clause $c_j\in C_B$, the symbol $x(c_j^s)$ denotes the doctor $x_{4i+r-1}$ where $r=1,2$ according as this is the first or second occurrence of literal $v_i$ in $B$.  Otherwise
if literal $\bar{v_i}$ appears at position $s$ of clause $c_j\in C_B$, the symbol $x(c_j^s)$ denotes the doctor $x_{4i+r+1}$ where $r=1,2$ according as this is the first or second occurrence of literal $\bar{v_i}$ in $B$.  

The preference lists of the doctors and hospitals in $I$ are shown in Figure \ref{preflists33}.  Clearly each preference list is of length at most 3.  The doctors in $P_j\cup \{c_j^s : 1\leq s\leq 3\}\cup \{z_j^s : 1\leq s\leq 5\}\cup \{s_j,t_j\}$ constitute a \emph{clause gadget} for each $j$ ($1\leq j\leq m$) and such a gadget is illustrate in Figure 
\ref{image:clausegadget}.  Similarly the doctors in $X_i\cup K_i$ and the hospitals in $Y_i\cup L_i$ constitute a \emph{variable gadget} for each $i$ ($0\leq i\leq n-1)$ and such a gadget is illustrated in Figure \ref{image:variablegadget}.  In each of Figures \ref{image:clausegadget} and \ref{image:variablegadget}, two doctors involved in a couple are shown using dashed lines.  Numbers beside full lines indicate preference list rankings.

\begin{figure}
\[
\begin{array}{rll}
(x_{4i}, k_{4i}) : & (y_{4i}, l_{4i}) \succ (c(x_{4i}), l_{4i+1}) \succ (y_{4i+1}, l_{4i+1}) & (0\leq i\leq n-1)\\
(x_{4i+1}, k_{4i+1}) : & (y_{4i+1}, l_{4i+1}) \succ (c(x_{4i+1}), l_{4i+2}) \succ (y_{4i+2}, l_{4i+2}) & (0\leq i\leq n-1)\\
(x_{4i+2}, k_{4i+2}) : & (y_{4i+3}, l_{4i+3}) \succ (c(x_{4i+2}), l_{4i+2}) \succ (y_{4i+2}, l_{4i+2}) & (0\leq i\leq n-1)\\
(x_{4i+3}, k_{4i+3}) : & (y_{4i}, l_{4i}) \succ (c(x_{4i+3}), l_{4i+3}) \succ (y_{4i+3}, l_{4i+3}) & (0\leq i\leq n-1)\\
(p_j^1, p_j^4) : & (z^1_j, z^2_j)  \succ (c_j^1, z^3_j) & (1\leq j\leq m) \vspace{1mm}\\
(p_j^2, p_j^5) : & (z^1_j, z^2_j)  \succ (c_j^2, z^4_j) & (1\leq j\leq m) \vspace{1mm}\\
(p_j^3, p_j^6) : & (z^1_j, z^2_j)  \succ (c_j^3, z^5_j) & (1\leq j\leq m) \vspace{1mm}\\ 
s_j : & c_j^1 \succ c_j^2 \succ c_j^3 & (1\leq j\leq m)\\ 
t_j : & z^3_j \succ z^4_j \succ z^5_j & (1\leq j\leq m) \vspace{1mm} \\ \\

y_{4i} : & x_{4i} \succ x_{4i+3} & (0\leq i\leq n-1)\\
y_{4i+1} : & x_{4i+1} \succ x_{4i} & (0\leq i\leq n-1)\\
y_{4i+2} : & x_{4i+1} \succ x_{4i+2} & (0\leq i\leq n-1)\\
y_{4i+3} : & x_{4i+2} \succ x_{4i+3} & (0\leq i\leq n-1)\\
l_{4i} : & k_{4i+3} \succ k_{4i} & (0\leq i\leq n-1)\\
l_{4i+1} : & k_{4i} \succ k_{4i+1} & (0\leq i\leq n-1)\\
l_{4i+2} : & k_{4i+2} \succ k_{4i+1} & (0\leq i\leq n-1)\\
l_{4i+3} : & k_{4i+3} \succ k_{4i+2} & (0\leq i\leq n-1)\\
z^1_j : & p_j^1 \succ p_j^2 \succ p_j^3 & (1\leq j\leq m) \vspace{1mm}\\
z^2_j : & p_j^6 \succ p_j^5 \succ p_j^4 & (1\leq j\leq m) \vspace{1mm}\\
z^{2+s}_j : & p_j^{3+s} \succ t_j & (1\leq j\leq m\wedge 1\leq s\leq 3) \vspace{1mm}\\
c_j^s : & p_j^s \succ x(c_j^s) \succ s_j & (1\leq j\leq m\wedge 1\leq s\leq 3) \vspace{1mm}\\

\end{array}
\]
\caption{Preference lists in the constructed instance $I$ of \hrcdual.}
\label{preflists33}
\end{figure}

The doctors in $I$ can be partitioned into two disjoint sets, $R_1 = X\cup P_1\cup S$ where $P_1 = \{ p^s_j : 1\leq s\leq 3 \}$, and $R_2 = K \cup P_2\cup T$, where $P_2 =  \{ p^s_j : 4\leq s\leq 6 \}$. Further, the hospitals in $I$ may also be partitioned into two disjoint sets, $A_1 = Y\cup Z_1\cup C_B$, where $Z_1 = \{ z^1_j : 1\leq j\leq m \}$, and $A_2 = L\cup Z_2$ where $Z_2 = \{ z^s_j : 2\leq s\leq 5 \wedge 1\leq j \leq m \}$.
For each $i\in \{1,2\}$, a doctor in $R_i$ finds acceptable only those hospitals in $A_i$ and a hospital in $A_i$ finds acceptable only those doctors in $R_i$. It follows that $I$ is an instance of \hrcdual.

\begin{figure}
\includegraphics[width = \linewidth]{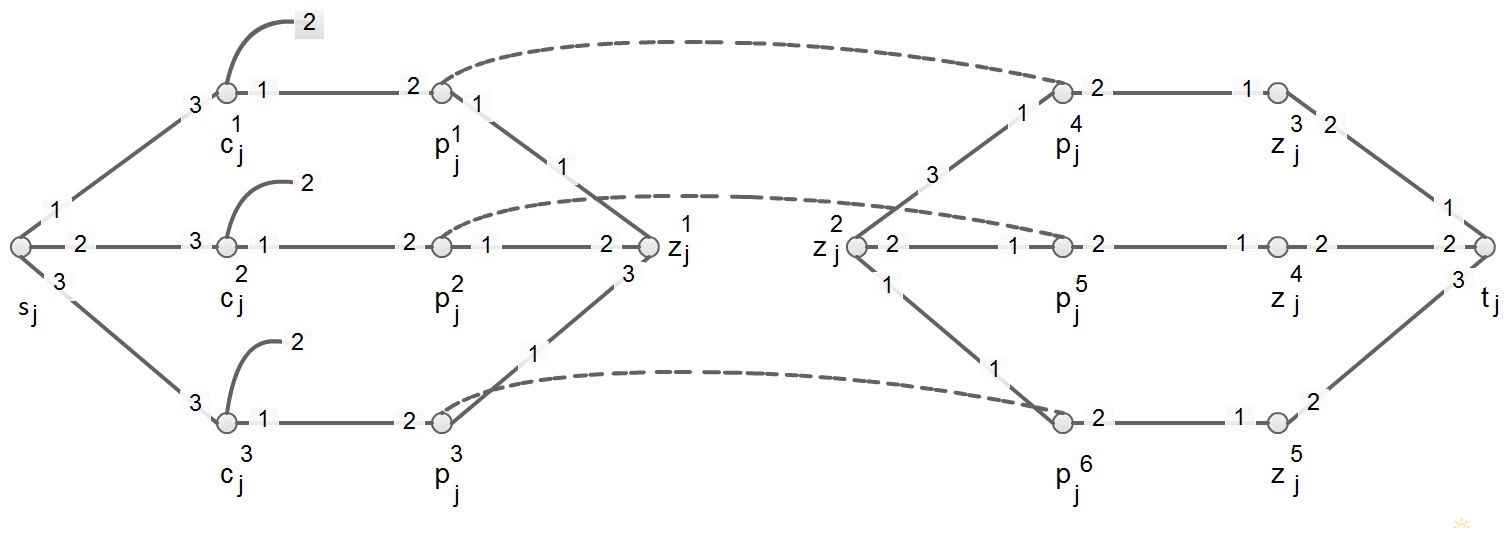}
\caption{Illustration of a clause gadget in the constructed instance $I$ of \hrcdual.}
\label{image:clausegadget}
\end{figure}

\begin{figure}
\centering
\includegraphics[scale = 0.375]{./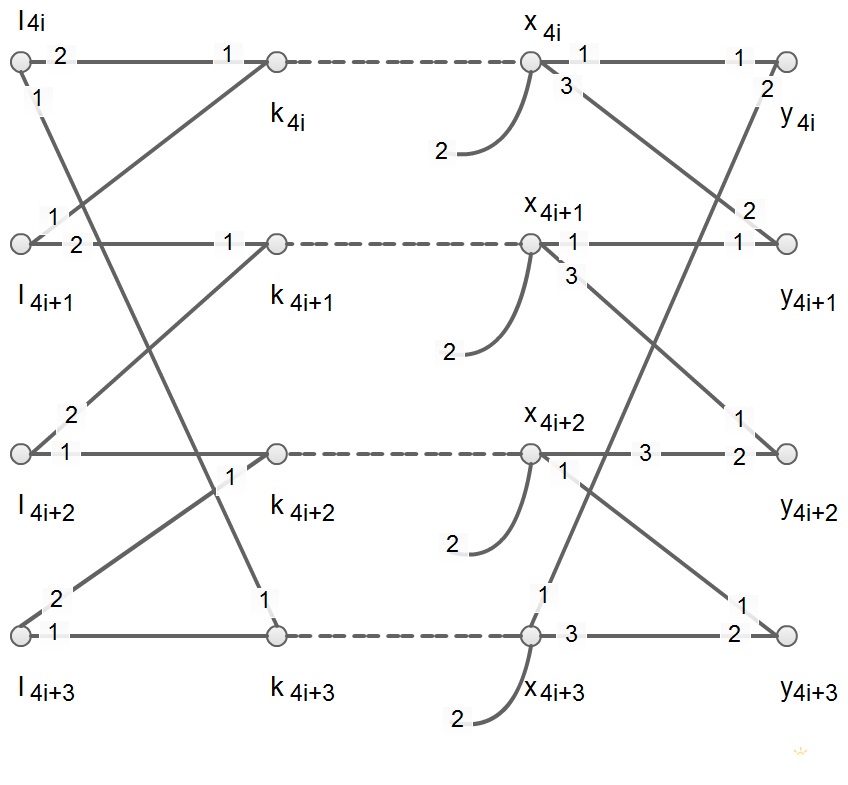}
\caption{Illustration of a variable gadget in the constructed instance $I$ of \hrcdual.}
\label{image:variablegadget}
\end{figure}

For each $i$ ($0\leq i\leq n-1$), let $T_i=\{(x_{4i+r},y_{4i+r}) : 0\leq r\leq 3\} \cup \{(k_{4i+r},l_{4i+r}) : 0\leq r\leq 3\}$ and 
$F_i=\{(x_{4i+r},y_{4i+r+1})\} : 0\leq r\leq 3\}\cup \{(k_{4i+r},l_{4i+r+1})\} : 0\leq r\leq 3\}$, where addition is taken modulo $4$.

The intuition behind the reduction is as follows.  If a variable is true, the matching $T_i$ is used in a variable gadget, otherwise the matching $F_i$ is used.  The first matching $T_i$ ensures that $x_{4i}$ and $x_{4i+1}$ (corresponding to the first and second occurrences of literal $v_i$) have their first choices.  The second matching $F_i$ ensures that $x_{4i+2}$ and $x_{4i+3}$ (corresponding to the first and second occurrences of literal $\bar{v_i}$) have their first choices.  In a complete stable matching, doctor $s_j$ must be matched, meaning that some $c_j^s$ has their third-choice partner,  This intuitively corresponds to the literal that is true in clause $c_j$.  The fact that $c_j^s$ prefers $x(c_j^s)$ to their assignee forces $x(c_j^s)$ to have their first choice, to avoid a blocking pair.  That means that the corresponding literal must be true.  The presence of the couples essentially allow two copies of the variable and clause gadgets from \cite{IMO09} to be created to allow a dual market construction, and with preference lists designed to avoid unwanted blocking pairs within these gadgets.

Formally, we will prove that $B$ is satisfiable if and only if $I$ admits a complete stable matching.  We show this through the following two claims.

\begin{claim}
    If $B$ is satisfiable then $I$ has a complete stable matching.
\end{claim}
\claimproofstart
Let $f$ be a satisfying truth assignment of $B$.  Define a complete matching $M$ in $I$ as follows.  For each variable $v_i\in V$, if $v_i$ is true under $f$, add the pairs in $T_i$ to $M$, otherwise add the pairs in $F_i$ to $M$.

Let $j$ $(1\leq j\leq m)$ be given. Then $c_j$ contains at least one literal that is true under $f$. Suppose this literal occurs at position $s$ of $c_j$  $(1\leq s \leq 3)$. Then add the pairs ($s_j, c^s_j$), ($p^s_j, z^1_j$) and ($p_j^{s+3}, z^2_j$) to $M$. For each $b\in \{1,2,3\}\setminus \{s\}$, add the pairs ($p^b_j,c^b_j$) and ($p_j^{b+3}, z_j^{b+2}$) to $M$. Finally, add the pair ($t_j, z_j^{s+2}$) to $M$.

No doctor in $S$ may form a blocking pair (since he can only potentially prefer a hospital in $C_B$, which ranks him last) nor can a doctor in $T$ (since he can only potentially prefer a hospital $z_j^s$ $(1\leq j\leq m$ and $3\leq s\leq 5)$
which ranks him last).

Suppose that ($x_{4i+r}, k_{4i+r}$) blocks $M$ with $(c(x_{4i+r}), l_{4i+a})$, where $0\leq i\leq n-1$, $0\leq r\leq 3$ and $1\leq a\leq 3$. Then ($x_{4i+r}, k_{4i+r}$) are jointly matched with their third-choice pair.
\vspace{1mm}

\noindent\emph{Case (i)}: $r\in \{0,1\}$. Then $f(v_i) = F$ because $(x_{4i+r}, y_{4i+r+1})\in M$ and $(k_{4i+r}, l_{4i+r+1})\in M$. Let $c^s_j = c(x_{4i+r})$ $(1\leq j\leq m$ and $1\leq s\leq 3)$. As $v_i$ does not make $c_j$ true then $(p_j^s, c_j^s)\in M$. Then $c_j^s$ has its first choice and cannot form part of a blocking pair,
a contradiction. 
\vspace{1mm}

\noindent\emph{Case (ii)}: $r\in \{2,3\}$ Then $f(v_i) = T$ because $(x_{4i+r}, y_{4i+r})\in M$ and $(k_{4i+r}, l_{4i+r})\in M$. Let $c^s_j = c(x_{4i+r})$ $(1\leq s\leq 3$ and $1\leq j\leq m)$. As $\bar{v}_i$ does not make $c_j$ true then $(p_j^s, c_j^s)\in M$. This means that $c_j^s$ has its first choice and cannot form part of a blocking pair,
a contradiction. 
\vspace{1mm}

Suppose that $(x_{4i+r},k_{4i+r})$ blocks $M$ with hospitals in $Y\cup L$ for some $i$ ($0\leq i\leq n-1$) and $r$ ($0\leq r\leq 3$).  If $T_i\subseteq M$ then $r\in \{2,3\}$ so that $(x_{4i+r},k_{4i+r})$ have their third choice.  But then, each of $y_{4i}$ and $l_{4i+3}$ have their first choice, so $(x_{4i+r},k_{4i+r})$ cannot block $M$ after all, a contradiction.  If $F_i\subseteq M$ then $r\in \{0,1\}$, so $(x_{4i+r},k_{4i+r})$ have their third choice.  But then, each of $l_{4i}$ and $l_{4i+1}$ have their first choice, so $(x_{4i+r},k_{4i+r})$ cannot block $M$ after all, a contradiction.  

Now suppose $(p^s_j, p_j^{s+3})$ blocks $M$. Then $(p^s_j, c^s_j)\in M$ and $(p_j^{s+3}, z_j^{s+2})\in M$, and $(p_j^s, p_j^{s+3})$ jointly prefer $(z^1_j, z^2_j)$ to their partners. At most one of $\{z^1_j, z^2_j\}$  can prefer the relevant member of $\{p^s_j, p^{s+3}_j\}$ to their partner, whilst the other would prefer their current partner in $M$ and thus could not form part of a blocking pair, a contradiction. Hence, $M$ is a complete stable matching in $I$.
\claimproofend
\begin{claim}
    If $I$ has a complete stable matching, then $B$ is satisfiable.
\end{claim}
\specialclaimproofstart
Conversely, suppose that $M$ is a complete stable matching in $I$. We form a truth assignment $f$ in $B$ as follows.  For each $i$ ($0\leq i\leq n-1$), 
$M\cap ( (X_i\times Y_i)\cup (K_i\times L_i) )$ is a perfect matching of  ($X_i\cup Y_i) \cup (K_i\cup L_i$). If $M\cap ( (X_i\times Y_i)\cup (K_i\times L_i))=T_i$, set $v_i$ to be true under $f$.  Otherwise, $M\cap ( (X_i\times Y_i)\cup (K_i\times L_i))=F_i$, in which case we set $v_i$ to be false under $f$.

Now, let $c_j$ be a clause in $C_B$ ($1\leq j\leq m$).  There exists some $s$ ($1\leq s\leq 3$) such that $(s_j,c_j^s)\in M$.  Let $x_{4i+r}=x(c_j^s)$, for some $i$ ($0\leq i\leq n-1$) and $r$ ($0\leq r\leq 3$).  If $r\in \{0,1\}$, then $(x_{4i+r},y_{4i+r})\in M$ (or equivalently $(k_{4i+r}, l_{4i+r})\in M$) by the stability 
of $M$. Thus, variable $v_i$ is true under $f$, and hence clause $c_j$ is true under $f$, since literal $v_i$ occurs in $c_j$.  If $r\in \{2,3\}$, then $(x_{4i+r},y_{4i+r+1})\in M$ (or equivalently $(k_{4i+r}, l_{4i+r+1})\in M$) (where addition is taken modulo $4$) by the stability of $M$.  Thus variable $v_i$ is false under $f$, and hence clause $c_j$ is true under $f$, since literal $\bar{v_i}$ occurs in $c_j$.  Hence $f$ is a satisfying truth assignment of $B$.
\specialclaimproofend
\end{proof}

%
%
%

Let $I$ be the instance of \hrcdual\ as constructed in the proof of Theorem \ref{33-COM-HRC} from an instance of \sat. We now add additional doctors and hospitals to $I$, in the form of a series of \emph{enforcer gadgets}, to obtain a new instance $I'$ of \hrcdual\ as follows, with the aim that every stable matching in $I'$ must be complete.  For each $i$ ($0\leq i\leq n-1$) and $r$ ($0\leq r\leq 3$), we add an enforcer gadget to $y_{4i+r}$ comprising two new couples,
$(u_{4i+r}^1,u_{4i+r}^2)$ and $(u_{4i+r}^3,u_{4i+r}^4)$, a single doctor $u_{4i+r}^5$, and four new hospitals $h_{4i+r}^s$ ($1\leq s\leq 4$), all with capacity 1.
The preference lists of the newly added agents corresponding to $y_{4i+r}\in Y$ are shown in Figure \ref{preflists:addedgadget} and the enforcer gadget is illustrated in Figure \ref{image:addedgadget}.  In the preference list of $y_{4i+r}$ in $I'$, $P(y_{4i+r})$ represents the preference list of $y_{4i+r}$ in $I$.

\begin{figure}
\[
\begin{array}{rll}
(u^1_{4i+r}, u^2_{4i+r}) : & (h^1 _{4i+r}, h^2 _{4i+r}) ~~ ~~ & (0\leq i\leq n-1, 0\leq r\leq 3)\\
(u^3_{4i+r}, u^4_{4i+r}) : & (h^1 _{4i+r}, h^4 _{4i+r}) \succ (h^3 _{4i+r}, h^2 _{4i+r})~~ & (0\leq i\leq n-1, 0\leq r\leq 3)\\
u^5 _{4i+r} : & y_{4i+r} \succ h^1 _{4i+r} ~~ & (0\leq i\leq n-1, 0\leq r\leq 3) \\ \\

y_{4i+r} : & P(y_{4i+r}) \succ u^5 _{4i+r} & (0\leq i\leq n-1, 0\leq r \leq 3) \vspace{1mm} \\
h^1_{4i+r} : & u^5 _{4i+r} \succ u^1 _{4i+r} \succ u^3 _{4i+r} & (0\leq i\leq n-1, 0\leq r \leq 3) \vspace{1mm}\\
h^2_{4i+r} : & u^4 _{4i+r} \succ u^2 _{4i+r} & (0\leq i\leq n-1, 0\leq r \leq 3) \vspace{1mm}\\
h^3_{4i+r} : & u^3 _{4i+r} & (0\leq i\leq n-1, 0\leq r \leq 3) \vspace{1mm}\\
h^4_{4i+r} : & u^4 _{4i+r} & (0\leq i\leq n-1, 0\leq r \leq 3) \vspace{1mm}\\

\end{array}
\]
\caption{Preference lists of agents in the enforcer gadget corresponding to $y_{4i+r}$ in the constructed instance $I'$ of \hrcdual.}
\label{preflists:addedgadget}
\end{figure}
\begin{figure}[htbp]
\[
\includegraphics[scale = 0.35]{./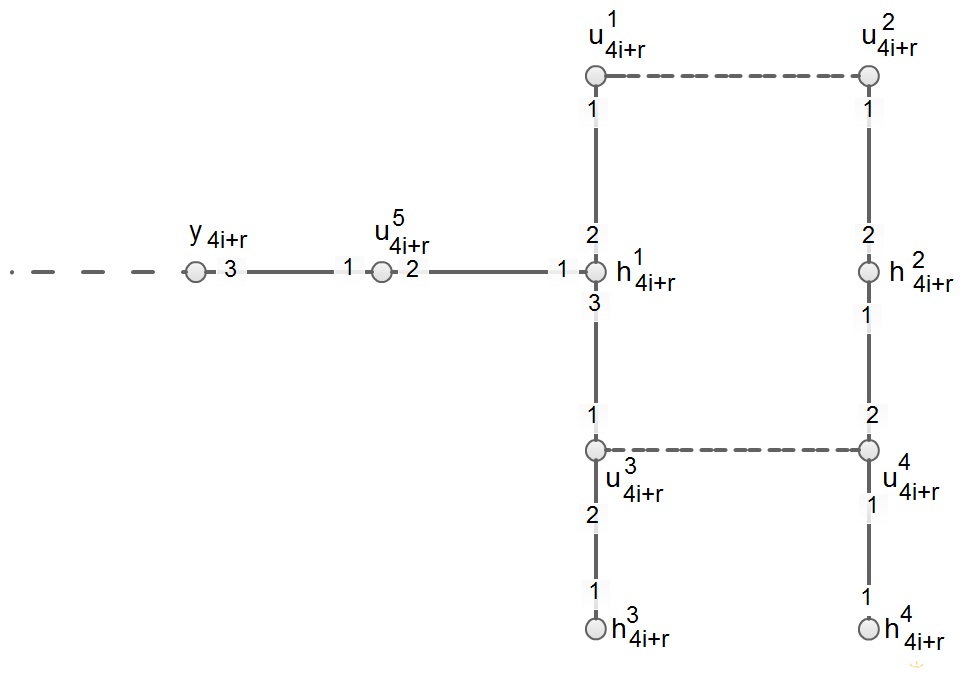}
\]
\caption{Illustration of the enforcer gadget corresponding to $y_{4i+r}$ in the constructed instance $I'$ of \hrcdual.}
\label{image:addedgadget}
\end{figure}

The enforcer gadget corresponding to $y_{4i+r}\in Y$ forces $y_{4i+r}$ to have a partner in $X$ in any stable matching in $I'$, as the following lemma shows.
\begin{lemma}
\label{y4i+rmustbematched}
In any stable matching $M'$ in $I'$, for every $y_{4i+r} \in Y$, $M'(y_{4i+r}) \in X$.  
\end{lemma}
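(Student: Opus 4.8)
The plan is to prove the contrapositive by a case analysis on the enforcer gadget. Throughout, fix $y_{4i+r}\in Y$ and, to lighten notation, write $y=y_{4i+r}$, $u^t=u^t_{4i+r}$ and $h^t=h^t_{4i+r}$; recall from Figure~\ref{preflists:addedgadget} that the doctors $y$ finds acceptable are exactly the two members of $X$ appearing in $P(y)$ together with $u^5$, that $h^3$ finds only $u^3$ acceptable and $h^4$ finds only $u^4$ acceptable, and that feasibility forces $M'(u^1,u^2)\in\{(h^1,h^2),(\emptyset,\emptyset)\}$ and $M'(u^3,u^4)\in\{(h^1,h^4),(h^3,h^2),(\emptyset,\emptyset)\}$. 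Suppose for contradiction that $M'(y)\notin X$, so that $y$ is either unmatched or matched to $u^5$.

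First I would rule out $y$ being unmatched. Since $u^5$ finds only $y$ and $h^1$ acceptable and ranks $y$ first, in this case $u^5$ is unmatched or matched to $h^1$, and either way the single doctor $u^5$ and the hospital $y$ form a blocking pair ($y$ has a free post and $u^5$ is unmatched or strictly prefers $y$). Hence $y$ is matched, and under the standing assumption $(u^5,y)\in M'$.

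The heart of the argument is then to derive a contradiction from $(u^5,y)\in M'$. Since $u^5$ is no longer available, $h^1$ is matched to $u^1$, to $u^3$, or empty, and the two couples can only be in the joint states listed above; the two joint states in which $(u^1,u^2)$ occupies $(h^1,h^2)$ while $(u^3,u^4)$ occupies a pair are infeasible (they double-book $h^1$ or $h^2$), which leaves exactly four configurations. In the configuration where $(u^1,u^2)$ is at $(h^1,h^2)$ and $(u^3,u^4)$ is unmatched, the node $h^3$ is empty while $h^2$ holds $u^2$ with $u^4\succ_{h^2}u^2$, so $(u^3,u^4)$ blocks with $(h^3,h^2)$. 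In the configuration where $(u^1,u^2)$ is unmatched and $(u^3,u^4)$ is at $(h^1,h^4)$, the node $h^2$ is empty while $h^1$ holds $u^3$ with $u^1\succ_{h^1}u^3$, so $(u^1,u^2)$ blocks with $(h^1,h^2)$. In the configuration where $(u^1,u^2)$ is unmatched and $(u^3,u^4)$ is at $(h^3,h^2)$, both $h^1$ and $h^4$ are empty and $(h^1,h^4)\succ_{(u^3,u^4)}(h^3,h^2)$, so $(u^3,u^4)$ blocks with $(h^1,h^4)$. Finally, if both couples are unmatched then $h^1$ and $h^2$ are both empty and $(u^1,u^2)$ blocks with $(h^1,h^2)$. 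Each of these verifications is a direct check of the couple blocking-pair definition, falling under case~3(a) with one target hospital undersubscribed and the other either undersubscribed or holding a strictly less preferred occupant.

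I expect the only real obstacle to be the bookkeeping: one must (i) use feasibility and the fact that $u^5$ is tied up at $y$ (together with the singleton acceptable sets of $h^3$ and $h^4$) to determine precisely which gadget hospitals are free in each configuration, and (ii) match each blocking instance to the correct sub-clause of the couple blocking-pair definition — in particular, noting that a target hospital containing a single strictly worse occupant already satisfies the relevant condition in case~3(a). There is no structural subtlety beyond this finite case analysis, so once the four configurations are set up the contradiction in each is immediate, completing the proof that $M'(y_{4i+r})\in X$.
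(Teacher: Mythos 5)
Your proof is correct and follows essentially the same route as the paper's: rule out $y_{4i+r}$ being unmatched via the single doctor $u^5_{4i+r}$, then derive a contradiction in each of the same four feasible configurations of the two enforcer couples, exhibiting the same blocking pairs in each case. The only difference is cosmetic — you organise the case split by the couples' joint states rather than by whether and how $h^1_{4i+r}$ is matched.
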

\begin{proof}
Suppose not. Then, $y_{4i+r}$ is either unmatched in $M'$ or $M'(y_{4i+r}) = u^5_{4i+r}$. If $y_{4i+r}$ is unmatched then $( u^5 _{4i+r}, y_{4i+r})$ blocks $M'$ in $I'$, a contradiction. Hence,  $M'(y_{4i+r}) = u^5_{4i+r}$ and thus $(u^5_{4i+r}, h^1_{4i+r}) \notin M$. 

Assume that $h^1_{4i+r}$ is unmatched in $M'$. Then $(u^1_{4i+r}, u^2_{4i+r})$ is unmatched in $M'$ and also $M'((u^3_{4i+r}, u^4_{4i+r})) \neq (h^1_{4i+r}, h^4_{4i+r})$. Thus, either $M'((u^3_{4i+r}, u^4_{4i+r}) ) = (h^3_{4i+r}, h^2_{4i+r})$ or $(u^3_{4i+r}, u^4_{4i+r})$ is unmatched in $M'$. If $(u^3_{4i+r}, u^4_{4i+r})$ is unmatched in $M'$ then $(u^1_{4i+r}, u^2_{4i+r})$ blocks $M'$ with $(h^1_{4i+r}, h^2_{4i+r})$ in $I'$, a contradiction. Hence $(u^3_{4i+r}, u^4_{4i+r})$ must be matched with $(h^3_{4i+r},h^2_{4i+r})$ in $M'$. However, now $(u^3_{4i+r}, u^4_{4i+r})$ blocks $M'$ with $(h^1_{4i+r}, h^4_{4i+r})$ in $I'$, a contradiction. 

Now, assume $h^1_{4i+r}$ is matched in $M'$. Further assume that $h^1_{4i+r}$ is matched in $M'$ through the joint matching of $(u^1_{4i+r}, u^2_{4i+r})$ with $(h^1 _{4i+r},$ $ h^2 _{4i+r})$. However, in this case, $M'$ is blocked by $(u^3_{4i+r}, u^4_{4i+r})$ with $(h^3 _{4i+r}, h^2 _{4i+r})$ in $I'$, a contradiction. Thus, $h^1_{4i+r}$ must be matched in $M'$ through the joint matching of $(u^3_{4i+r}, u^4_{4i+r})$ with $(h^1 _{4i+r},$ $ h^4 _{4i+r})$. However, in this case, $M'$ is blocked by $(u^1_{4i+r}, u^2_{4i+r})$ with $(h^1 _{4i+r}, h^2 _{4i+r})$ in $I'$, a contradiction.
\end{proof}

We now show through the following two lemmata that if $M^{\prime }$ is a stable matching in $I'$, then $M=M'\setminus (U\times H)$
is a complete stable matching in $I$.
\begin{lemma}
\label{Z-OR-PUNIONT}
No hospital in $Z\cup C_B$ may be unmatched and no doctor in $P\cup S\cup T$ may be unmatched in any stable matching $M'$ in $I'$.
\end{lemma}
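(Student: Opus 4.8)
The plan is to fix an arbitrary stable matching $M'$ in $I'$ and argue clause gadget by clause gadget, after first using the enforcer gadgets to pin down the variable gadgets. The key first step, which I expect to be the crux, is to show that no doctor $x_{4i+r}\in X$ is ever matched in $M'$ to a hospital of $C_B'$. This follows quickly from Lemma~\ref{y4i+rmustbematched}: every $y_{4i+r}\in Y_i$ is matched in $M'$ to a doctor of $X$, and since the preference list of $y_{4i+r}$ contains only doctors of $X_i$, it is matched to a doctor of $X_i$; as $|Y_i|=|X_i|=4$ and each $x$-doctor has capacity $1$, the edges of $M'$ between $X_i$ and $Y_i$ form a perfect matching of $X_i\cup Y_i$, so every doctor of $X_i$ is matched to a $y$-hospital and hence not to the clause hospital $c(x_{4i+r})$.

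Next I would fix a clause $c_j$ and analyse the three couples $C_j^b=(p_j^b,p_j^{b+3})$ for $b\in\{1,2,3\}$. Since $A(C_j^b)=\{(z_j^1,z_j^2),(c_j^b,z_j^{b+2})\}$, in $M'$ each $C_j^b$ is matched to $(z_j^1,z_j^2)$, matched to $(c_j^b,z_j^{b+2})$, or unmatched. Capacity $1$ of $z_j^1$ forces at most one couple into the first state, and at least one must be there, since otherwise $z_j^1$ and $z_j^2$ are both unmatched (if $p_j^b$ were matched to $z_j^1$ the couple $C_j^b$ would be in the first state, and similarly for $z_j^2$), and then any $C_j^b$ would block $M'$ with its first choice $(z_j^1,z_j^2)$. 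So exactly one couple, say $C_j^{\sigma}$, is matched to $(z_j^1,z_j^2)$. For $b\neq\sigma$, the couple $C_j^b$ cannot be unmatched: then $p_j^b$, which is the first choice of $c_j^b$, and $p_j^{b+3}$, the first choice of $z_j^{b+2}$, are both unmatched, so each of $c_j^b,z_j^{b+2}$ is either undersubscribed or holds a worse doctor, and $C_j^b$ blocks $M'$ with $(c_j^b,z_j^{b+2})$. Hence $c_j^b$ is matched to $p_j^b$ and $z_j^{b+2}$ to $p_j^{b+3}$ for both $b\neq\sigma$.

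Collecting consequences: the only acceptable hospital of the single doctor $t_j$ that is still free is $z_j^{\sigma+2}$, so if $z_j^{\sigma+2}$ were unmatched then $t_j$ would be unmatched too and $(t_j,z_j^{\sigma+2})$ would block $M'$; thus $z_j^{\sigma+2}$ is matched to $t_j$. At this point every $z_j^r$, every $p_j^r$, the doctor $t_j$, and $c_j^b$ for $b\neq\sigma$ are matched. Finally, $c_j^{\sigma}$'s only possible partners are $p_j^{\sigma}$ (taken by $z_j^1$), $x(c_j^{\sigma})$ (excluded by the first step), and $s_j$; and $s_j$, whose list is $c_j^1\succ c_j^2\succ c_j^3$, can only be matched to $c_j^{\sigma}$ since the other two clause hospitals are taken. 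So if $c_j^{\sigma}$ were unmatched, $(s_j,c_j^{\sigma})$ would block $M'$, forcing $c_j^{\sigma}$ to be matched to $s_j$. Hence within clause $c_j$ every $z_j^r$, every $p_j^r$, the hospitals $c_j^1,c_j^2,c_j^3$, and the doctors $s_j,t_j$ are matched, and letting $j$ range over all clauses yields the statement.

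The main obstacle is the first step --- establishing that $x$-doctors are never matched to clause hospitals --- since this is exactly what forces $c_j^{\sigma}$'s partner to be $s_j$ in the last step and thereby makes $s_j$ matched; everything after that is a routine (if careful) case analysis inside a single clause gadget, driven by the fact that the relevant hospitals $z_j^1,z_j^2,z_j^{b+2},c_j^b$ and the couples each rank their ``intended'' partner first or near-first.
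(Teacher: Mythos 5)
Your proof is correct and follows essentially the same route as the paper's: both hinge on the observation (via Lemma~\ref{y4i+rmustbematched}) that no $x$-doctor can occupy a clause hospital $c_j^s$, and then use stability within each clause gadget to force every remaining agent there to be matched. The only difference is presentational --- you characterise the full stable configuration of a clause gadget at once (exactly one couple at $(z_j^1,z_j^2)$, the other two at their second choices, then $t_j$ and $s_j$ slot in), whereas the paper rules out each unmatched agent separately by contradiction.
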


\begin{proof}
Assume $z^1_j$ is unmatched in $M'$ for some $j$ $(1\leq j \leq m)$. Thus $(p_j^s, z^2_j) \notin M$  $(4\leq s\leq 6)$ as $z^2_j$ must also be unmatched. Hence, $(z^1_j, z^2_j)$ are jointly unmatched and find $(p^1_j, p^4_j)$ acceptable. Furthermore, $(p^1_j, p^4_j)$ jointly prefers $(z^1_j, z^2_j)$ to any other pair. Hence, $(z^1_j, z^2_j)$ blocks $M$ with $(p^1_j, p^4_j)$, a contradiction. Thus, $z^1_j$ must be matched in any stable matching admitted by $I'$. By a similar argument, the same holds for $z^2_j$.

Assume $t_j'$  is unmatched in $M'$ for some $j$ $(1\leq j \leq m)$. If some $z^s_j$ $(3\leq s \leq 5)$ is unmatched, then $(t^s_j , z^s_j)$ blocks $M'$ in $I'$, a contradiction. Thus, $\{(p^4_j, z^3_j), (p^5_j, z^4_j), (p^6_j, z^5_j) \}\subseteq M'$. Then $z^2_j$ is unmatched in $M'$, a contradiction. Thus, $t_j$ must be matched in any stable matching in $I'$.


Assume some doctor $p^s_j$ is unmatched in $M'$ for some $j$ $(1\leq j \leq m)$ and $s$ $(1\leq s\leq 3)$. Then $(p^s_j, p^{s+3}_j)$ is unmatched in $M'$. Hence, $(p^s_j, p^{s+3}_j)$ blocks $M'$ in $I'$ with $(c^s_j, z^{s+2}_j)$, a contradiction. Thus, all doctors in $P$ must be matched in $M'$.

Assume some $z^s_j$ $(1\leq j\leq m$ and $3\leq s\leq 5)$ is unmatched in $M'$.  We have already shown that all doctors in $P\cup T$ are matched in $M'$.  Hence $(p_j^{s+1},z_j^2)\in M'$ and $(t_j,z_j^b)\in M'$ for some $b\in \{3,4,5\} \setminus \{s\}$.  It then follows that $p_j^{b+1}$ is unmatched in $M'$, a contradiction.  Thus all hospitals in $Z$ must be matched in $M'$.

Observe that no $c^s_j$ $(1\leq s\leq 3, 1\leq j\leq m)$ can be matched to $x(c^s_j)=x_{4i+r}$ in $M'$, for otherwise $M'(y_{4i+r'}) \notin X$ for some $r'$ ($0\leq r'\leq 3$), a contradiction to Lemma \ref{y4i+rmustbematched}. Since $z^1_j$ must be matched in $M'$ to some $p^s_j$ $(1\leq s\leq 3)$, and since no doctor in $P$ may be unmatched in $M'$, then for each $s^{\prime } \in \{1,2,3\}\setminus \{s\}$, $(p^{s^{\prime }}_j,c^{s^{\prime }}_j)\in M'$.  Thus, $(s_j, c^s_j) \in M'$ for otherwise $(s_j,c^s_j)$ blocks $M'$ in $I'$. Thus all doctors in $S$ and hospitals in $C_B$ must be matched in $M'$.
\end{proof}

\begin{lemma}
\label{XUNIONKORYUNIONL}
No hospital in $L\cup Y$ may be unmatched and no doctor in $K\cup X$ may be unmatched in any stable matching $M'$ in $I'$.
\end{lemma}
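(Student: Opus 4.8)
The plan is to bootstrap from Lemma \ref{y4i+rmustbematched}, which already guarantees that every hospital $y_{4i+r} \in Y$ is matched in $M'$, and in fact to a doctor in $X$. First I would record three structural observations from the preference lists in Figure \ref{preflists33}: (a) the acceptable doctors of $y_{4i+r}$ all lie in $X_i$ (each $y_{4i+r}$ ranks only two doctors, both from $X_i$); (b) the acceptable hospitals of each doctor $x_{4i+r}$ lie in $Y_i \cup C_B'$; and (c) the acceptable hospitals of each $k_{4i+r}$ lie in $L_i$. Combining (a) with Lemma \ref{y4i+rmustbematched} and the fact that $M'$ is a matching, inside a fixed variable gadget $i$ the four hospitals of $Y_i$ are matched to four \emph{distinct} doctors of $X_i$; since $|X_i| = |Y_i| = 4$, this means $M'$ maps $Y_i$ onto all of $X_i$, so every doctor in $X_i$ is matched to a hospital in $Y_i$.

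The key step is then immediate: because each $x_{4i+r}$ is already matched to a hospital in $Y_i$ and no doctor can have two partners, no $x_{4i+r}$ is matched to a hospital in $C_B'$ (this is the same fact invoked in the proof of Lemma \ref{Z-OR-PUNIONT}, namely that no $c_j^s$ is matched to $x(c_j^s)$). Hence $M' \cap (X_i \times Y_i)$ is a perfect matching of $X_i \cup Y_i$, and, ranging over all $i$, every doctor in $X$ and every hospital in $Y$ is matched in $M'$.

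Finally I would propagate this through the couples. Each couple $(x_{4i+r}, k_{4i+r})$ has a preference list (Figure \ref{preflists33}) containing no pair with an $\emptyset$-coordinate, and for each fixed value of the first coordinate there is a \emph{unique} acceptable pair. Since $M'(C_i) \subseteq A(C_i) \cup \{(\emptyset,\emptyset)\}$ and $x_{4i+r}$ is matched, the couple must be matched to that unique acceptable pair, forcing $k_{4i+r}$ to be matched to the corresponding hospital of $L_i$. Thus the four doctors of $K_i$ are matched to four distinct hospitals of $L_i$, so all of $L_i$ is matched. Taking the union over all $i$ gives the claim: no hospital in $L \cup Y$ and no doctor in $K \cup X$ is unmatched in $M'$.

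I do not expect a genuine obstacle here; the only points requiring care are the adjacency bookkeeping inside a variable gadget (so that the counting argument $Y_i \to X_i$ is airtight, using that each $y_{4i+r}$ ranks only doctors of $X_i$) and the ``rigidity'' of these couples — that one member being matched forces the partner to be matched — which follows purely from the absence of $\emptyset$-entries in their preference lists.
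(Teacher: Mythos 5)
Your proposal is correct and follows essentially the same route as the paper's (much terser) proof: invoke Lemma \ref{y4i+rmustbematched} to match every $y_{4i+r}$ into $X$, use the cardinality/counting argument to conclude every $x_{4i+r}$ is matched into $Y$, and then use the couple structure to force every $k_{4i+r}$ into $L$ and hence every $l_{4i+r}$ to be matched. Your version merely makes explicit the per-gadget bookkeeping and the "no $\emptyset$-entries" rigidity of the couples that the paper leaves implicit.
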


\begin{proof} 
By Lemma \ref{y4i+rmustbematched}, $M'(y_{4i+r}) \in X$ for all $y_{4i+r} \in Y$. Hence $M'(x_{4i+r}) \in Y$ for all $x_{4i+r} \in X$. Since $(x_{4i+r} , k_{4i+r})$ are a couple for all $x_{4i+r} \in X$, it follows that $M'(k_{4i+r}) \in L$ for all $k_{4i+r} \in K$ and $M'(l_{4i+r}) \in K$ for all $l_{4i+r} \in L$.
\end{proof}

The previous three lemmata now allow us to state the following more general theorem.

\begin{theorem}
\label{33comdmml}
\hrcdual\ is NP-hard, even if all preference lists have length 3, all hospitals have capacity 1, and the preference list of each single doctor, couple and hospital is derived from a strictly ordered master list of hospitals, pairs of hospitals and doctors, respectively.
\end{theorem}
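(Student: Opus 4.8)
The plan is to show that the instance $I'$ of \hrcdual\ already constructed — the instance $I$ from the proof of Theorem~\ref{33-COM-HRC} together with one enforcer gadget per hospital $y_{4i+r}\in Y$ — itself witnesses the theorem, so that no new reduction is needed and only the master-list property remains to be checked. First I would record the equivalence ``$I'$ admits a stable matching if and only if the \sat\ instance $B$ is satisfiable''. For one direction, by Lemmas~\ref{y4i+rmustbematched}, \ref{Z-OR-PUNIONT} and \ref{XUNIONKORYUNIONL} every doctor and hospital of $I$ is matched in any stable matching $M'$ of $I'$, so $M:=M'\setminus(U\times H)$ is a complete matching of $I$; it is stable because the agents of $I$ retain their preferences in $I'$ and a blocking pair of $M$ in $I$ would also block $M'$ in $I'$, and then the second claim in the proof of Theorem~\ref{33-COM-HRC} yields a satisfying assignment of $B$. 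For the converse, take the complete stable matching $M$ of $I$ given by the first claim in that proof, and extend it within each enforcer gadget by adding $(u^5_{4i+r},h^1_{4i+r})$, $(u^3_{4i+r},h^3_{4i+r})$ and $(u^4_{4i+r},h^2_{4i+r})$, leaving $u^1_{4i+r}$, $u^2_{4i+r}$ and $h^4_{4i+r}$ unmatched; since every $y_{4i+r}$ is full in $M$ and prefers its $X$-partner to $u^5_{4i+r}$, a routine check against Figure~\ref{preflists:addedgadget} shows that no new or mixed blocking pair arises. Combined with the facts that all preference lists of $I'$ have length at most $3$ and all hospitals have capacity $1$ (immediate from the construction), this reduces the theorem to exhibiting suitable master lists.

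The second, and main, task is to produce strictly ordered master lists of hospitals, of hospital pairs, and of doctors, whose restrictions yield the single doctors', the couples', and the hospitals' lists respectively; here I would exploit the dual-market partition. The doctors split into $R_1$ (which is $X\cup P_1\cup S$ together with the $u^5$-, $u^1$- and $u^3$-doctors) and $R_2$ (which is $K\cup P_2\cup T$ together with the $u^4$- and $u^2$-doctors), with every $A_1$-hospital ranking only $R_1$-doctors and every $A_2$-hospital ranking only $R_2$-doctors, so it suffices to build a master list over $R_1$ and one over $R_2$ separately and concatenate them (and symmetrically for $A_1$- versus $A_2$-hospitals, and for the single doctors and couples that use them). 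For the doctor master list the key point is that ordering the $R_1$-part as ``all of $P_1$, then all of $X$, then all of $S$, then the $u^5$-doctors, then the $u^1$-doctors, then the $u^3$-doctors'' makes every cross-block constraint automatic: the constraint $p_j^s\succ x(c_j^s)\succ s_j$ imposed by the hospital $c_j^s$ holds because $P_1$ precedes $X$ precedes $S$, and the constraint ``$P(y_{4i+r})\succ u^5_{4i+r}$'' holds because $X$ precedes the $u^5$-doctors; inside $X$ (respectively inside $P_1$, inside each $u$-layer) the only remaining constraints are those of the $y$-hospitals (respectively the $z^1_j$-hospitals, the $h^1$-hospitals), which involve pairwise-disjoint doctor sets coming from distinct variable gadgets (respectively clauses, enforcer gadgets) and are individually acyclic, so a topological sort fixes the internal order; the $R_2$-part is handled symmetrically with the $l$-, $z^2_j$- and $h^2$-hospitals. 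For the master list over hospital pairs, all pairs occurring in distinct couples are distinct except for $(z^1_j,z^2_j)$, which is the top choice of each of the three couples $(p^s_j,p^{s+3}_j)$; the constraint set is therefore the disjoint union of the per-gadget acyclic constraint sets (in each variable gadget, the two chains running from $(y_{4i},l_{4i})$ down to $(y_{4i+2},l_{4i+2})$, which I would verify are mutually compatible) together with the ``$(z^1_j,z^2_j)$ on top'' relations, and a topological sort again gives the master list. Finally, the hospital sets appearing in distinct single doctors' lists are pairwise disjoint, so the master list of hospitals is obtained by concatenating these lists and appending the remaining hospitals arbitrarily.

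\textbf{Main obstacle.} The step I expect to require the most care is the couple master list — specifically, checking that identifications between pairs do not force an ordering cycle through the clause gadgets: one must confirm that a pair $(c(x_{4i+r}),l_\ast)$ occurring in a variable-gadget couple's list is genuinely distinct from the pair $(c^s_j,z^{s+2}_j)$ in which the same hospital $c^s_j$ appears in a clause-gadget couple's list, so that the only pair shared across couples really is $(z^1_j,z^2_j)$; verifying compatibility of the two chains inside a variable gadget is then a finite check. Once all three master lists are built by topologically sorting acyclic constraint sets, the theorem follows from the equivalence above together with the NP-completeness of \sat~\cite{BKS03}.
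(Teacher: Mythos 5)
Your proposal is correct and follows essentially the same route as the paper: the same instance $I'$ (the Theorem~\ref{33-COM-HRC} construction plus one enforcer gadget per $y_{4i+r}$), the same correctness argument via Lemmas~\ref{y4i+rmustbematched}--\ref{XUNIONKORYUNIONL} and the same extension of $M$ by $(u^5_{4i+r},h^1_{4i+r})$, $(u^3_{4i+r},h^3_{4i+r})$, $(u^4_{4i+r},h^2_{4i+r})$. The only cosmetic difference is that the paper writes the three master lists out explicitly (Figures~\ref{masterlistrescouples}--\ref{masterlisthospitals}) whereas you justify their existence by block ordering plus topological sorting of per-gadget acyclic constraint sets; your block orders and the distinctness checks you flag (in particular that the only pair shared between couples is $(z^1_j,z^2_j)$) match what the explicit lists realise.
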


\begin{proof}
Let $B$ be an instance of  \sat. Construct an instance $I$ of \hrcdual\ as described in the proof of Theorem \ref{33-COM-HRC} and extend this instance as described above by adding the enforcer gadgets corresponding to the doctors in $U$ and the hospitals in $H$, to obtain the instance $I'$ of \hrcdual.

Let $(R_1,A_1)$ and $(R_2,A_2)$ be the dual market as established in the proof of Theorem \ref{33-COM-HRC}.  Then $I'$ is an instance of \hrcdual\ with dual market $(R_1',A_1')$ and $(R_2',A_2')$ if we define
\[
\begin{array}{rl}
R_1'=R_1\cup U_1, & \mbox{where }U_1 = \{ u^s_{4i+r} : s \in \{1, 3, 5\} \}\\
R_2'=R_2\cup U_2, & \mbox{where }U_2 = \{ u^s_{4i+r} : s \in \{2, 4\} \}\\
A_1'=A_1\cup H_1', & \mbox{where }H_1' = \{ h^s_{4i+r} : 0\leq i\leq n-1\wedge 0 \leq r\leq 3\wedge s \in \{ 1,3 \} \}\\
A_2'=A_2\cup H_2', & \mbox{where }H_2 = \{ h^s_{4i+r} : 0\leq i\leq n-1\wedge 0 \leq r\leq 3\wedge s \in \{ 2,4 \} \}.
\end{array}
\]

Let $f$ be a satisfying truth assignment of $B$. Define a matching $M$ in $I$ as in the proof of Theorem \ref{33-COM-HRC}. Define a matching $M^{\prime }$ in $I'$ as follows:
$$M^{\prime } = M \cup \{ (u^5_{4i+r}, h^1_{4i+r}),(u^3_{4i+r}, h^3_{4i+r}), (u^4_{4i+r}, h^2_{4i+r}) : 0\leq i \leq n-1\wedge 0\leq r\leq 3 \}.$$

As shown in the proof of Theorem \ref{33-COM-HRC}, no agent outside of $Y\cup U\cup H$ can block $M^{\prime }$ in $I$. 
For each $i$ ($0\leq i\leq n-1$) and $r$ ($0\leq r\leq 3$), since $h_{4i+r}^1$ has its first-choice doctor in $M'$, the couple $(u^1_{4i+r}, u^2_{4i+r})$ cannot block $M'$ in $I'$.  For a similar reason, the couple $(u^3_{4i+r}, u^4_{4i+r})$ cannot block $M'$ in $I$.  Finally, $M'(y_{4i+r})\in X$, hence $y_{4i+r}$ cannot block $M'$ in $I'$.  Thus, $M'$ is stable in $I'$.
\medskip

Conversely, suppose that $M^{\prime }$ is a stable matching in $I'$. By Lemma \ref{y4i+rmustbematched}, every $y_{4i+r}\in Y$ is matched in $M^{\prime }$ to a doctor in $X$. By Lemmas \ref{Z-OR-PUNIONT} and \ref{XUNIONKORYUNIONL}, every doctor in $X\cup K\cup P\cup S\cup T$ is matched in $M^{\prime}$.  Now let $M=M'\setminus (U\times H)$.
Then $M$ is a complete matching in $I$.  Moreover $M$ must be stable in $I$, otherwise we obtain a blocking pair of $M'$ in $I'$, a contradiction.
By the proof of Theorem \ref{33-COM-HRC} we can obtain a satisfying truth assignment for $B$ from $M$.

Finally, the master lists shown in Figures \ref{masterlistrescouples},  \ref{masterlistdoctors} and \ref{masterlisthospitals} indicate that the preference list of each single doctor, couple and hospital may be derived from a master list of hospital pairs, doctors and hospitals, respectively (for brevity, in the master list in each of these figures, the symbol $\succ$ is omitted from between consecutive entries). 
\end{proof}

\begin{landscape}
\begin{figure}
\[
\begin{array}{c}
L^1_i :
(y_{4i}, l_{4i}) ~
(c(x_{4i}), l_{4i+1}) ~
(y_{4i+1}, l_{4i+1}) ~
(c(x_{4i+1}), l_{4i+2}) ~ 
(c(x_{4i+3}), l_{4i+3}) ~
(y_{4i+3}, l_{4i+3}) ~
(c(x_{4i+2}), l_{4i+2}) ~
(y_{4i+2}, l_{4i+2}) ~~~~ (0 \leq i \leq n-1)\\
\\
L^2_{i,r} :
(h^1_{4i+r}, h^2_{4i+r}) ~
(h^1_{4i+r}, h^4_{4i+r}) ~
(h^3_{4i+r}, h^2_{4i+r}) ~~~~ (0 \leq i \leq n-1, 0 \leq r \leq 3)

~~~~~~~~~

L^3_j : 
(z^1_j, z^2_j) ~
(c^1_j, z^3_j) ~
(c^2_j, z^4_j) ~
(c^3_j, z^5_j) ~~~~ (1 \leq j \leq m)
\\
\\
\mbox{Master List} : L^1_0 ~ L^1_1 ~ \dots ~ L^1_{n-1}~ L^2_{0,0} ~ L^2_{0,1} ~ L^2_{0,2} ~ L^2_{0,3} ~ L^2_{1,0} ~ L^2_{1,1} ~ L^2_{1,2} ~ L^2_{1,3} ~ \ldots ~ L^2_{(n-1),0} ~ L^2_{(n-1),1} ~ L^2_{(n-1),2} ~ L^2_{(n-1),3} ~  L^3_1 ~ L^3_2 ~ \dots ~ L^3_m

\end{array}
\]
\caption{Master list of preferences for couples in the \hrcdual\ instance $I'$.}
\label{masterlistrescouples}
\end{figure}

\begin{figure}
\[
\begin{array}{c}
L^4_j :
c^1_j ~
c^2_j ~
c^3_j ~
z^3_j ~
z^4_j ~
z^5_j

~~~~ (1 \leq j \leq m)

~~~~~~~~

L^5_{i,r} :

y_{4i+r} ~
h^1_{4i+r} ~~~~ (0 \leq i \leq n-1, 0 \leq r \leq 3)
\\
\\
\mbox{Master List} : L^4_1 ~ L^4_2 ~ \dots ~ L^4_m ~ L^5_{0,0} ~ L^5_{0,1} ~ L^5_{0,2} ~ L^5_{0,3} ~ L^5_{1,0} ~ L^5_{1,1} ~ L^5_{1,2} ~ L^5_{1,3} ~ \ldots ~ L^5_{(n-1),0} ~ L^5_{(n-1),1} ~ L^5_{(n-1),2} ~ L^5_{(n-1),3}
\end{array}
\]
\caption{Master list of preferences for single doctors in the \hrcdual\ instance $I'$.}
\label{masterlistdoctors}
\end{figure}

\begin{figure}
\[
\begin{array}{c}
L^6_i : 
x_{4i+1} ~
x_{4i} ~
x_{4i+2} ~
x_{4i+3} ~
k_{4i+3} ~
k_{4i} ~
k_{4i+2} ~
k_{4i+1} 
~~~~ (0 \leq i \leq n-1)
\\
\\
L^7_j : 
p^1_j ~
p^2_j ~
p^3_j ~
p^6_j ~
p^5_j ~
p^4_j 
~~~~ (1 \leq j \leq m)
~~~~~~~~
L^8 : s_1 ~ s_2 ~\ldots ~ s_m ~ t_1 ~ t_2 ~ \ldots ~ t_m 
\\
\\

L^9_{i,r} :
u^5_{4i+r} ~
u^1_{4i+r} ~
u^3_{4i+r} ~
u^4_{4i+r} ~
u^2_{4i+r} ~~~~ (0 \leq i \leq n-1, 0 \leq r \leq 3)
\\
\\
\mbox{Master List} : L^7_1 ~ L^7_2 ~ \dots ~ L^7_m  ~ L^6_0 ~ L^6_1 ~ \dots ~ L^6_{n-1} ~ L^8 ~ L^9_{0,0} ~ L^9_{0,1} ~ L^9_{0,2} ~ L^9_{0,3} ~ L^9_{1,0} ~ L^9_{1,1} ~ L^9_{1,2} ~ L^9_{1,3} ~ \ldots ~ L^9_{(n-1),0} ~ L^9_{(n-1),1} ~ L^9_{(n-1),2} ~ L^9_{(n-1),3}
\end{array}
\]
\caption{Master list of preferences for hospitals in the \hrcdual\ instance $I'$.}
\label{masterlisthospitals}
\end{figure}
\end{landscape}

\subsection{Matchings with the minimum number of blocking pairs}
\label{sec:inapprox}
In this section we show that {\sc min bp hrc} is NP-hard and very difficult to approximate, even in the unit hospital capacity case and where each couple finds acceptable only one hospital pair.

\begin{theorem}
\label{thm:min-bp-approx}
{\sc min bp hrc} is not approximable within $m^{1-\varepsilon}$, for any $\varepsilon>0$, where $m$ is the total length of the hospitals' preference lists, unless \emph{P=NP}.  The result holds even if there are no single doctors, each couple is of type-a and finds acceptable only one pair of hospitals, and each hospital has capacity 1.
\end{theorem}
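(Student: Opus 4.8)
The plan is a gap-producing reduction from \sat\ (which is NP-complete~\cite{BKS03}), amplified internally so that the resulting gap is polynomial in the total hospital preference-list length. Fix $\varepsilon>0$. Given a formula $\phi$, I would construct an \hrc\ instance $J=J(\phi,t)$ with no single doctors, unit hospital capacities, and every couple of type-a with a single acceptable pair (such couples are automatically \resp\ and \subcomp), where $t$ is a polynomial in $|\phi|$ to be fixed at the end. The target is: (i) if $\phi$ is satisfiable then $bp(J)=1$; (ii) if $\phi$ is unsatisfiable then $bp(J)\ge t+1$; and (iii) the total hospital preference-list length is $m=\Theta(t\cdot\mathrm{poly}(|\phi|))$. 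Note that $bp(J)=0$ must be impossible even when $\phi$ is satisfiable: by Corollary~\ref{cor:hrclength1} the existence of a stable matching in an \hrc\ instance in which each couple applies to one pair is decidable in polynomial time, so a reduction in which $bp(J)=0$ exactly when $\phi$ is satisfiable would give $\mathrm{P}=\mathrm{NP}$; for the same reason the sub-gadget producing the unavoidable blocking pair cannot be a separate connected component of $J$. Hence every matching of $J$ must contain a blocking pair, and the enforcer has to be interleaved with the variable/clause machinery.

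Concretely, I would design variable and clause gadgets in the spirit of the proofs of Theorems~\ref{33-COM-HRC} and~\ref{33comdmml}, but built entirely from one-pair type-a couples, and then take $t$ parallel copies of each gadget so that every local defect is paid for $t$ times: any matching in which a variable gadget is not set to a single consistent truth value contains at least $t$ blocking pairs inside that gadget, and any matching in which a clause gadget has no literal set to true contains at least $t$ blocking pairs, whereas a matching consistent with a satisfying assignment leaves all these gadgets blocking-pair-free. I would then splice in one fixed constant-size sub-gadget -- essentially the three-hospital, three-couple cyclic configuration that, via the standard correspondence with \textsc{Stable Roommates}, admits no stable matching but has minimum blocking-pair number exactly one -- wiring it to an always-clean part of the variable machinery so that in every matching of $J$ it forces exactly one extra blocking pair that cannot be removed by rearranging the rest. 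Contracting each couple to its single pair and checking the \hrc\ blocking conditions case by case, one verifies (i)--(iii): for a satisfying assignment the only blocking pair is the one forced by the enforcer; and if $\phi$ is unsatisfiable then every matching either leaves some variable gadget inconsistent, or reads off a consistent assignment falsifying some clause, and in either case contributes at least $t$ blocking pairs that are disjoint from, and on top of, the enforcer's one.

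Finally, pick $t$ to be a fixed polynomial in $|\phi|$ (depending on $\varepsilon$) large enough that $m^{1-\varepsilon}\le t$ for all sufficiently large instances; this is possible because $m=\Theta(t\cdot\mathrm{poly}(|\phi|))$, so it suffices that the exponent of $t$ exceed $(1-\varepsilon)/\varepsilon$ times the exponent of the $\mathrm{poly}(|\phi|)$ factor. Suppose there were a polynomial-time algorithm approximating {\sc min bp hrc} within $m^{1-\varepsilon}$. Run it on $J$: it returns a matching with at most $m^{1-\varepsilon}\cdot bp(J)$ blocking pairs, which is at most $m^{1-\varepsilon}\le t$ when $\phi$ is satisfiable, while when $\phi$ is unsatisfiable every matching -- in particular the returned one -- has at least $t+1$ blocking pairs. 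Hence testing whether the returned matching has at most $t$ blocking pairs decides \sat\ in polynomial time, so $\mathrm{P}=\mathrm{NP}$. As $J$ has no single doctors, unit hospital capacities, and only type-a couples each with a single acceptable pair, the theorem follows.

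The main obstacle is the gadget engineering of the second paragraph. One has to realise the $t$-fold amplification with unit-capacity hospitals only, so the redundancy must be carried by many parallel hospitals and couples rather than by large quotas, while keeping every couple type-a with one acceptable pair; and -- the delicate point -- one must fuse the ``exactly one unavoidable blocking pair'' enforcer into the variable/clause gadgets so that it genuinely cannot be split off (a separable enforcer would let one decide satisfiability in polynomial time via Corollary~\ref{cor:hrclength1}), while ensuring it adds no further blocking pair when $\phi$ is satisfiable and does not overlap the $\ge t$ blocking pairs arising when $\phi$ is unsatisfiable. Ruling out spurious blocking pairs between distinct gadgets, across all configurations, is the bulk of the verification.
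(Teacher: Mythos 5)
Your overall framework is sound, and your opening observation is both correct and important: since every couple in the target instance applies to a single pair, Corollary~\ref{cor:hrclength1} makes ``$bp(J)=0$'' polynomial-time decidable, so the yes-instances must carry a nonzero baseline of unavoidable blocking pairs, and the final amplification arithmetic (choosing $t$ polynomially large so that $m^{1-\varepsilon}\le t$) is fine. The genuine gap is that the entire substance of the proof --- the variable and clause gadgets built from one-pair type-a couples over unit-capacity hospitals, the $t$-fold amplification, and above all the ``fused enforcer'' --- is left unconstructed, and you flag it yourself as the main obstacle. This is not a deferrable detail. You need an enforcer that (a) cannot be split off as a separate component (for exactly the reason you give), (b) contributes \emph{exactly one} blocking pair in some matching of every satisfiable instance while the surrounding machinery contributes none, and (c) in the unsatisfiable case neither absorbs nor is absorbed by the $\ge t$ blocking pairs you need elsewhere. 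Nothing in the proposal certifies that such a fusion exists, and the case analysis ruling out spurious cross-gadget blocking pairs --- which you correctly identify as the bulk of the work --- is entirely absent.

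The paper sidesteps the hardest part of your plan by reducing not from \sat\ but from the restricted {\sc com smti} of \cite{MM10} (men's lists strict of length at most 3; each woman's list either strict of length at most 3 or a single tie of length 2). The unavoidable baseline then comes for free: each length-2 tie is broken arbitrarily in the corresponding hospital's list, contributing at most one blocking pair per tied woman, so a complete stable matching of $I$ yields a matching of $J$ with at most $n_t$ blocking pairs --- no bespoke single-blocking-pair enforcer is needed, because the gap only has to be $n_t$ versus $B=n_t m_I^C+1$ rather than $1$ versus $t+1$. Amplification is achieved by replicating each couple on an edge to a strictly-ordered woman $B$ times in the hospitals' lists, and by attaching to each woman-hospital $h_w$ a family of $3B$ ``triangle'' couples over three hospitals $h_w,h_w',h_w''$ whose cyclic structure punishes any matching leaving $h_w$ unmatched or matched below a threshold with at least $B$ blocking pairs. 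If you want to complete your argument, the most direct repair is to adopt this intermediate problem: the tie-breaking supplies exactly the nonzero baseline you identified as necessary, and the verification reduces to a manageable case analysis over the triangle gadgets and the replicated couples.
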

\begin{proof}  
Let $\varepsilon>0$ be given.  We reduce from {\sc com smti}, which is shown to be NP-complete in the Appendix of \cite{MM10} even if every man's list is strictly ordered and of length at most 3, and every woman's list is either strictly ordered and of length at most 3, or is a tie of length 2.  Hence, let $I$ be an instance of this restriction of {\sc com smti}.

Let $U$ be the set of men in $I$ and let $W$ be the set of women in $I$.  Let $G=(U\cup W,E)$ denote the underlying bipartite graph of $I$, and let $m_I$ denote $|E|$.  Let $W^s$ denote the set of women in $W$, whose preference lists are strictly ordered in $I$, and let $W^t=W\setminus W^s$.  Then, every woman in $W^t$ has a preference list in $I$ that is a tie of length 2.  Let $n_t=|W^t|$, $C=\lceil \frac{3}{\varepsilon}\rceil$ and $B=n_t m_I^C+1$.

We construct an instance $J$ of \hrc\ on the basis of $I$ as follows.  Let $H$ be the set of hospitals in $J$, where $H=H_U\cup H_W$, $H_U=\{h_u : u\in U\}$ and $H_W=\{h_w : w\in W\}$.  Let $H_W^s=\{h_w : w\in W^s\}$, and let $H_W^t=H_W\setminus H_W^s$.

For each person $p\in U\cup W$, let $\succeq_p$ denote $p$'s preference list in $I$.  Each hospital $h_p\in H$ is initially given the preference list $\succeq_p$ in $J$.  Note that if $p\in W^t$, then the tie of length 2 is broken arbitrarily in $\succeq_p$, so that we obtain a strictly ordered ranking $\succ_p$ for every hospital $h_p$ in $J$.

Now let $e=(u,w)\in E$.  Suppose firstly that $w\in W^t$.  Create the couple $C^e=(c_u^e,c_w^e)$ who find acceptable the hospital pair $(h_u,h_w)$.  Replace $u$ by $c_u^e$ in $\succ_{h_w}$, and replace $w$ by $c_w^e$ in $\succ_{h_u}$.  Now suppose that $w\in W^s$.  Create the couple $C^{e,k}=(c_u^{e,k},c_w^{e,k})$ who find acceptable the hospital pair $(h_u,h_w)$, for each $k$ ($1\leq k\leq B$).  Replace $u$ by $c_u^{e,1}\succ_{h_w}\dots \succ_{h_w} c_u^{e,B}$ in $\succ_{h_w}$, and replace $w$ by $c_w^{e,1}\succ_{h_u}\dots \succ_{h_u} c_w^{e,B}$ in $\succ_{h_u}$.  Let $P(h_w)$ denote $\succ_{h_w}$ at this point.

Next, for each hospital $h_w\in H_W$, add two additional hospitals $h_w'$ and $h_w''$, and add $3B$ couples $(c_{w,1}^k,c_{w,2}^k)$, $(c_{w,3}^k,c_{w,4}^k)$ and $(c_{w,5}^k,c_{w,6}^k)$ ($1\leq k\leq B$).  The preference lists of $h_w$ and the newly added couples and hospitals are shown in Figure \ref{fig:hrc-prefs}; in particular, $\succ_{h_w}$ is extended with the addition of $c_{w,1}^k$ and $c_{w_6}^k$ ($1\leq k\leq B$).
\begin{figure}[t!]
\[
\begin{array}{rll}
(c_{w,1}^k,c_{w,2}^k) : & (h_w,h_w')  & \hspace{-2cm} (h_w\in H_W, 1\leq k\leq B) \vspace{5pt} \\
(c_{w,3}^k,c_{w,4}^k) : & (h_w',h_w'') & \hspace{-2cm} (h_w\in H_W, 1\leq k\leq B)  \vspace{5pt} \\
(c_{w,5}^k,c_{w,6}^k) : & (h_w'',h_w) & \hspace{-2cm} (h_w\in H_W, 1\leq k\leq B)  \vspace{5pt} \\
\\
h_w : & P(h_w)\succ c_{w,1}^1\succ\dots\succ c_{w,1}^B\succ c_{w,6}^1\succ\dots\succ c_{w,6}^B ~~~~ & (h_w\in H_W) \vspace{5pt} \\
h_w' : & c_{w,3}^1\succ\dots\succ c_{w,3}^B\succ c_{w,2}^1\succ\dots\succ c_{w,2}^B & (h_w\in H_W) \vspace{5pt} \\
h_w'' : & c_{w,5}^1\succ\dots \succ c_{w,5}^B\succ c_{w,4}^1\succ\dots\succ c_{w,4}^B & (h_w\in H_W) \vspace{5pt}
\end{array}
\]
\caption{Preference lists in the constructed instance of {\sc min bp hrc}.}
\label{fig:hrc-prefs}
\end{figure}

Every hospital in $J$ has capacity 1.  It is easy to see that there are no single doctors in $J$, and each couple is of type-a and finds acceptable only one pair of hospitals.  We claim that (i) if $I$ admits a complete stable matching then $J$ admits a matching with at most $n_t$ blocking pairs, and (ii) if $I$ admits no complete stable matching then any matching in $J$ has at least $B$ blocking pairs.

To show (i), suppose that $M$ is a complete stable matching in $I$.  Construct a 
 matching $M'$ in $J$ as follows.  For each pair $(u,w)$ in $M'$, representing the underlying edge $e$ in $G$, suppose firstly that $w\in W^t$.  Match the couple $(c_u^e,c_w^e)$ with the hospital pair $(h_u,h_w)$ in $M'$.  Now suppose that $w\in W^s$.  Match the couple $(c_u^{e,1},c_w^{e,1})$ with the hospital pair $(h_u,h_w)$ in $M'$. Finally, for each $h_w\in H_W$ and $k$ ($1\leq k\leq B$), match the couple $(c_{w,3}^k,c_{w,4}^k)$ with the hospital pair $(h_w',h_w'')$.  Each hospital is then full in $M'$, since $M$ is a complete stable matching in $I$.

For each $h_w\in H_W$, and $k$ ($1\leq k\leq B$), it may be verified that $(c_{w,1}^k,c_{w,2}^k)$ cannot be involved in blocking $M'$ in $J$ , since $h_w$ is assigned to a better doctor than $c_{i,1}^1$ in $M'$.  Similarly $(c_{w,5}^k,c_{w,6}^k)$ cannot be involved in blocking $M'$ in $J$ for the same reason. Since $M$ is a complete stable matching in $I$, the only possible blocking pairs of $M'$ in $J$ involve couples of the form $(c_u^e,c_w^e)$, where $w\in W^t$.  Since each woman in $W^t$ has a preference list of length 2 in $J$, there can be at most one blocking pair for each $w\in W^t$, and hence $|bp(J,M')|\leq n_t$.

To show (ii), suppose that $I$ admits no complete stable matching, and let $M'$ be any matching in $J$.  Suppose firstly that some hospital $h_w$ is unassigned in $M'$ ($h_w\in H_W$).  Then $(c_{w,5}^k,c_{w,6}^k)$ is unassigned in $M'$ for all $k$ ($1\leq k\leq B$).  Hence, $(c_{w,5}^k,c_{w,6}^k)$ blocks $M'$ with $(h_w'',h_w)$ for all $k$ ($1\leq k\leq B$), so $|bp(J,M')|\geq B$.  Now suppose that $(c_{w,6}^k,h_w)\in M'$ for some $h_w\in H_W$ and $k$ ($1\leq k\leq B)$.  Then, $(c_{w,5}^k,h_w'')\in M'$.  Hence, $(c_{w,1}^{k'},c_{w,2}^{k'})$ and $(c_{w,3}^{k'},c_{w,4}^{k'})$ are unassigned in $M'$ for all $k'$ ($1\leq k'\leq B)$ and also $h_w'$ is unassigned in $M'$.  Therefore, $(c_{w,1}^{k'},c_{w,2}^{k'})$ blocks $M'$ with $(h_w,h_w')$ for all $k'$ ($1\leq k'\leq B$), so $|bp(J,M')|\geq B$.  Finally, suppose that $(c_{w,1}^k,h_w)\in M'$ for some $h_w\in H_W$ and $k$ ($1\leq k\leq B)$.  Then, $(c_{w,2}^k,h_w')\in M'$.  Hence, $(c_{w,3}^{k'},c_{w,4}^{k'})$ and $(c_{w,5}^{k'},c_{w,6}^{k'})$ are unassigned in $M'$ for all $k'$ ($1\leq k'\leq B)$ and also $h_w''$ is unassigned in $M'$. This implies that $(c_{w,3}^{k'},c_{w,4}^{k'})$ blocks $M'$ with $(h_w',h_w'')$ for all $k'$ ($1\leq k'\leq B$), so $|bp(J,M')|\geq B$.   

We thus assume that each $h_w\in H_W$ has an assignee better than $c_{w,1}^1$ in $M'$.  Let $M$ be the matching in $I$ obtained by replacing each assignment of a couple $(c_u^e,c_w^e)$ with the pair $(u,w)$ in $M$, and by replacing each assignment of a couple $(c_u^{e,k},c_w^{e,k})$ with the pair $(u,w)$ in $M$.  It is easy to see that $M$ is a matching in $I$, since each of $(c_u^e,c_w^e)$ and $(c_u^{e,k},c_w^{e,k})$ finds acceptable only $(h_u,h_w)$, and each hospital has capacity 1 in $J$.  Moreover, $M$ is a complete matching in $I$, since in $M'$, each hospital $h_w\in H_W$ has an assignee better than $c_{w,1}^1$.  Hence by assumption, $M$ is not stable in $I$, so $M$ admits a blocking pair $(u,w)$ in $I$.

Let $e=(u,w)\in E$.  Since the preference list in $I$ of each woman in $W^t$ is a tie of length 2, and $M$ is complete, $w\in W^s$.  Since $(u,w)\notin M$, $(c_u^{e,k},c_w^{e,k})$ is unassigned in $M'$ for each $k$ ($1\leq k\leq B)$.  As $(u,w)$ blocks $M$ in $I$, $w\succ_u M(u)$ and $u\succ_w M(w)$.  Hence by construction, in $J$, $c_w^{e,k}\succ_{h_u} M(h_u)$ and $c_u^{e,k}\succ_{h_w} M(h_w)$.  It follows that $|bp(J,M')|\geq B$.

We have established that if $I$ admits a complete stable matching then $J$ admits a matching with at most $n_t$ blocking pairs.  An $m_I^C$-approximation algorithm for {\sc min bp hrc} will thus returns a matching in $J$ with at most $n_t m_I^C$ blocking pairs.  On the other hand if $I$ admits no complete stable matching then any matching in $J$ has at least $B=n_t m_I^C+1$ blocking pairs, so any approximation algorithm for {\sc min bp hrc} will return a matching in $J$ with at least $n_t m_I^C +1$ blocking pairs.

Hence, the existence of an $m_I^C$-approximation algorithm for {\sc min bp hrc} implies a polynomial-time algorithm for {\sc com smti} in a special case that is NP-complete, a contradiction unless P=NP.  Let $m$ denote the total length of the hospitals' preference lists in $J$.  We claim that $m_I^C\geq m^{1-\varepsilon}$, and then the proof is complete.

We firstly obtain the following upper bound for $m$:
\begin{eqnarray}
m\leq Bm_I + 6B|W|\leq 7Bm_I = 7m_I(n_t m_I^C + 1)\leq 7m_I^{C+2},
\label{ineq1}
\end{eqnarray}
where the last inequality follows since $n_t < m_I$. 
 We lose no generality by assuming that $m_I\geq 7$, in which case we can easily obtain the following lower bound for $m$:
\begin{eqnarray}
m\geq 6B|W|\geq B=n_t m_I^C+1\geq m_I^C\geq 7^C.
\label{ineq2}
\end{eqnarray}
Inequalities \ref{ineq1} and \ref{ineq2} imply that
\begin{eqnarray}
m_I^C & \geq & 7^{-\frac{C}{C+2}} m^\frac{C}{C+2}\geq m^{\frac{C-1}{C+2}}=m^{1-\frac{3}{C+2}}\ge m^{1-\varepsilon}
\end{eqnarray}
where the final inequality holds since $C+2\geq C\geq \frac{3}{\varepsilon}$.
\end{proof}

\section{Conclusions and open problems}
\label{sec:conc}
In this paper, we have presented a range of novel polynomial-time solvability and NP-hardness results for variants of \hrc, with a heavy focus on the restriction that the couples' preferences are \resp\ and \subcomp.  The results in this paper contribute to building knowledge about the frontier between polynomial-time solvable and NP-hard variants.

For example, whilst Theorem \ref{thm:respNPc} showed that \hrc\ is NP-hard for general \resp\ and \subcomp\ preferences, Theorem \ref{thm:12abc} showed that if the couples additionally belong to one of several types, \hrc\ is solvable in polynomial time.  This provides one type of boundary, and it remains open to identify other types of couples that could give rise to additional polynomial-time solvable cases.

Whilst Corollary \ref{33comdmml} showed that \hrcdual\ is NP-hard even if every preference list is of length 3 and derived from a master list of single doctors, couples and hospitals, Corollary \ref{cor:hrcdual} showed that \hrcdual\ is solvable in polynomial time for \resp\ and \subcomp\ preferences.

Finally, whilst \cite{biro2014hospitals} showed that \hrc\ is NP-hard even if every couple's preference list is of length at most 2, Corollary \ref{cor:hrclength1} showed that the problem is solvable in polynomial time if every couple's preference list is of length 1. 

An important future direction is to continue to expand our knowledge of the frontier between tractable and intractable variants of \hrc.  A key message from this paper, however, is that \hrc\ \emph{can} be tractable in a range of different scenarios, and perhaps much more so than might have been predicted previously, given the inherent intractability of the problem in general and in so many restricted settings.

%

\section*{Acknowledgements}
Gergely Csáji was supported by the Hungarian Academy of Sciences, Momentum grant number LP2021-2/2021, by the Hungarian Scientific Research Fund, OTKA, grant number K143858 and by the Ministry of Culture and Innovation of Hungary from the National Research, Development and Innovation fund, financed under the KDP-2023 funding scheme (grant number C2258525). David Manlove was supported by the Engineering and Physical Sciences Research Council, grant number EP/P028306/1.  Iain McBride was supported by a Scottish Informatics and Computer Science Alliance Prize Studentship.  James Trimble was supported by the Engineering and Physical Sciences Research Council, Doctoral Training Partnership number EP/N509668/1.

We would like to thank the anonymous reviewers of the preliminary version of this paper submitted to IJCAI 2024 for their valuable comments, which helped to improve the presentation of this paper.

For the purpose of open access, the authors have applied a Creative Commons Attribution (CC BY) licence to any Author Accepted Manuscript version arising from this submission.
\bibliographystyle{plain}
\bibliography{matching}
\end{document}